\newtheorem{theorem}{Theorem}
\newtheorem{corollary}{Corollary}[theorem]
\newtheorem{lemma}{Lemma}
\begin{document}
\title{Identifying nonequilibrium degrees of freedom in high-dimensional stochastic systems}

\author{Catherine Ji}%
\thanks{Equal contribution}
\affiliation{Department of Physics, Princeton University
}
\affiliation{Department of Computer Science, Princeton University}

\author{Ravin Raj}%
\thanks{Equal contribution}
\affiliation{Department of Physics, Princeton University
}
\author{Benjamin Eysenbach}
\email{eysenbach@princeton.edu}
 \affiliation{Department of Computer Science, Princeton University}
\author{Gautam Reddy}
 \email{greddy@princeton.edu}
\affiliation{Department of Physics, Princeton University
}%

\date{\today}

\begin{abstract}
Any coarse-grained description of a nonequilibrium system should faithfully represent its latent irreversible degrees of freedom. However, standard dimensionality reduction methods typically prioritize accurate reconstruction over physical relevance. Here, we introduce a model-free approach to identify irreversible degrees of freedom in stochastic systems that are in a nonequilibrium steady state. Our method leverages the insight that a black-box classifier, trained to differentiate between forward and time-reversed trajectories, implicitly estimates the local entropy production rate. By parameterizing this classifier as a quadratic form of learned state representations, we obtain nonlinear embeddings of high-dimensional state-space dynamics, which we term \emph{Latent Embeddings of Nonequilibrium Systems (LENS)}. LENS effectively identifies low-dimensional irreversible flows and provides a scalable, learning-based strategy for estimating entropy production rates directly from high-dimensional time series data. 

\end{abstract}
\maketitle

\textit{Introduction}---
Internal reactions and external forces drive active matter systems out of thermodynamic equilibrium, creating irreversible state space flows which describe the extent to which detailed balance is broken at the microscopic level~\cite{Battle2016, Gladrow2017, Lynn2022a}. In biological systems, these flows are typically driven by internal enzymatic processes and the action of molecular machines~\cite{Seifert2012,Qian2016,Gnesotto2018,Mura2018,Tan2021}. Working directly with detailed microscopic descriptions, however, quickly becomes intractable as the number of degrees of freedom in a system is increased, so any reasonable macroscopic description of a nonequilibrium system must appropriately discard non-essential information. Thus, a central problem of faithfully describing nonequilibrium systems lies in identifying relevant entropy-producing degrees of freedom.

Several prior works have developed methods for estimating entropy production. These methods include plug-in estimators and  compression-based algorithms, which directly estimate the densities of forward and time-reversed dynamics~\cite{Wang2005,Ziv1993,Roldan2012}. Thermodynamic uncertainty relations (TURs) enable the estimation of entropy production using fluctuations in finite-time estimates of the probability current~\cite{Barato2015,Gingrich2016,Horowitz2020,Li2019,Roldan2021}. However, the amount of data required for density or current estimation scales poorly with the dimensionality of the system.  Recent works use statistical learning methods such as dimensionality reduction~\cite{Gnesotto2020}, neural networks~\cite{Kim2020,Bae2022,Lyu2024,Vodret2024}, or score-matching~\cite{Boffi2024,Boffi2024a} to directly infer entropy production. While these estimators have been shown to scale favorably with dimensionality, in many applications it is useful to identify which features of the dynamics contribute to irreversibility.

In this Letter, we introduce \textit{Latent Embeddings of Nonequilibrium Systems} (LENS) to identify relevant irreversible degrees of freedom of a high-dimensional stochastic system in a nonequilibrium steady state (NESS). In contrast to commonly used dimensionality reduction methods (such as principal component analysis (PCA)) which prioritize reconstruction, LENS is a model-free approach that leverages contrastive representation learning~\cite{Ma2018} to directly extract entropy-producing degrees of freedom from high-dimensional time series data. 

\textit{LENS Construction}---For a Markovian system in a NESS, the entropy production rate (EPR) $\dot{S}$ 
is defined as the Kullback-Leibler (KL) divergence between the forward $p(\va{x}\rightarrow \va{x}')$ and the reverse dynamics $p(\va{x}'\rightarrow \va{x})$~\cite{Kawai2007}. Here, $\va{x}$ and $\va{x}'$ denote the states of the system at consecutive time steps $t$ and $t+1$ respectively, with $p(\va{x}\rightarrow \va{x}')$ as the probability of observing a transition from $\va{x}$ to $\va{x}'$ and vice versa for  $p(\va{x}'\rightarrow \va{x})$. This KL divergence is given by
\begin{align}
    \dot{S}&=D_{\mathrm{KL}}\big[p(\va{x}\rightarrow \va{x}')\| p(\va{x}'\rightarrow \va{x})\big] \nonumber \\
    &=\mathbb{E}_{p(\va{x}\rightarrow \va{x}')}\log\Big[\frac{p(\va{x}\rightarrow \va{x}')}{p(\va{x}' \rightarrow \va{x})}\Big]. \label{eq:epr_expval_definition}
\end{align}
To construct the LENS objective, we reformulate EPR estimation as learning a binary classifier $C_\theta(\va{x}, \va{x}')$ (parameterized by $\theta)$ that can distinguish forward paths from time-reversed paths. The classifier $C_\theta(\va{x},\va{x}')=\sigma\qty(S_\theta(\va{x},\va{x}'))$, where $\sigma(z)=1/(1+e^{-z})$ is the logistic function, represents the probability that a randomly sampled pair of adjacent states $(\va{x},\va{x}')$ originates from the forward dynamics $(\va{x}\rightarrow \va{x}')$ as opposed to the reverse dynamics $(\va{x}'\rightarrow \va{x})$, and $S_\theta(\va{x},\va{x}')$ is the corresponding logit. The classifier is trained by maximizing the binary cross-entropy objective
\begin{align}
    J(\theta)=&\,\mathbb{E}_{p(\va{x}\rightarrow \va{x}')}\qty[\log \sigma\qty(S_\theta(\va{x},\va{x}'))] \nonumber \\
    &\quad +\mathbb{E}_{p(\va{x}'\rightarrow \va{x})}\qty[\log\qty(1-\sigma\qty(S_\theta(\va{x},\va{x}')))] ,\label{eq:lens_objective_function}
\end{align}
with respect to parameters $\theta$. The expectations are computed over pairs of adjacent states sampled from a provided time series dataset. With infinite data, the ideal classifier that maximizes this objective  (obtained from setting $\delta J/\delta S = 0$) is the optimal logit $S_{\text{opt}}(\va{x},\va{x}') = \log \qty[p(\va{x}\rightarrow \va{x}')/p(\va{x}'\rightarrow \va{x})]$. Thus, a sufficiently expressive classifier can be used after convergence to estimate the EPR by averaging $S_\theta(\va{x},\va{x}')$ over randomly sampled pairs $(\va{x},\va{x}')$ from the dataset (Eq.~\ref{eq:epr_expval_definition}). Maximizing the contrastive objective in Eq.~\eqref{eq:lens_objective_function} is equivalent (for continuous processes) to maximizing a lower bound on an $f$-divergence (see App.~\ref{app:lens_neep_equivalence}), which relates LENS to an alternative learning-based approach for estimating the EPR~\cite{Kim2020}. 

Towards our goal of identifying irreversible degrees of freedom, we parameterize the logit $S_\theta(\va{x},\va{x}')$ (Fig.~\ref{fig:lens_architecture}) with an $M$-dimensional learned representation $\va{\phi}_\theta\qty(\va{x})$, which is the output of a feedforward neural network. In particular, our LENS logit is parameterized as 
\begin{align}
    S_\theta(\va{x},\va{x}') &=\va{\phi}_\theta(\va{x}')^\intercal\vb{A}_\theta\va{\phi}_\theta(\va{x}) +\frac{1}{2}\va{\phi}_\theta(\va{x}')^\intercal \vb{B}_\theta\va{\phi}_\theta(\va{x}') \nonumber \\
    &\qquad -\frac{1}{2}\va{\phi}_\theta(\va{x})^\intercal \vb{B}_\theta\va{\phi}_\theta(\va{x}). \label{eq:lens_logit}
\end{align}
The learned matrix $\vb{A}_\theta$ has a block-diagonal form where each block is a $2\times2$ skew-symmetric matrix.
The block-diagonal form of $\vb{A}_{\theta}$ encourages LENS, without loss in generality, to learn the same representations across realizations of the stochastic system up to permutations of $2\times 2$ subspaces and rotations within each subspace (except in situations where the eigenvalues of $\vb{A}_{\theta}$ are degenerate).  $\vb{B}_\theta$ is symmetric, as any skew-symmetric component does not contribute to the logit (see App.~\ref{subsec:representation_universality}). 
Using deep nonlinear networks to parameterize $\va{\phi}_\theta$ allows for identifying features of arbitrary nonlinear stochastic systems that best approximate the EPR via $S_\theta(\va{x},\va{x}')$.

This particular choice of $S_\theta(\va{x},\va{x}')$ is motivated by two considerations. First, the form of $S_{\mathrm{opt}}(\va{x},\va{x}')$ requires that it is antisymmetric in its arguments, which to the lowest nontrivial order can be expressed as an antisymmetric quadratic form (with skew-symmetric $\vb{A}_{\theta}$) and the difference of two symmetric quadratic forms (with symmetric $\vb{B}_{\theta}$). This antisymmetric quadratic form with nonlinear $\va{\phi}_{\theta}$ is universal: for a sufficiently large $M$ there exists a representation that approximates $S_{\text{opt}}(\va{x}, \va{x}')$ arbitrarily well (see App.~\ref{subsec:representation_universality}). Second, for linear stochastic systems, $S_{\text{opt}}(\va{x}, \va{x}')$ takes the form in Eq. \eqref{eq:lens_logit} with linear $\va{\phi}_{\theta}$ and the representations learned by LENS have a precise physical interpretation (discussed further below). We now demonstrate how LENS can be applied to various linear and nonlinear stochastic systems. 

\begin{figure}[t]
    \includegraphics[width=\linewidth]{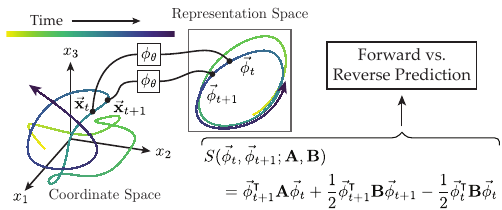} 
    \caption{\textbf{LENS Pipeline for Learning Representations}. Observations $\va{x}$ of high-dimensional stochastic dynamical systems are encoded using a learned function $\va{\phi}_\theta$ that provides a low-dimensional description of the dynamics. The encoded observations are used to compute a logit $S(\va{\phi}_t,\va{\phi}_{t+1};\vb{A},\vb{B})$, where $\va{\phi}_t\equiv\va{\phi}(\va{x}_t),\va{\phi}_{t+1}\equiv\va{\phi}(\va{x}_{t+1})$. This produces a prediction probability for whether the pair of observations $(\va{x}_t,\va{x}_{t+1})$ originates from the forward dynamics or the reverse dynamics.}
    \label{fig:lens_architecture}
\end{figure}

\textit{Linear Stochastic Dynamics}---
To begin, we consider the paradigmatic example of $N$ beads coupled by identical springs (of constant $k$) in a viscous medium~\cite{Rieder1967,Bonetto2004,Mura2018}. Each bead is in contact with a local thermal reservoir as shown in Fig.~\ref{fig:linear_dynamics}(a). Bead displacements $\va{x}_{t}$ exhibit linear stochastic dynamics according to the overdamped Langevin equation
\begin{equation}
    \label{eq:n_beads_langevin}
    \va{x}_{t+\dd t}=\va{x}_{t}+\vb{G}\va{x}_{t}\,\dd t+\vb{F}\va{\xi}_{t}\sqrt{\dd t}.
\end{equation}
The coupling matrix $\mathbf{G}$ has elements $G_{ij}=(-2\delta_{i,j} + \delta_{i,j+1} + \delta_{i,j-1}) k/\gamma$ and each bead experiences uncorrelated thermal noise with scale $F_{ij}=\sqrt{2k_BT_i/\gamma}\delta_{i,j}$, where $T_i$ is the temperature of the $i$-th bead, $\gamma$ is the viscous damping coefficient of the medium and $\va{\xi}_{t}$ is Gaussian white noise with zero mean and identity covariance.

\begin{figure*}[t]
    \includegraphics[width=\linewidth]{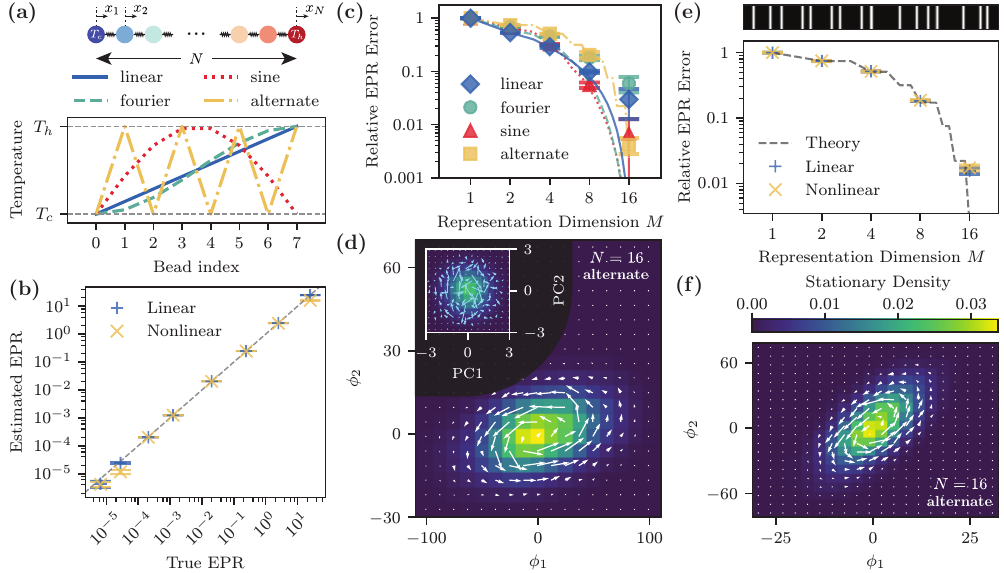} 
    \caption{\label{fig:linear_dynamics}\textbf{Learning Nonequilibrium Degrees of Freedom in Linear Stochastic Dynamical Systems}. (a) Coupled bead-spring model. (Top) $N$ beads are coupled using identical springs, with each bead coupled to a local thermal reservoir at temperature $T_i$. (Bottom) Plots of the various temperature profiles under consideration. (b) Estimated entropy production rate (EPR) for $N=2$ bead-spring model swept over $T_c/T_h$, linear and nonlinear representations of dimension $M=16$. (c) Relative (fractional) EPR error as a function of linear representation dimension $M$ for $N=16$ bead-spring model for various temperature profiles. The corresponding theoretical predictions are obtained by optimizing the reduced LENS objective~\eqref{eq:reduced_lens_objective}. (d) State space flow (white arrows) and stationary density (heatmap) shown in representation space for $M=2$ for fully-observed trajectories of the $N=16$ bead-spring model with alternate temperature profile. (Inset) State space flow (blue arrows) and stationary density (heatmap) of top 2 principal components (PCs) of the corresponding trajectory in coordinate space. (e) (Top) Representative 1D render of Gaussian-convolved bead displacements presented as input to networks. (Bottom) Relative EPR error as a function of representation dimension for Gaussian-convolved trajectories of the $N=16$ bead-spring model with alternate temperature profile, using linear and nonlinear representations. (f) State space flow (white arrows) and stationary density (heatmap) shown in representation space for $M=2$ nonlinear representations of partially-observed trajectories of the $N=16$ bead-spring model with alternate temperature profile. All stationary density heatmaps use the same colorbar. All error bars represent the standard error of the mean (s.e.m) over 5 random seed initializations.}
\end{figure*}

We first apply LENS to such a system of $N=2$ beads, and maximize the LENS objective function over a time series of $T=\num{e7}$ observations with time step $\dd t=\num{e-2}$. In this setting, LENS accurately estimates the EPR to within \SI{10}{\percent} error across 6 orders of magnitude (Fig.~\ref{fig:linear_dynamics}(b)), regardless of whether the learned representations are parameterized as linear or nonlinear (see Sec.~\ref{subsubsec:architectures} for representation architecture). 

To unpack the information contained in these linear representations, we apply LENS to high-dimensional systems of beads ($N=16$) with four different temperature profiles for different $M$ (Fig.~\ref{fig:linear_dynamics}(a),(c)). 
For $M=2$, LENS identifies a circulating flow (Fig.~\ref{fig:linear_dynamics}(d)) that explains 25\% of the total EPR. As a baseline, we compare to two-dimensional representations produced by PCA~\cite{Gnesotto2020}.
Fig.~\ref{fig:linear_dynamics}(d) shows that the identified degrees of freedom from PCA are less predictive of irreversible flows. 

We also apply LENS to rendered systems of coupled beads, where the data are generated by convolving bead displacements $\va{x}$ with a Gaussian point spread function to obtain ``blurred'' observations of $N$ linearly arranged beads (top of Fig.~\ref{fig:linear_dynamics}(e)). 
The relative EPR error is shown in Fig.~\ref{fig:linear_dynamics}(e) for various representation dimensions $M$, where we find that linear representations are able to capture the maximal possible EPR available for a rank-$M$ approximation to the linear dynamics. The benefit from learning a nonlinear approximation to the underlying linear stochastic dynamics is negligible, even when the observations are blurred with a nonlinearity. In Fig.~\ref{fig:linear_dynamics}(f), we similarly visualize the flow of nonlinear representations in the learned representational space for $M=2$, which reveals cyclical irreversible flows similar to those in Fig.~\ref{fig:linear_dynamics}(d). 

The representations learned by LENS for linear stochastic systems have a physical interpretation. To show this, we consider an $N$-dimensional linear system and $M$-dimensional linear representations $\va{\phi}_{\theta}(\va{x}) = \vb{P}\va{x}$. Maximizing the LENS objective Eq.~\eqref{eq:lens_objective_function} is equivalent to minimizing a reduced objective (see App.~\ref{app:lens_low_rank_approximation})
\begin{equation}
    \label{eq:reduced_lens_objective}
    \mathcal{V}_M(\tilde{\vb{\Lambda}})= 2\Tr(\tilde{\vb{\Lambda}}\vb{G}\vb{C}) +\Tr(\tilde{\vb{\Lambda}}^\intercal\vb{D}\tilde{\vb{\Lambda}}\vb{C})
\end{equation}
with respect to a rank $M$ matrix $\tilde{\vb{\Lambda}} = \vb{P}^T(\vb{A}_{\theta} +\vb{B}_{\theta})\vb{P}$, where $\vb{D}=\vb{F}\vb{F}^\intercal/2$ is the diffusion tensor and $\vb{C}$ is the autocorrelation matrix. Computing the stationary points of this objective, we find that LENS implicitly learns an $M$-rank approximation to the thermodynamic force matrix $\vb{\Lambda}=\vb{D}^{-1}\vb{G}+\vb{C}^{-1}$ and thereby the thermodynamic force field, $\vb{\Lambda}\va{x}$. The EPR error decreases monotonically with $M$ and converges to zero when $M=N$ (see App.~\ref{app:lens_low_rank_approximation}).

\textit{Nonlinear Stochastic Dynamics}---Next, we consider a simplified model of molecular motors traversing cytoskeletal filaments by consuming ATP~\cite{Yildiz2003,Yildiz2004}. Data obtained from imaging typically yields traces of motor locations along a one-dimensional track. In our model, a particle in a viscous medium is driven by a constant force $h$ under the influence of a periodic potential $U$, as shown in Fig.~\ref{fig:nonlinear_dynamics}(a). The particle's position $x_t$ at time $t$ is governed by
\begin{equation}
    \label{eq:nonlinear_periodic_langevin}
    x_{t+1}=x_t+\qty(h-\dv{U}{x})\dd t+\sqrt{2D \,\dd t}\,\xi_t,
\end{equation}
where the potential $U(x)=U_0\cos\qty(2\pi x/L)$ has amplitude $U_0$ and periodicity $L$. The system is nonlinear and yet admits an analytically tractable solution for the entropy production rate~\cite{Risken1996} (see App.~\ref{app:periodic_potential_analytic} for analytic result). We simulate trajectories of $T=\num{e7}$ observations with time step $\dd t=\num{e-3}$, and estimate the EPR of these trajectories using LENS with both linear and nonlinear representations. Since the periodic nature of the simulated trajectories is known \textit{a priori}, we include a sinusoidal activation function as an additional layer in our learned LENS encoder.

\begin{figure}[t]
    \includegraphics[width=\linewidth]{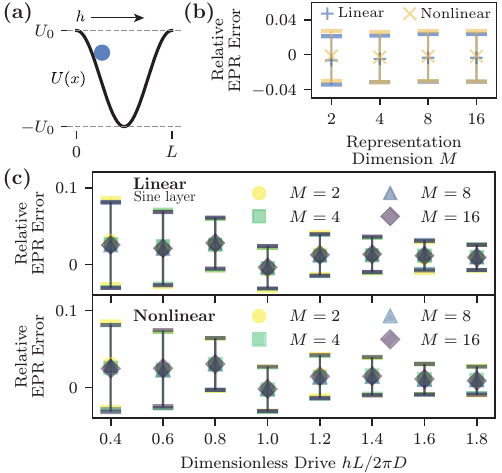} 
    \caption{\label{fig:nonlinear_dynamics}\textbf{Learning Entropy Production in Nonlinear Stochastic Dynamics}. (a) Sketch of particle in a sinusoidal potential $U(x)$ subject to a constant driving force $h$. (b) Relative (fractional) EPR error as a function of representation dimension $M$ for linear and nonlinear representations (with sine activation layer), with dimensionless drive strength $\kappa=hL/2\pi D=1$. (c) Relative EPR error as a function of dimensionless drive strength $\kappa$ for various representation dimensions $M$, for linear (with sine activation) and nonlinear representations. All error bars represent the standard error of the mean (s.e.m) over 5 random seed initializations.}
\end{figure}

LENS effectively captures the entropy production in this nonlinear dynamical system with two representation dimensions (Fig.~\ref{fig:nonlinear_dynamics}(b)) and estimates the EPR to within \SI{3}{\percent} across a range of drive strengths (Fig.~\ref{fig:nonlinear_dynamics}(c)). The relative EPR error remains relatively constant from $M=2$ to $M=16$ regardless of whether the representations are linear or nonlinear. Lower-dimensional linear representations suffice for summarizing the EPR of linear systems, while the nonlinear dynamics in this system require higher dimensional representations to capture entropy production.

\begin{figure}[t]
    \includegraphics[width=\linewidth]{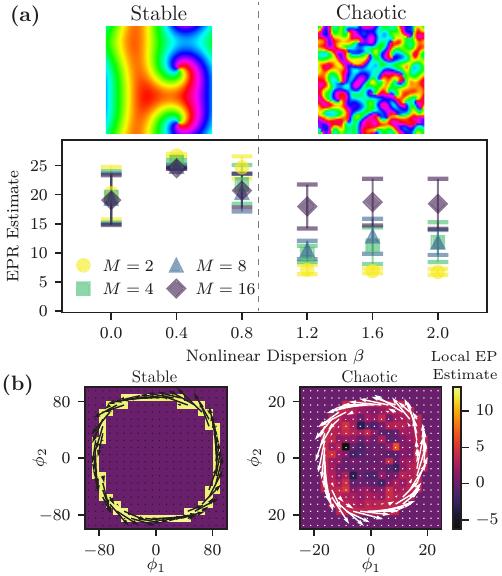} 
    \caption{\label{fig:cgl}\textbf{Learning Low-Dimensional Representations of Complex Ginzburg-Landau (CGL) Dynamics}. (a) (top) Representative images of CGL phase field in the stable (left) and chaotic (right) regimes. (bottom) Estimated EPR as a function of nonlinear dispersion $\beta$ across the defect turbulence transition (gray dotted line), for various representation dimensions $M$ with nonlinear representations $\va{\phi}_\theta$. Defect turbulence transition line from~\citet{Chate1996}. (b) Representative state space flows (arrows) and local EPR estimates (heatmaps) in representation space for $M=2$ in the stable ($\beta=0$, top) and chaotic ($\beta=2.0$, bottom) regimes. Both heatmaps share the same colorbar scale for the estimated local (single time step) EPR. All error bars represent the standard error of the mean (s.e.m) over 5 random seed initializations.}
\end{figure}

\textit{High-Dimensional Complex Dynamics}---Finally, we apply LENS to data generated from the complex Ginzburg-Landau (CGL) model~\cite{Chate1996,Aranson2002}, governed by 
\begin{align}
    \label{eq:cgl_dynamics}
    \pdv{\Tilde{A}(\va{r},t)}{t} &=\Tilde{A}(\va{r},t) + (1 + i\alpha)\grad^2 \Tilde{A}(\va{r},t) \nonumber \\
    &\qquad\qquad\qquad- (1 + i\beta)|A|^{2}\Tilde{A}(\va{r},t),
\end{align}
where $\Tilde{A}(\va{r},t)=|A(\va{r},t)|\exp(i\varphi(\va{r},t))$ is a complex field with spatiotemporal phase field $\varphi(\va{r},t)$ and amplitude $|A(\va{r},t)|$. Here, $\alpha$ and $\beta$ are the linear and nonlinear dispersions of the medium, respectively. The CGL captures statistical features of biochemical wave patterns generated by the Rho-GTP system on the membranes of starfish oocytes~\cite{tan2020topological, Li2024}.  

In two dimensions, Eq.~\ref{eq:cgl_dynamics} admits stable defect solutions for small values of $\alpha, \beta$, with defects characterized by points $\va{r}_0$ where the amplitude vanishes ($|A(\va{r}_0)|=0$) and the phase field $\varphi(\va{r})$ winds up or down by $2\pi$ around the defect point. Increasing $\beta$ leads to defect turbulence, where defects rapidly proliferate and the phase spirals are no longer easily distinguishable~\cite{Chate1996}. CGL dynamics are deterministic, and the EPR estimates diverge logarithmically since every forward step will have no corresponding observation of its reversal~\cite{Li2024}. Despite this, we investigate whether the representations are predictive of the transition from the stable to chaotic regimes.

Fixing $\alpha=-0.8$, we simulate trajectories of length $T=\num{e5}$ observations with time step $\dd t=\num{e-3}$ and estimate the EPR using LENS. The phase fields $\varphi(\va{r},t)$ for all observations are fed into a convolutional neural network (CNN)  (see App.~\ref{subsubsec:architectures} for CNN architecture). 
The EPR estimates computed at various values of $\beta$ across the defect turbulence transition are shown in Fig.~\ref{fig:cgl}(a) for different $M$. The transition is clearly visible for low $M$, where the estimated EPR drops sharply past the transition line. The larger EPR estimate for larger $M$ in the chaotic regime correlates with higher values of the LENS objective (lower binary cross-entropy loss), indicating that the larger $M$ representations are better able to extract the presence of the forward-only dynamics in the chaotic regime. This is also corroborated by the state space flows and local (single time step) EPR estimates for $M=2$ representations (Fig.~\ref{fig:cgl}(b)). 

While the flows in both regimes are largely cyclical and closed, the local EPR predicted in the stable regime is relatively constant and entirely confined to the thin ring where the representation flows lie. In contrast, the learned local EPR map in the chaotic regime is characterized by small positive local EPR along the ring of representational flows with spurious negative local EPR patches within the ring, indicating that the learned representations are unable to capture the deterministic cyclic dynamics with small $M$~\cite{suppvideo}. 

\textit{Discussion and Outlook}---Here, we propose a method for identifying relevant irreversible degrees of freedom in high-dimensional NESS systems. Importantly, our method is model-free, conceptually simple and bypasses direct density or current estimation. In line with recent work to estimate EPR~\cite{Kim2020}, our method uses black-box neural networks but differs by identifying which aspects of the dynamics are irreversible. The quadratic form of the LENS logit suggests that LENS will be particularly insightful in situations where high-dimensional, nonlinear dynamics are generated by a small set of latent, irreversible features with linear dynamics. A mathematical and empirical characterization of LENS in such scenarios is an interesting direction for future work, and could point to a meaningful mapping from nonlinear systems to linear systems with the same nonequilibrium behavior~\cite{Watter2015}. We anticipate our method will open up new avenues to analyze biological active matter systems and other complex dynamical systems with inaccessible internal states.

\textit{Code Availability}---The code for the LENS estimator is written in JAX and is available at~\cite{code}.

\textit{Acknowledgments}---The authors thank Trevor GrandPre, Junang Li, and William Bialek for insightful discussions on the subject, as well as Hugues Chat\'e and Paul Manneville for providing supplementary data for the CGL phase diagram. This material is based upon work supported by the National Science Foundation Graduate Research Fellowship Program under Grant No. DGE-2444107. Any opinions, findings, and conclusions or recommendations expressed in this material are those of the authors and do not necessarily reflect the views of the National Science Foundation. The simulations presented in this article were performed on computational resources managed and supported by Princeton Research Computing, a consortium of groups including the Princeton Institute for Computational Science and Engineering (PICSciE) and the Office of Information Technology's High Performance Computing Center and Visualization Laboratory at Princeton University. GR is partially supported by a joint research agreement between NTT Research Inc and Princeton University.

\textit{Author Contributions}---C.J. and R.R. contributed equally to this work. C.J. led project implementation and experiments; R.R. and G.R. performed theory calculations; R.R., G.R., and B.E. formulated proofs; R.R. and C.J. conducted data analysis; R.R. led paper writing. B.E. and G.R. equally advised this work. All authors designed and performed research. All authors contributed to writing and revising the paper. 

\clearpage
\twocolumngrid

\setcounter{equation}{0} \renewcommand{\theequation}{S\arabic{equation}}
\setcounter{figure}{0}   \renewcommand{\thefigure}{S\arabic{figure}}
\setcounter{table}{0}    \renewcommand{\thetable}{S\arabic{table}}
\setcounter{section}{0}  \renewcommand{\thesection}{S\arabic{section}}

\title{Supplementary Information: Identifying nonequilibrium degrees of freedom in high-dimensional stochastic systems}

\author{Catherine Ji}%
\thanks{Equal contribution}
\author{Ravin Raj}%
\thanks{Equal contribution}
\affiliation{Department of Physics, Princeton University
}
\author{Benjamin Eysenbach}
 \affiliation{Department of Computer Science, Princeton University}
\author{Gautam Reddy}%
\affiliation{Department of Physics, Princeton University
}%

\date{\today}

\maketitle
\appendix

\section*{Supplementary Information}
\section{Overview of Methods} \label{app:methods}

The simulation and optimization details presented here are organized by physical system studied in the main text. Beyond implementation details, we include additional clarifying experiments on the behavior and performance of LENS. To summarize, we apply LENS to identify nonequilibrium degrees of freedom and measure entropy production in the following \emph{in silico} systems:
\begin{enumerate}
    \item {\it Linear system with an analytical baseline: }$N$ harmonically-coupled beads with overdamped dynamics,
    \item {\it Nonlinear system with an analytical baseline:}  a driven particle in a periodic potential with overdamped dynamics,
    \item {\it Nonlinear dynamical field with no analytical baseline: } 2D Complex Ginzburg-Landau (CGL) dynamics.
\end{enumerate}
The first two of these systems demonstrate that LENS correctly estimates entropy production rates (EPRs) and identifies nonequilibrium degrees of freedom in simple, exactly-solvable systems. Experimental systems with analytical baselines are necessary to gauge the accuracy and utility of the method. Beyond such systems, the CGL system demonstrates that LENS can learn informative representations for rich stable and chaotic phases. In all cases, we observe that LENS captures significant signal of irreversibility in different physical parameter regimes and phases. Furthermore, the learned representations follow trajectories with distinctive probability flows that characterize nonequilibrium steady states (NESS), where representations have reduced dimensionality compared to the original system or describe nonlinear systems by projecting to a higher dimensional space.

\subsection{General Optimization Details} \label{subsec:simulation_optimization_details}

\subsubsection{Representation and Training Architectures} \label{subsubsec:architectures}
The shared experimental setup over these systems consists of a simulation step and a training step, where models optimize the LENS objective~\eqref{eq:lens_objective_function} over the simulated trajectory data. LENS maps pairs of temporally-consecutive states $(\va{x},\va{x}')$ to their learned representations $(\va{\phi}(\va{x}), \va{\phi}(\va{x}'))$, and it should be noted that the representations do not mix degrees of freedom between these two states (Fig.~1 in the main text). The learned representations solely contain bits of information of the original state and learned parameters $\theta$, and are thus faithful learned representations of the original dynamical system. Specific system features led to additional architectural modifications that deviated from these three architectures. In general, we prioritized simple, standard architectures that led to minimal hyperparameter tuning between systems. Further details for each architecture and system are included in the subsections below. 

\begin{figure*}[t]
    \centering
    \includegraphics[width=0.65\textwidth]{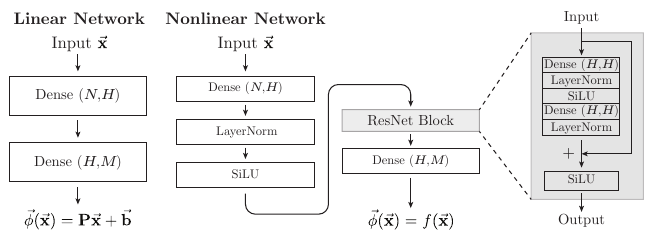}
    \caption{\textbf{Representation Architectures for LENS}. (Left) Linear network architecture with two dense (fully-connected linear) layers of width $H=256$ to obtain a linear projection $\phi:\mathbb{R}^N\mapsto\mathbb{R}^M$ of the coordinate space into representation space. (Right) Nonlinear network architecture using residual network (ResNet) blocks~\cite{He2016}, LayerNorm~\cite{Ba2016}, and SiLU activations~\cite{hendrycks2016gaussian} after each dense layer.}
    \label{fig:arch}
\end{figure*}

To maximize the LENS objective~\eqref{eq:lens_objective_function} over the simulation data, we first consider a strictly linear network with two dense (fully connected linear) layers of width $H=256$, where representations take the form $\va{\phi}(\va{x}) = \vb{P}\va{x} + \va{b}$. We refer to this architecture as the \textbf{linear} network (Fig.~\ref{fig:arch}). The goal of the linear network is to compare the learned embeddings and parameterizations with the analytical fixed points of the objective~\eqref{eq:lens_fixed_points}. The wide hidden layers enable arbitrary intermediate calculations over the original inputs; in practice, eliminating the hidden layers led to the model occasionally failing to find global minima.

The second architecture is a nonlinear network with a dense layer, a ResNet block~\cite{He2016} of width $H=256$, and a final dense layer mapping to the nonlinear representation $\vec{\phi}(\va{x})$. After each dense layer, LayerNorm~\cite{Ba2016} and SiLU~\cite{hendrycks2016gaussian} are applied (in that order), with $\text{SiLU}(z)=z\sigma(z)=z/(1+e^{-z})$.  We refer to this architecture as the \textbf{nonlinear} network in the remainder of this section. In addition to nonlinearities from the SiLU activation, this architecture normalizes feature activations to have zero mean and unit variance (LayerNorm) and includes skip connections between layers (ResNet), allowing gradients to ``skip'' intermediate layers during backpropagation. These features help stabilize training and are standard within the deep learning community~\cite{He2016,Ba2016}.

\begin{figure*}
    \centering
    \includegraphics[width=\linewidth]{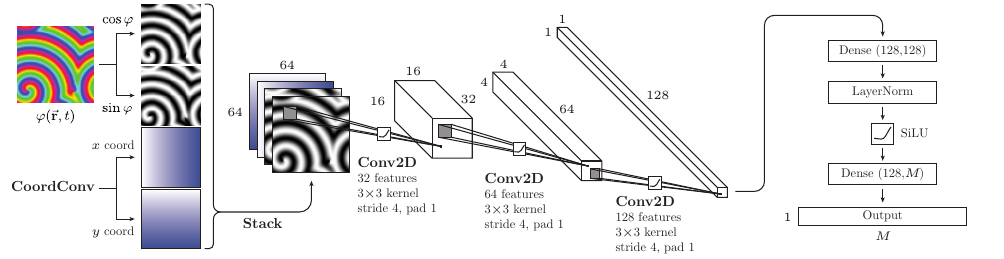}
    \caption{\textbf{Convolutional Neural Network (CNN) Architecture for LENS Analysis of CGL Dynamics}. For each $64\times 64$ image, the cosine and sine of the phase field $\varphi(\va{r},t)$ are extracted to form the first two channels. The $x$ and $y$ coordinates of each pixel are also appended as two further channels using the CoordConv procedure~\cite{Liu2018}. The resulting stack of four channels is then fed through three 2D convolution layers, each with 32, 64, and 128 features respectively. All convolutional layers use a $3\times 3$ kernel with stride 4 and zero padding of size 1, and have SiLU activations on their outputs. The output of the last convolutional layer is flattened into a single array (per image in the batch) and fed through a dense layer of width $H=128$, before using a LayerNorm and SiLU activation. The output of this nonlinear transformation is then linearly projected to $M$ dimensions using a final dense layer.}
    \label{fig:cnn_archi}
\end{figure*}

The third architecture is a convolutional neural network (CNN) (shown in Fig.~\ref{fig:cnn_archi}) for large image-based, partially-observed inputs like the 2D CGL equation dynamics. For these datasets, flattening the image inputs and passing the inputs through a ResNet would lead to a grossly overparameterized model. Thus, a CNN with aggressive downsizing of input dimensions is an appropriately lightweight architecture for image classification tasks. The CNN consists of a coordinate convolution (CoordConv) step~\cite{Liu2018}, where normalized pixel locations are appended to each pixel so the CNN can identify the locations of different salient features. The CNN consists of three convolution layers with $32$, $64$, and $128$ features respectively. Each convolution layer utilizes kernels of size $3 \times 3$, strides of size $4$, and equal padding. SiLU activations are also applied after each layer. While we experimented with GroupNorm activations after the convolutional layers (a standard feature normalization technique in computer vision that preserves local contrast), we found that GroupNorm slowed convergence in practice. The final output of the convolutions is flattened, passed through a dense layer of width 128 with applications of LayerNorm and SiLU, then linearly mapped to the representation space.

\subsubsection{Train-Test Splits}

 In the training step, simulation data points are partitioned into a training and test set comprising \SI{80}{\percent} and \SI{20}{\percent} of the total data respectively. Each dataset contains temporally consecutive data points with two data partitioning schemes. The first scheme, which we used for all systems except for the periodic potential experiments, allocates the first \SI{80}{\percent} of trajectory pairs (in time) to the training set and allocates the remaining \SI{20}{\percent} of trajectory pairs to the validation set. This scheme is suitable for systems with steady-state marginal distributions {\it in silico} with a single long trajectory (N-beads and CGL). In contrast, the periodic potential system exhibits a steady-state distribution in bulk (initialized with particles over the entire domain) but is not steady-state for a single-particle trajectory: with a constant positive drive, the position of a particle at time $t+1$ is greater than the position of the particle at $t$ at the expectation level. Thus, a randomized sampling scheme where any pair of states $(x_{t}, x_{t+1})$ is in the train/test set with \SI{80}{\percent}/\SI{20}{\percent} probability ensures that the sets are drawn from the same distribution; we refer to this scheme as the `split-by-pairs' scheme within the codebase.
 
 \subsubsection{Seeding, Hardware, and Codebase Details} All datapoints plotted are of the mean and standard deviation results over 5 random seeds, where the same random seeds initialize the starting condition of the physical simulations, parameters of the neural network, and batching during training. Furthermore, for each seed we average the test set results over the last 10 epochs of training. All simulations and experiments were run on single GPUs with at most \SI{48}{\giga\byte} of RAM. We prioritized dataset and model sizes that fit comfortably within these single-GPU memory constraints but provide additional functions within the code to handle larger datasets ($>\SI{50}{\percent}$ of the available RAM) where data-loading from CPU to the GPU is done on-the-fly. For code development, we used Weights and Biases~\cite{wandb} to track hyperparameters and training runs. Finally, the overall structure of our training code is adapted from the JaxGCRL codebase~\cite{Bortkiewicz2025} and incorporates several helper functions from that repository.

\subsection{\texorpdfstring{$N$-Beads Model}{N-Beads Model}}

The dynamics of the $N$-bead system can be described by the linear overdamped Langevin equation
\begin{equation}
    \label{eq:n_beads_langevin_app}
    \va{x}_{t+\dd t}=\va{x}_{t}+\vb{G}\va{x}_{t}\,\dd t+\vb{F}\va{\xi}_{t}\sqrt{\dd t}
\end{equation}
where the deterministic coupling matrix $\vb{G}$ has elements defined by $G_{ij}=(-2\delta_{i,j} + \delta_{i,j+1} + \delta_{i,j-1}) k/\gamma$, and the matrix $\vb{F}$ sets the local thermal noise scale. Here, we assume that all springs are identical and that each bead experiences uncorrelated noise whose strength is parametrized by the local temperature, so $\vb{F}$ is diagonal with elements $F_{ii}=\sqrt{2k_BT_i/\gamma}$ where $T_i$ is the temperature of the $i$-th bead and $\gamma$ is the viscous damping coefficient of the medium. We also define $\va{x}_{t}\in\mathbb{R}^N$ as a vector of bead displacements at time $t$, and $\va{\xi}_{t}$ as an independent realization of Gaussian white noise whose components satisfy $\mathbb{E}[\xi_i(t)]=0$ (zero mean) and $\mathbb{E}[(\xi_t)_i(\xi_{t'})_j]=\delta_{i,j}\delta(t-t')$ (independent components with unit variance), where the expectation is computed over infinitely many realizations of the noise. For all simulations, we set the autocorrelation timescale $\gamma/k=1$ and work in units where $k_B=1$.

Trajectories are obtained for these dynamics by numerically propagating this equation using the Euler-Maruyama scheme for some large $T$ number of time steps. We use a simulation time step of $\dd t=\num{e-2}$, which is much smaller than the autocorrelation time. The analytical expressions for the EPR in this model are provided in App.~\ref{app:n_bead_analytic_model}.

Due to function approximation errors, numerical errors, and difficulty in the binary classification task for small EPRs, the estimated ensemble-averaged rates occasionally fell below zero when the true EPR was of order $O(10^{-6})$. Thus, in the main text figures, we imposed a strict minimum ensemble EPR equal to a lower bound on the positive floating-point rounding error $\sim \mathcal{O}(\num{e-8})$ where $\hat{\dot{S}} = \min(\hat{\dot{S}}, \num{e-8})$. We did not impose this positivity condition in the training or simulation code itself.

\subsubsection{\texorpdfstring{Fully-Observed $N=2$ Bead System}{Fully-Observed N=2 Bead System}}

\begin{table}[ht]
    \centering
    \caption{{\bf Simulation and Training Hyperparameters of the $N=2$ Bead System.} Hyperparameter values correspond to those used for data plotted in Fig.~2b in the main text.}
    \label{tab:n=2_sim}
    \begin{tabular}{cc}
        \toprule
        Parameter & Value(s) \\
        \midrule
        Number of beads ($N$) & $2$ \\
        Temperature ratio $(T_c/T_h)$& \makecell{\{0.05, 0.1, 0.25, 0.5,\\0.90, 0.999, 0.9999\}} \\
        Time step $(\dd t_{\text{sim}})$ & \num{e-2} \\
        Trajectory length $(T)$ & \num{e7} \\
        Seeds & $\{0,1,2,3,4\}$ \\
        Learning rate ($\eta$) & \num{3e-5} \\
        Batch size & $4096$ \\
        Maximum training epochs & \num{e6} \\
        Hidden layer dimension ($H$) & $256$ \\
        Representation dimension ($M$) & $16$ \\
        Early stopping patience & $7$ \\
        \bottomrule
    \end{tabular}
\end{table}

In Fig.~2b of the main text, we used the linear and nonlinear network architectures as detailed in Fig.~\ref{fig:arch} with the LENS objective and representation dimension $M=16$. Deep ML techniques often suffer from overfitting, where expressive models improve prediction accuracy over a finite training set at the cost of poor generalization on an unseen test set. Thus, we paused training when the $7$-epoch change in validation loss was positive for $7$ epochs (see ``early stopping patience'' parameter). We note that early stopping is standard in ML and have been used in prior ML-based EPR estimation methods~\cite{Vodret2024,Lyu2025}. All relevant simulation and training hyperparameters are provided in Table~\ref{tab:n=2_sim}.

\subsubsection{\texorpdfstring{Fully-Observed $N=16$ Bead System}{Fully-Observed N=16 Bead System}}

Here, we include the experimental details of Fig.~2c. The goal of these experiments is to demonstrate that LENS can learn meaningful representations of high-dimensional systems. A secondary goal of these experiments is to test the efficacy of LENS at estimating EP across different EPR scales with a higher-dimensional system.

\begin{table}[ht]
    \centering
    \caption{{\bf Simulation and Training Hyperparameters of the $N=16$ Bead System.} Hyperparameter values correspond to those used for data plotted in Fig.~2c and 2d in the main text.}
    \label{tab:n=16_sim}
    \begin{tabular}{cc}
        \toprule
        Parameter & Value(s) \\
        \midrule
        Number of beads ($N$) & $16$ \\
        Temperature ratio $(T_c/T_h)$ & $0.1$\\
        Temperature profile & \makecell{\{linear, fourier,\\sine, alternate\}} \\
        Time step $(\dd t_{\text{sim}})$ & \num{e-2} \\
        Trajectory length $(T)$ & \num{e7} \\
        Seeds & $\{0,1,2,3,4\}$ \\
        Learning rate ($\eta$) & \num{3e-5} \\
        Batch size & $4096$ \\
        Maximum training epochs & \num{2e6} \\
        Hidden layer dimension ($H$) & $256$ \\
        Representation dimension ($M$) & $\{1,2,4,8,16\}$ \\
        Early stopping patience & $7$ \\
        \bottomrule
    \end{tabular}
\end{table}

Similar to the $N=2$ bead system, we simulate a system of $N=16$ beads following Eq.~\ref{eq:n_beads_langevin_app} over $T=\num{e7}$ transitions with the Euler-Maruyama scheme, where each transition corresponds to a (dimensionless) timestep of size $\dd t = \num{e-2}$. The $N=16$ bead system is simulated over four different temperature profiles pictured in Fig.~2a. We consider the first Fourier mode interpolating $T_{c} = 10$ and $T_{h} =100$ (\texttt{fourier}); linear temperature gradient between $T_{c}$ and $T_{h}$ (\texttt{linear}); alternating temperature profile that jumps between $T_{c}$ and $T_{h}$ for every subsequent bead (\texttt{alternate}); and half of a sine wave with ends at $T_{c}$ and a peak at temperature $T_{h}$ (\texttt{sine}). We then optimize the LENS objective for the $N=16$ bead systems with the four different temperature profiles. We consider both linear and nonlinear network architectures and different representation dimensions of $M=1,2,4,8,16$. Functionally, varying $M$ corresponds to varying output dimensions of the final dense layer and dimensions of the learned $M\times M$ matrices $\mathbf{A}_{\theta}$ and $\mathbf{B}_{\theta}$. All relevant simulation and training hyperparameters are provided in Table~\ref{tab:n=16_sim}.

The plots in Fig.~2d and 2f show the flows of learned linear representations $\phi_{\theta}$ for the `alternate' temperature profile. Here, the representations are of dimension $M=2 < 16$ and are of data points in the (held out) test set. In Fig.~2d, the inset shows the corresponding representation dynamics derived using a naive whitened PCA on the original dataset with the top two principal components (PCs). The flow is computed as the vector field $\va{\phi}(\va{x}_{t+1}) - \va{\phi}(\va{x}_{t})$ averaged over the trajectory and plotted at the interpolated midpoint $(\va{\phi}(\va{x}_{t+1}) + \va{\phi}(\va{x}_{t})) / 2$.

\subsubsection{\texorpdfstring{Fully-Observed $N=32$ and $N=64$ Bead Systems}{Fully-Observed N=32 and N=64 Bead Systems}}

\begin{figure*}[ht]
    \centering
    \includegraphics[width=0.75\linewidth]{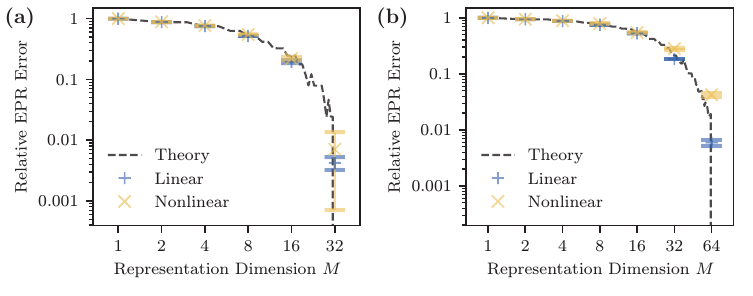}
    \caption{\textbf{LENS Application to $N$-Bead Systems for $N=32,64$}. (a) Relative (fractional) EPR error as a function of linear representation dimension $M$ for $N=32$ bead-spring model with alternate temperature profile. (b) Relative (fractional) EPR error as a function of linear representation dimension $M$ for $N=64$ bead-spring model with alternate temperature profile. The corresponding theory curves are obtained by optimizing the reduced LENS objective~\eqref{eq:reduced_lens_objective}, with small irregularities arising due to random variations in the stochastic gradient descent optimization procedure. All error bars represent the standard error of the mean (s.e.m) over 5 random seed initializations.}
    \label{fig:n=32,64}
\end{figure*}

We include additional experiments for the `alternate' profile with $N=32$ and $N=64$ beads to further stress-test the LENS method for higher-dimensional, fully-observed inputs. The simulation and training details are identical to that of the other fully-observed $N$-bead systems. The results in Fig.~\ref{fig:n=32,64} show that LENS successfully estimates EPR for the `alternate' $N=32$ and $N=64$ systems with representation dimensions $M=32$ and $M=64$ that match that of the original system. Lower dimensional representations are still able to capture a significant portion of the EPR in accordance with the theory prediction. All relevant simulation and training hyperparameters are provided in Table~\ref{tab:n=32_64_sim}.

\begin{table}[ht]
    \centering
    \caption{{\bf Simulation and Training Hyperparameters of the $N=32,64$ bead Systems.} Hyperparameter values correspond to those used for data plotted in Fig.~\ref{fig:n=32,64}. $^\dag M=64$ is only applied to the $N=64$ case.}
    \label{tab:n=32_64_sim}
    \begin{tabular}{cc}
        \toprule
        Parameter & Value(s) \\
        \midrule
        Number of beads ($N$) & $\{32,64\}$ \\
        Largest temperature ratio $(T_c/T_h)$ & $0.1$\\
        Temperature profile & alternate \\
        Time step $(\dd t_{\text{sim}})$ & \num{e-2} \\
        Trajectory length $(T)$ & \num{e7} \\
        Seeds & $\{0,1,2,3,4\}$ \\
        Learning rate ($\eta$) & \num{3e-5} \\
        Batch size & $4096$ \\
        Maximum training epochs & \num{2e6} \\
        Hidden layer dimension ($H$) & $256$ \\
        Representation dimension ($M$) & $\{1,2,4,8,16,32,64^\dag\}$ \\
        Early stopping patience & $7$ \\
        \bottomrule
    \end{tabular}
\end{table}

\subsubsection{\texorpdfstring{Rendered $N=16$ Bead System}{Rendered N=16 Bead System}}

We also conduct experiments involving ``partially observed'' trajectories, reminiscent of real experimental data where not all true degrees of freedom are directly observed, but rather a subset of observables (e.g. binned sensor data, pixel data in image stacks, etc.) is used to estimate the EPR for a system under study. We test LENS in this partially-observed regime with rendered image sequences of the $N=16$ bead system (example in Fig.~\ref{fig:rendered_n=16_example}) and detail the experimental procedures below.

\begin{figure*}[t]
    \centering
    \includegraphics[width=\linewidth]{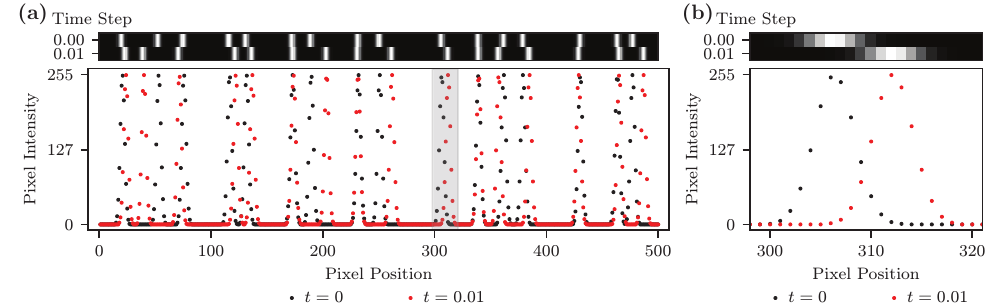}
    \caption{\textbf{Representative Frames of Rendered $N=16$ Bead-Spring System}. (a) (Top) Representative 1D renders of Gaussian-convolved bead displacements for two consecutive frames (vertically stretched for clarity). (Bottom) Scatterplot of corresponding pixel intensities from the two frames above. (b) Zoom of gray-highlighted region in (a).}
    \label{fig:rendered_n=16_example}
\end{figure*}

\begin{table}[t]
    \centering
    \caption{{\bf Simulation and Training Hyperparameters of the Rendered $N=16$ Bead System.} Hyperparameter values correspond to those used for data plotted in Fig.~2e and 2f in the main text.}
    \label{tab:rendered_n=16}
    \begin{tabular}{cc}
        \toprule
        Parameter & Value(s) \\
        \midrule
        Number of beads ($N$) & $16$ \\
        Temperature ratio $(T_c/T_h)$ & 0.1 \\
        Temperature profile & alternate \\
        Time step $(\dd t_{\text{sim}})$ & \num{e-2} \\
        Seeds & $\{0,1,2,3,4\}$ \\
        Learning rate ($\eta$) & \num{3e-5} \\
        Batch size & $4096$ \\
        Maximum training epochs & \num{e6} \\
        Hidden layer dimension ($H$) & $256$ \\
        Representation dimension ($M$) & \{$1,2,4,8,16$\} \\
        Early stopping patience & $10$ \\
        \bottomrule
    \end{tabular}
\end{table}

The frame renders are obtained by converting each of the \num{e7} bead position vectors into extended 1D arrays of ``blurred'' bead displacements. Since the dynamics of the system are in terms of the relative displacement and agnostic to the equilibrium distance between consecutive beads, the equilibrium distance between adjacent beads is initialized as twice of the maximal absolute displacement $x_i$ of any bead. This transformation guarantees that the ordering of the beads from left to right is constant throughout the trajectory, though the beads' point spread functions may overlap. Additional extensions to the current LENS parameterization can be made to take into account this form of partial observability. For example, one can instead construct logits that mix temporal degrees of freedom similar to CNEEP~\cite{Bae2022}, collapse together temporally adjacent states, or add an additional learning step for labeling and tracking objects in time with developed tools \cite{stamhuis2014pivlab}. However, these extensions are beyond the scope of this paper. 

The final rendered images were each of dimension $500 \times 1$, with a Gaussian approximately centered at each bead's true position. We opted not to increase the dimensionality of the rendered image to 2D Gaussian-blurred beads due to GPU VRAM limits, in light of rendering \num{e7} frames for an accurate comparison with the fully-observed results. We optimize the LENS objective using a variant of the nonlinear network architecture (without LayerNorm), and scan over representation dimensions $M$. Due to the relatively small input size of the rendered images (500-dimensional vector) compared to the $64 \times 64$ CGL images, we opted for the nonlinear ResNet architecture rather than the more complex CNN architecture. LayerNorms were omitted due to the observation that they slowed learning within CNN architectures; we opted to leave out the LayerNorm given the high performance of this simpler architecture.

\subsection{\texorpdfstring{Comparisons between Binary Cross Entropy (BCE) and Variational $f$-Divergence Objectives}{Comparisons between Binary Cross Entropy (BCE) and Variational f-Divergence Objectives}}

A core contribution of LENS is the parameterization of EP as a function of state representations, where the local (one-step) EP takes a quadratic form characteristic of linear stochastic dynamical systems (see Sec.~\ref{subsubsec:linear_stochastic_dynamics}). However, the question remains on the efficacy of the specific form of the LENS objective as a binary cross entropy (BCE) objective. We consider the alternative $f$-divergence formulation of the EP estimator with the LENS logit parameterization. Prior entropy production estimators have used the $f$-divergence formulation such as NEEP~\cite{Kim2020}, follow-ups to NEEP~\cite{Bae2022}, with parametrizations that allow the one-step EP $S(\va{x},\va{x}')$ to be an arbitrary function of $\va{x}$ and $\va{x}'$.

\subsubsection{\texorpdfstring{Stability of BCE and $f$-divergence Methods}{Stability of BCE and f-divergence Methods}}
While the BCE objective is equivalent to the variational $f$-divergence objective used in NEEP for infinitesimal time steps (see App.~\ref{app:lens_neep_equivalence}), the two objectives do not result in the same empirical performance. The simultaneous optimization of $S(\va{x}, \va{x}')$ and $e^{S(\va{x},\va{x}')}$ terms in the $f$-divergence lead to unstable training and exploding gradients for large EPR, a phenomenon that is well-documented in machine learning literature and recovered in Fig.~\ref{fig:lens_neep_comparison}. Thus, while $f$-divergence objectives provide compelling theoretical interpretations discussed in App.~\ref{app:lens_low_rank_approximation}, we explicitly use the BCE form of the LENS objective for all results in the main text.

\begin{figure*}
    \centering
    \includegraphics[width=0.75\linewidth]{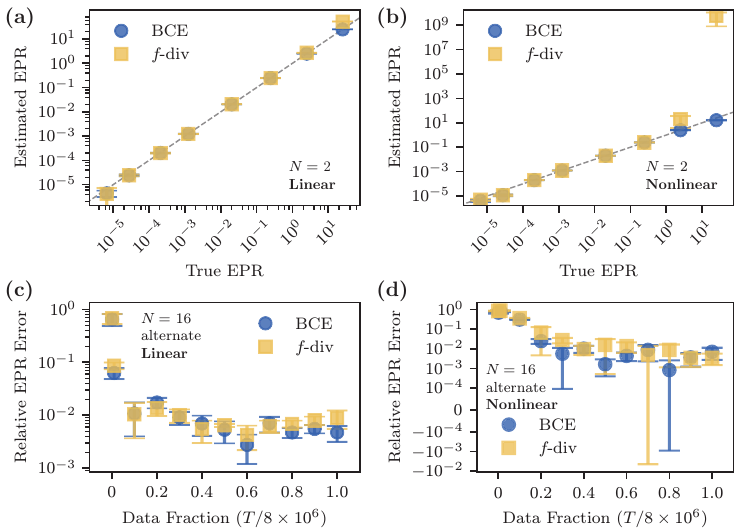}
    \caption{\textbf{Stability and Sample Efficiency Comparison between BCE and $f$-divergence Objectives}. (a, b) Estimated EPR for $N=2$ bead-spring model swept over $T_c/T_h$, with $M=16$ linear (a) and nonlinear (b) representations using BCE (LENS) (blue circles) and $f$-divergence minimization (NEEP) (yellow squares). All relevant simulation and training hyperparameters are provided in Table~\ref{tab:lens_neep_n=2}. (c, d) Relative (fractional) EPR error as a function of training dataset size for $N=16$ bead-spring model with alternate temperature profile. These are shown for $M=16$ linear (c) and nonlinear (d) representations with BCE (LENS) (blue circles) and $f$-divergence minimization (NEEP) (yellow squares). All error bars represent the standard error of the mean (s.e.m) over 5 random seed initializations. All relevant simulation and training hyperparameters are provided in Table~\ref{tab:lens_neep_n=16}.}
    \label{fig:lens_neep_comparison}
\end{figure*}

\begin{table}[ht]
    \centering
    \caption{{\bf Simulation and Training Hyperparameters of the $N=2$ Bead System for LENS/NEEP Accuracy Comparison.} Hyperparameter values correspond to those used for data plotted in Fig.~\ref{fig:lens_neep_comparison}(a),(\ref{fig:lens_neep_comparison}(b)).}
    \label{tab:lens_neep_n=2}
    \begin{tabular}{cc}
        \toprule
        Parameter & Value(s) \\
        \midrule
        Objective & \makecell{\{BCE (LENS),\\NEEP\}} \\
        Number of beads ($N$) & $2$ \\
        Temperature ratio $(T_c/T_h)$ & \makecell{\{0.05, 0.1, 0.25, 0.5,\\0.90, 0.999, 0.9999\}} \\
        Time step $(\dd t_{\text{sim}})$ & \num{e-2} \\
        Trajectory length $(T)$ & \num{e7} \\
        Seeds & $\{0,1,2,3,4\}$ \\
        Learning rate ($\eta$) & \num{3e-5} \\
        Batch size & $4096$ \\
        Maximum training epochs & \num{e6} \\
        Hidden layer dimension ($H$) & $256$ \\
        Representation dimension ($M$) & $16$ \\
        Early stopping patience & $7$ \\
        \bottomrule
    \end{tabular}
\end{table}

\subsubsection{Sample Efficiency of LENS and Overparametrization}
Although a thorough survey of EPR estimation and training set size is beyond the scope of this work, we present a brief set of empirical results on the sample efficiency of LENS which suggests LENS can be further optimized for small datasets. Notably, in the case of $N=16,M=16$ and alternate temperature profile, a training set size of $D = \num{8e4}$ captures $>10\%$ EP for the linear and MLP architectures, and a set of $D = \num{8e5}$ datapoints captures most of the entropy production (Fig.~\ref{fig:lens_neep_comparison}(c),(d)). We note that the model is overparameterized relative to the dataset size -- the linear networks have \num{74544} parameters and the nonlinear networks have \num{141872} parameters, comparable to or exceeding the number of training points. While this overparameterization is not a concern in itself with proper control of overfitting via early stopping, it is highly likely that sample efficiency can be further improved by limiting the model size. For example, in the case of linear stochastic dynamics, one can reduce `model size' by directly learning an orthogonal projection matrix $\vb{P}_{\theta}$ and bias $\va{b}_{\theta}$ that parameterizes representation $\va{\phi}_{\theta}(\va{x}) = \vb{P}\va{x} + \va{b}$. We include this functionality within the code (see \texttt{use\_ortho\_p} flag and Sec.~\ref{subsec:orthogonal_projection_learning}) and leave further explorations on the direction of sample efficiency for future work.

\begin{table}[ht]
    \centering
    \caption{{\bf Simulation and Training Hyperparameters of the $N=16$ Bead System for LENS/NEEP Sample Efficiency Comparison.} Hyperparameter values correspond to those used for data plotted in Fig.~\ref{fig:lens_neep_comparison}(c),(d).}
    \label{tab:lens_neep_n=16}
    \begin{tabular}{cc}
        \toprule
        Parameter & Value(s) \\
        \midrule
        Objective & \makecell{\{BCE (LENS),\\$f$-divergence (NEEP)\}} \\
        Number of beads ($N$) & $16$ \\
        Temperature ratio $(T_c/T_h)$ & \makecell{\{0.05, 0.1, 0.25, 0.5,\\0.90, 0.999, 0.9999\}} \\
        Time step $(\dd t_{\text{sim}})$ & \num{e-2} \\
        Training dataset fraction $(T)$ & \makecell{\{0.001, 0.01, 0.1, 0.2,\\0.3, 0.4, 0.5, 0.6,\\0.7, 0.8, 0.9, 1.0\}} \\
        Seeds & $\{0,1,2,3,4\}$ \\
        Learning rate ($\eta$) & \num{3e-5} \\
        Batch size & $1024$ \\
        Maximum training epochs & \num{e6} \\
        Hidden layer dimension ($H$) & $256$ \\
        Representation dimension ($M$) & $16$ \\
        Early stopping patience & $7$ \\
        \bottomrule
    \end{tabular}
\end{table}

\subsection{Directly Learning Projection Matrices for Linear Stochastic Dynamics} \label{subsec:orthogonal_projection_learning}
For linear systems, rather than learning a transformation of inputs via a linear neural network, we can directly fit an affine mapping $\va{\phi}(\va{x}) = \vb{P}\va{x} + \va{b}$. Directly learning $(\vb{P},\va{b})$ eliminates the need for a large network and significantly reduces parameter count. Furthermore, this parameterization can investigate whether LENS learns a consistent set of nonequilibrium degrees of freedom for linear systems across seeds, as the explicit parameterization of $\vb{P}$ allows us to easily `fix the gauge', i.e., specify which representations should be selected amongst those that estimate EPR equally well. 

We consider the set of possible constraints that can be imposed on $\vb{P}$. Given a function $S_{\theta}(\va{x},\va{x}')$, there are an infinite number of choices for $\vb{P}$, $\vb{A}$, $\vb{B}$, and $\va{b}$ that define the same function.  In general, it is not possible to simultaneously block-diagonalize the skew-symmetric matrix $\vb{A}$ (using a special orthogonal transformation) and diagonalize the symmetric matrix $\vb{B}$~\cite{Matsumoto2025}. We partially fix the gauge by setting $\vb{A}$ to be a skew-symmetric $M\times M$ block-diagonal matrix, $\vb{B}$ to be a symmetric (but otherwise unconstrained) matrix, and $\vb{P}$ to be an orthogonal matrix or semi-orthogonal matrix.

With these constraints, we find that the absolute values of the eigenvalues of the resulting learned $\vb{A}$ and $\vb{B}$ matrices are gauge invariant. That is, these values are highly reproducible (across $M$) over random initializations of $\vb{P}$ (Fig.~\ref{fig:p_matrix}(a)), where $\vb{P}$ is initialized as a random orthogonal matrix \cite{mezzadri2007generaterandommatricesclassical}. The results indicate that LENS consistently learns the same nonequilibrium degrees of freedom for a high-dimensional linear system (up to rotations within subspaces). While directly learning affine transformations of the input data is a meaningful approach here, it should be noted that we cannot use a linear parameterization for nonlinear stochastic dynamics, and so we do not adopt the method for other systems. Further empirical and analytical work is required to investigate whether LENS can learn consistent representations $\phi$ in nonlinear systems.

\begin{figure*}[t]
    \centering
    \includegraphics[width=\linewidth]{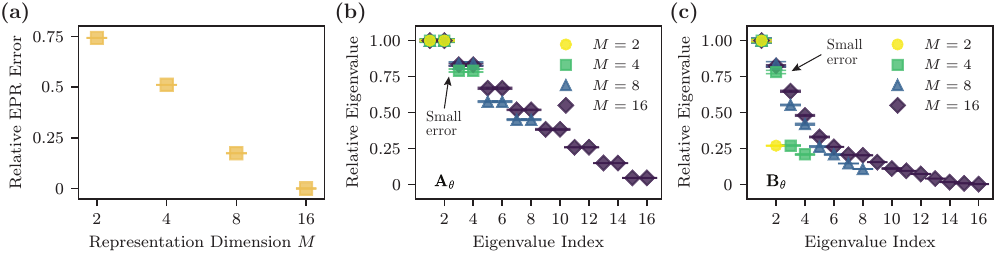}
    \caption{\textbf{Accuracy and Replicability of Directly-Learnt Projection Matrices}. (a) Relative (fractional) EPR error as a function of representation dimension $M$ for $N=16$ bead-spring model with alternate temperature profile, using a direct projection matrix. (b) Eigenvalues of antisymmetric matrix $\vb{A}_\theta$ for varying representation dimension $M$. (c) Eigenvalues of symmetric matrix $\vb{B}_\theta$ for varying representation dimension $M$. Eigenvalues are not manually sorted, and all ordering is determined by the block diagonalization procedure. All error bars represent the standard error of the mean (s.e.m) over 5 random seed initializations. All relevant simulation and training hyperparameters are provided in Table~\ref{tab:p_matrix}.}
    \label{fig:p_matrix}
\end{figure*}

\begin{table}[ht]
    \centering
    \caption{{\bf Simulation and Training Hyperparameters of the $N=16$ Bead System for Directly-Learned Linear Projections} Hyperparameter values correspond to those used for data plotted in Fig.~\ref{fig:p_matrix}.}
    \label{tab:p_matrix}
    \begin{tabular}{cc}
        \toprule
        Parameter & Value(s) \\
        \midrule
        Number of beads ($N$) & $16$ \\
        Projection Matrix $P$ Shape & $(16, 16)$\\
        Representation dimension ($M$) & $16$ \\
        Constraint on $P$ & Orthonormal \\
        Temperature ratio $(T_c/T_h)$ & $0.1$\\
        Temperature profile & alternate \\
        Time step $(\dd t_{\text{sim}})$ & \num{e-2} \\
        Trajectory length $(T)$ & \num{e7} \\
        Seeds & $\{0,1,2,3,4\}$ \\
        Learning rate ($\eta$) & \num{3e-3} \\
        Batch size & $4096$ \\
        Maximum training epochs & \num{2e6} \\
        Early stopping patience & $7$ \\
        \bottomrule
    \end{tabular}
    \label{tab:orthog}
\end{table}

\subsection{Driven Overdamped Particle in a Periodic Potential}

The dynamics of an overdamped particle in a periodic potential being driven by a constant force $h$ is given by
\begin{equation}
    \label{eq:periodic_potential_dynamics_equation}
    x_{t+1}=x_t+\qty(h-\dv{U}{x})\dd t+\sqrt{2D \,\dd t}\,\xi_t,
\end{equation}
where $x$ is the position of the particle, $U(x)$ is the potential with boundary condition $U(x+L)=U(x)$, $D$ is the diffusion constant, $t$ is time and $\xi_t$ is an independent and identically distributed (i.i.d) standard normal random variable at each time $t$. In our study, we use the specific potential $U(x)=U_0\cos(2\pi x/L)$, where $U_0$ is the potential amplitude. 

To aid the numerical simulation of trajectories, we introduce the following dimensionless variables:
\begin{align}
    \theta&=\frac{2\pi x}{L},\quad \tau=\frac{(2\pi)^2D}{L^2}t, \nonumber \\
    \kappa&=\frac{hL}{2\pi D},\quad q=\frac{U_0}{D}
\end{align}
where $\theta$ is the dimensionless position (angular variable), $\tau$ is the dimensionless time unit, $\kappa$ is the dimensionless drive strength, and $q$ is the dimensionless potential strength. This leads to a dimensionless form of~\eqref{eq:periodic_potential_dynamics_equation} given by
\begin{equation}
    \dd\theta=(\kappa+q\sin\theta)\,\dd\tau+\sqrt{2\,\dd \tau}\,\xi_t
\end{equation}
where $\xi_\tau\sim\mathcal{N}(0,1)$ is an i.i.d Gaussian random variable at each time step. In this form, the dimensionless autocorrelation time $\tau_{\mathrm{ac}}$ is identically equal to 1 in the absence of a drive $(\kappa=0)$ and potential $(q=0)$. In the presence of a potential and drive, this autocorrelation time generally decreases. It should be noted that the autocorrelation is computed using the observable $\cos\theta$ (as opposed to $\theta$) since the correlator $\expval{\theta(\tau)\theta(0)}$ diverges in the long time limit, while the periodic nature of the problem justifies the use of a periodic observable for computing the autocorrelator. For all simulations, we choose a time step of $\dd\tau=\num{e-5}$, which always satisfies the condition $\dd t\ll \tau_{\mathrm{ac}}$.

Trajectories are obtained for these dynamics by numerically integrating equation using the Euler-Maruyama scheme for some large $T$ number of time steps. We then impose the periodicity more explicitly by transforming $x\rightarrow x\mod{L}$, which keeps the trajectories bounded on the interval $[0,L)$. While we choose $\dd \tau=\num{e-5}$ for numerical evolution, we only record the positions at every \num{e2} steps to reduce systematic errors from discretization. In this manner, we save trajectories of length \num{e7} steps of effective time step $\dd \tau = \num{e-3}$ for training. Further hyperparameter details are in Table~\ref{tab:periodic_hyperparams}. The analytical expressions for the EPR in this model are provided in App.~\ref{app:periodic_potential_analytic}.

\begin{table}[ht]
    \centering
    \caption{{\bf Simulation and Training Hyperparameters of the Particle in a Periodic Potential System.} Hyperparameter values correspond to those used for data plotted in Fig.~\ref{fig:periodic_sine_activation}.}
    \label{tab:periodic_hyperparams}
    \begin{tabular}{cc}
        \toprule
        Parameter & Value(s) \\
        \midrule
        Potential amplitude $(q)$ & $1$\\
        Drive strength $(\kappa)$ & \makecell{\{0.4, 0.6, 0.8, 1.0,\\1.2, 1.4, 1.6, 1.8\}}\\ 
        Projection Matrix $P$ Shape & $(16, 16)$\\
        Representation dimension ($M$) & $\{1,2,4,8,16\}$ \\
        Time step $(\dd t_{\text{sim}})$ & \num{e-3} \\
        Trajectory length $(T)$ & \num{e7} \\
        Seeds & $\{0,1,2,3,4\}$ \\
        Learning rate ($\eta$) & \num{3e-5} \\
        Batch size & $4096$ \\
        Maximum training epochs & \num{1e6} \\
        Early stopping patience & $10$ \\
        Network architectures & \makecell{\{Linear,\\Nonlinear\}} \\
        Sine activation & \{True, False\} \\
        \bottomrule
    \end{tabular}
\end{table}

\subsubsection{Recovering Steady-State Distributions}

While the system exists in a steady state in bulk (where the system is initialized with particles over the entire domain), a single particle in a periodic potential is not in steady-state. Specifically, the position of a particle at time $t+1$ is greater than the position of the same particle at $t$ at the expectation level, due to the constant drive. Thus, additional modifications must be made to a single trajectory to obtain a distribution closer to the nonequilibrium steady-state. Since the potential is periodic, the steady state distribution is easily shown to be periodic (assuming normalization over the periodic domain). We thus apply a modulo $2\pi$ operation to trajectories $\{\theta_i\}_{i=1}^T$ which ensures the input domain to the networks is bounded.

\subsubsection{Optimization Details}

\begin{figure*}[t]
    \centering
    \includegraphics[width=\linewidth]{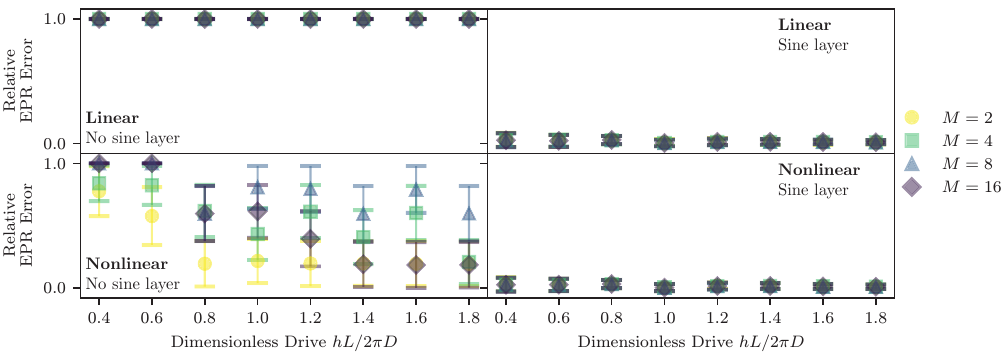}
    \caption{\textbf{Dependence of Optimization Success on Sine Activation Layer}. Relative (fractional) EPR error as a function of dimensionless drive strength $\kappa$ for various representation dimensions $M$, for linear and nonlinear representations (top and bottom rows respectively), with and without sine activations (left and right columns respectively). With the sine activation layer after the first dense (fully-connected linear) layer, the error drops across all representation dimensions to within \SI{3}{\percent}. Without the sine activation layer, linear networks fail to learn any signal of irreversibility while nonlinear networks struggle to do so consistently. All error bars represent the standard error of the mean (s.e.m) over 5 random seed initializations.}
    \label{fig:periodic_sine_activation}
\end{figure*}

For this system, we chose to add an additional sine activation layer ($\sigma_s(z)=\sin z$) at the output of the first dense (fully-connected linear) layer in each network. Prior work in ML has explored using sine activations functions for natural signals \cite{sitzmann2020implicit}. The addition of the sine layer encourages the networks to learn the periodic nature of the system by directly incorporating the periodicity into the computation of the LENS logit. This also has the added effect of ensuring that the output of the first learned dense layer lies within the interval $[-1,1]$. We found that the addition of this sine activation was necessary for stable maximization of the LENS objective in this system, as shown in Fig.~\ref{fig:periodic_sine_activation}. Particularly, the omission of the sine layer resulted in linear networks failing to converge to the true EPR across all drive strengths $\kappa$ and representation dimensions $M$. In the case of nonlinear networks, they were able to sporadically learn the right representations and thus arrive at a better estimate of the true EPR, but this was not consistent in any parameter regime.

\subsection{2D Complex Ginzburg-Landau Dynamics}
The 2D complex Ginzburg-Landau (CGL) equation is given by
\begin{align}
    \label{eq:cgl_dynamics_app}
    \pdv{\Tilde{A}(\va{r},t)}{t} &=\Tilde{A}(\va{r},t) + (1 + i\alpha)\grad^2 \Tilde{A}(\va{r},t) \nonumber \\
    &\qquad\qquad\qquad- (1 + i\beta)|A|^{2}\Tilde{A}(\va{x},t)
\end{align}
where $\Tilde{A}(\va{r},t)=|A(\va{r},t)|\exp(i\varphi(\va{r},t))$ is a complex field with spatiotemporal phase field $\varphi(\va{r},t)$ and amplitude $|A(\va{r},t)|$. Here, $\alpha$ and $\beta$ are the linear and nonlinear dispersions of the medium respectively.

We follow the ETDRK2 stepping scheme~\cite{Cox2002} for the 2D CGL equations suited for stiff, coupled PDEs. We simulate $64 \times 64$ grids of spins for trajectories of length $T=\num{e5}$ steps. Further hyperparameters for the simulation are in Table \ref{tab:cgl}.

\begin{table}[ht]
    \centering
    \caption{{\bf Simulation and Training Hyperparameters of the 2D Complex Ginzburg-Landau System.} Hyperparameter values correspond to those used for data plotted in Fig.~4 of the main text.}
    \label{tab:cgl_hyperparams}
    \begin{tabular}{cc}
        \toprule
        Parameter & Value(s) \\
        \midrule
        Grid size ($L$) & $64$ \\
        Linear dispersion ($\alpha$) & \makecell{$-0.8$} \\
        Nonlinear dispersion ($\beta$) & \makecell{\{$0.0, 0.4, 0.8, 1.2, 1.6, 2.0$\}}\\
        Time step $(\dd t_{\text{sim}})$ & \num{e-3} \\
        Trajectory length $(T)$ & \num{e5} \\
        Seeds & $\{0,1,2,3,4\}$ \\
        Learning rate ($\eta$) & \num{3e-5} \\
        Batch size & $1024$ \\
        Convolution feature count & \makecell{$[32, 64, 128]$}\\
        Maximum training epochs & \num{e5} \\
        Hidden layer dimension ($H$) & $256$ \\
        Representation dimension ($M$) & \makecell{$\{2, 4, 8, 16\}$} \\
        Early stopping patience & $7$ \\
        Coordinate convolution scale & \num{e-1} \\
        \bottomrule
    \end{tabular}
    \label{tab:cgl}
\end{table}

\subsubsection{Transition from Frozen States to Defect Turbulence}

The 2D CGL phase diagram is rich with different phases and transitions~\cite{Chate1996}. In this work, we scanned parameters where the simulated CGL dynamics visibly transition from stable or metastable states to chaotic states. Linear analysis of plane wave solutions gives the Benjamin-Feir (BF) line where stable plane waves exist for $1 - \alpha \beta > 0$ (up to finite size effects in a simulated grid). However, prior work has noted that crossing the BF line does not necessarily correspond to a transition to turbulence -- stable or metastable frozen states can exist past the BF line, where these states take the form of spirals that emanate from positive and negative-winding defects and meet at shock lines~\cite{Chate1996}. Thus, instead of considering the transition across the BF line, we consider the onset of defect turbulence, a phase where defect densities increase and correlations decay exponentially in time and space.

We specifically choose to sweep $\beta$ for a fixed $\alpha = -0.8$, where the real and imaginary $\tilde{A}$ components are randomly initialized as Gaussian random variables scaled by a small parameter $\varepsilon$. The scanned regime has a clean, first order-like transition from frozen stable/metastable states of spirals to defect turbulence that is visibly chaotic with little discernible order~\cite{Huber1992,Chate1996}. Notably, we scan a regime of the 2D CGL phase diagram that avoids persistent phase turbulence, a chaotic regime with no defects. This choice is to reduce the set of observed phases and analyze LENS performance for a single phase transition from visible stability to visible chaos. For additional discussion on the CGL phase diagram, we refer the reader to more detailed references~\cite{Chate1996,Hohenberg1992,Aranson2002}.

\subsubsection{Optimization Details}

For these experiments, we used the CNN architecture shown in Fig.~\ref{fig:cnn_archi}. The architecture was chosen to ensure that salient features of the dynamics could be inferred in a translation-invariant manner without constructing a model with an exceedingly large number of parameters.

In a similar fashion to the other experiments, the early stopping threshold is also set to 7 epochs. We observed that the estimated EPR could greatly increase within these 7 epochs between the minimal loss and early stopping trigger, but the detected onset of phase turbulence and general trend in estimated EPR is not affected by tuning the early stopping threshold.


\section{The LENS Objective} \label{app:lens_objective}

\subsection{Decomposition of Stochastic Dynamics} \label{subsec:decomposition_stochastic}

Consider a stationary Markovian system with transition probability density $p(\va{x}\rightarrow \va{x}')$, which represents the probability that the system exhibits a transition from state $\va{x}$ to a new state $\va{x}'$ at consecutive time steps $t$ and $t+1$, respectively. This joint probability can be equivalently written as $p(s_t=\va{x},s_{t+1}=\va{x}')$ and expanded as
\begin{equation}
    p(\va{x}\rightarrow \va{x}')=p_\rightarrow(s_{t+1}=\va{x}'|s_t=\va{x})p(s_t=\va{x})
\end{equation}
where the forward dynamics $p_\rightarrow(s_{t+1}|s_t)$ is the transition probability in a single forward time step. The global entropy production rate (EPR) is then defined as the Kullback-Leibler (KL) divergence between the forward dynamics $p(\va{x}\rightarrow \va{x}')$ and the reverse dynamics $p(\va{x}'\rightarrow \va{x})$~\cite{Kawai2007}. Explicitly, this KL divergence is given by
\begin{align}
    \dot{S}&=D_{KL}\big[p(\va{x}\rightarrow \va{x}')\| p(\va{x}'\rightarrow \va{x})\big] \label{eq:epr_kl_div_definition} \\
    &=\int\dd \va{x}\,\dd \va{x}'\, p(\va{x}\rightarrow \va{x}')\log\qty[\frac{p(\va{x}\rightarrow \va{x}')}{p(\va{x}'\rightarrow \va{x})}] \\
    &=\mathbb{E}_{p(\va{x}\rightarrow \va{x}')}\log\qty[\frac{p(\va{x}\rightarrow \va{x}')}{p(\va{x}' \rightarrow \va{x})}]. \label{eq:epr_expval_definition_app}
\end{align}
The object of interest here is the log joint probability ratio in Eq.~\ref{eq:epr_expval_definition_app}, which we call the target logit, as it will later form the learned logit of the classifier which enables the computation of the EPR. For a stochastic dynamical system in a steady state, we can show that the target logit is a function of the system's states $\va{x},\va{x}'$ at successive time points, the diffusion tensor $\vb{D}$, the stationary distribution $\rho_{ss}$, and the local velocity field $\va{u}(\va{x})$ in state space. Apart from the observed states $\va{x}$ and $\va{x}'$, the remaining objects are defined entirely by the dynamical equations of motion. Thus, learning this target logit from pairs of observations enables a statistical estimation of the EPR using Eq.~\ref{eq:epr_expval_definition_app} (App.~\ref{subsec:construction_objective_function}). Furthermore, particular systems admit tractable solutions for the stationary distribution and velocity field, enabling an analytical baseline for the EPR.

Suppose the underlying dynamics can be written using the Langevin equation
\begin{equation}
    \label{eq:general_langevin_start}
    \dd \va{x}=f(\va{x},t)\,\dd t+\vb{F}\,\dd\va{W}_t
\end{equation}
where $f(\va{x},t)$ is a spatiotemporal function which generates the deterministic part of the dynamics, $\dd \va{W}_t$ is the multivariate Wiener increment at time $t$ and $\vb{F}$ controls the covariance of the fluctuations. Assuming a (generally nonequilibrium) steady state $\rho_{ss}(\va{x})$, we drop the time dependencies. The probability density $\rho_{ss}(\va{x})$ then obeys the Fokker-Planck equation
\begin{align}
    \pdv{t}\rho_{ss}(\va{x})&=-\div{\va{J}_{ss}(\va{x})}=0 \\
    \text{where }\va{J}_{ss}(\va{x})&=f(\va{x})\rho_{ss}(\va{x})-\vb{D}\grad\rho_{ss}(\va{x}). \label{eq:general_probability_current}
\end{align}
where $\va{J}_{ss}(\va{x})$ is the steady state probability current and $\vb{D}=\vb{F}\vb{F}^\intercal/2$ is the diffusion tensor. Then, for processes of the form in Eq.~\ref{eq:general_langevin_start}, the target logit can be written in terms of the displacements $\va{\delta}=\va{x}'-\va{x}$, the diffusion tensor $\vb{D}$ and the steady state velocity field $\va{u}(\va{x})=\va{J}_{ss}(\va{x})/\rho_{ss}(\va{x})$:
\begin{equation}
    \label{eq:target_logit_general_linearized_form}
    S(\va{x},\va{x}')=\va{\delta}^\intercal\vb{D}^{-1}\va{u}(\va{x})+ \frac{1}{2}\va{\delta}^\intercal\vb{D}^{-1}\qty(\grad\va{u}(\va{x}))\va{\delta} +\mathcal{O}(\|\va{\delta}\|^3)
\end{equation}
To begin, we solve for the deterministic function to obtain
\begin{equation}
    \label{eq:ness_deterministic_function}
    f(\va{x})=\vb{D}\grad\log\rho_{ss}(\va{x})+ \frac{\va{J}(\va{x})}{\rho_{ss}(\va{x})}
\end{equation}
where the first term represents the gradient flow of the probability density and the second term is the velocity field in state space. A Helmholtz decomposition on the second term leads to
\begin{equation}
    \va{J}(\va{x})=\va{J}_s(\va{x})+\grad\psi(\va{x})
\end{equation}
to yield a solenoidal term $\va{J}_\mathrm{s}(\va{x})$ (divergenceless), and the gradient of some scalar potential $\psi(\va{x})$. In a steady state, we can absorb this scalar potential into the gradient flow term, so we are left with the solenoidal current. In the present setting, the target logit is given by
\begin{align}
    S(\va{x},\va{x}')&= \log\qty[\frac{p(\va{x}\rightarrow\va{x}')}{p(\va{x}'\rightarrow\va{x})}] \nonumber \\
    &=\log\qty[\frac{p_\rightarrow(\va{x}'|\va{x})}{p_\rightarrow(\va{x}|\va{x}')}]-\log\qty[\frac{\rho_{ss}(\va{x}')}{\rho_{ss}(\va{x})}].
\end{align}
We now define $\va{\delta}=\va{x}'-\va{x}\sim\mathcal{O}(\sqrt{\dd t})$, and expand the logit terms to $\mathcal{O}(\|\va{\delta}\|^2)$ to retain all relevant behavior up to $\mathcal{O}(\dd t)$. Starting with the log density ratio, we expand
\begin{multline}
    \label{eq:log_density_ratio_stationary}
    \log \rho_{ss}(\va{x}')=\log \rho_{ss}(\va{x})+\va{\delta}^\intercal\grad\log\rho_{ss}(\va{x}) \\
    +\frac{1}{2}\va{\delta}^\intercal\qty(\grad\grad^\intercal\log\rho_{ss}(\va{x}))\va{\delta}+\mathcal{O}(\|\va{\delta}\|^3)
\end{multline}
where $\grad\grad^\intercal\log\rho_{ss}$ is the Hessian of the log density evaluated at $\va{x}$. To deal with the log conditional probability ratio, we start by assuming a sufficiently short timestep $\dd t$ between states so the forward transition kernel is then
\begin{align}
    p_\rightarrow(\va{x}'|\va{x})&= \mathcal{N}\qty(\va{x}+f(\va{x})\,\dd t,2\vb{D}\,\dd t) \\
    \log p_\rightarrow(\va{x}'|\va{x})&= -\frac{1}{4\,\dd t}(\va{\delta}-f\,\dd t)^\intercal\vb{D}^{-1}(\va{\delta}-f\,\dd t)+\mathrm{const.}
\end{align}
where $\mathcal{N}(\va{\mu},\vb{\Sigma})$ is the multivariate normal distribution with mean $\va{\mu}$ and covariance matrix $\vb{\Sigma}$, and $f\equiv f(\va{x})$ for brevity. Now, we write the log conditional probability ratio
\begin{equation}
    \log\qty[\frac{p_\rightarrow(\va{x}'|\va{x})}{p_\rightarrow(\va{x}|\va{x}')}] = \frac{1}{2}\va{\delta}^\intercal\vb{D}^{-1}f+\frac{1}{2}\va{\delta}^\intercal\vb{D}^{-1}f'+\mathcal{O}(\dd t^{3/2})
\end{equation}
where $f'\equiv f(\va{x}')$ and terms like $f^\intercal\vb{D}^{-1}f$ are irrelevant since they are contributions of $\mathcal{O}(\dd t^{3/2})$ or higher order. Expanding the dynamics around $\va{x}$, we have
\begin{equation}
    \log\qty[\frac{p_\rightarrow(\va{x}'|\va{x})}{p_\rightarrow(\va{x}|\va{x}')}] = \va{\delta}^\intercal\vb{D}^{-1}f+\frac{1}{2}\va{\delta}^\intercal\vb{D}^{-1}\qty(\grad f)\va{\delta}+\mathcal{O}(\|\va{\delta}\|^3).
\end{equation}
We now use Eq.~\ref{eq:ness_deterministic_function} to obtain the log conditional probability ratio in terms of the stationary density and velocity field as
\begin{multline}
    \log\qty[\frac{p_\rightarrow(\va{x}'|\va{x})}{p_\rightarrow(\va{x}|\va{x}')}] = \va{\delta}^\intercal\grad\log\rho_{ss}(\va{x}) +\va{\delta}^\intercal\vb{D}^{-1}\va{u}(\va{x}) \\
    +\frac{1}{2}\va{\delta}^\intercal\grad\grad^\intercal\log\rho_{ss}(\va{x})\va{\delta}+\frac{1}{2}\va{\delta}^\intercal\vb{D}^{-1}\qty(\grad\va{u}(\va{x}))\va{\delta}
\end{multline}
where we write $\va{u}(\va{x})=\va{J}_s(\va{x})/\rho_{ss}(\va{x})$ for the velocity field. Combining this with Eq.~\ref{eq:log_density_ratio_stationary}, we finally obtain the expression~\eqref{eq:target_logit_general_linearized_form} for the target logit which is independent of the stationary density $\rho_{ss}(\va{x})$ and derivatives thereof. From the original definition of the target logit as a log probability ratio between the forward and reverse transitional distributions, we expect that $S(\va{x},\va{x}')$ should be antisymmetric under exchange of its arguments. This is easily confirmed in Eq.~\ref{eq:target_logit_general_linearized_form}, where we first note that the velocity field and its gradient are now evaluated at $\va{x}'$ after the exchange, so we expand the two as
\begin{align}
    \va{u}(\va{x}')&=\va{u}(\va{x})+\qty(\grad\va{u}(\va{x}))\va{\delta}+\mathcal{O}(\|\va{\delta}\|^2) \\
    \grad\va{u}(\va{x}')&=\grad\va{u}(\va{x})+\mathcal{O}(\|\va{\delta}\|).
\end{align}
The expressions are truncated at $\mathcal{O}(\|\va{\delta}\|^2)$ and $\mathcal{O}(\|\va{\delta}\|)$ respectively since they only appear in terms with corresponding factors of $\va{\delta}$ such that ignoring contributions of order $\mathcal{O}(\|\va{\delta}\|^3)$ allows us to consider lower order expansions of the velocity field and its gradient. Then, we explicitly evaluate
\begin{align}
    S(\va{x}',\va{x})&=\va{\delta}^\intercal\vb{D}^{-1}\va{u}(\va{x}')+ \frac{1}{2}\va{\delta}^\intercal\vb{D}^{-1}\qty(\grad\va{u}(\va{x}'))\va{\delta} +\mathcal{O}(\|\va{\delta}\|^3) \nonumber \\
    &=-\va{\delta}^\intercal\vb{D}^{-1}\va{u}(\va{x})-\va{\delta}^\intercal\vb{D}^{-1}\qty(\grad\va{u}(\va{x}))\va{\delta} \nonumber \\
    &\qquad+\frac{1}{2}\va{\delta}^\intercal\vb{D}^{-1}\qty(\grad\va{u}(\va{x}))\va{\delta}+\mathcal{O}(\|\va{\delta}\|^3) \nonumber \\
    &=-S(\va{x},\va{x}')+\mathcal{O}(\|\va{\delta}\|^3)
\end{align}
where we have used the fact that $\va{\delta}\rightarrow-\va{\delta}$ under the exchange. This confirms the antisymmetry of the target logit under short-time dynamics.

\subsubsection{Linear Stochastic Dynamics} \label{subsubsec:linear_stochastic_dynamics}

In the special case of linear stochastic dynamics, we may proceed significantly further. Suppose the dynamics are given by
\begin{equation}
    \dd \va{x}=\vb{G}\va{x}\,\dd t+\vb{F}\,\dd\va{W}_t
\end{equation}
where $\vb{G}$ is a matrix whose eigenvalues have negative real part, enforcing stability of the dynamics. This admits a nonequilibrium steady state
\begin{equation}
    \rho_{ss}(\va{x})=\frac{1}{\sqrt{\qty(2\pi)^N\qty|\det\vb{C}|}}\exp\qty(-\frac{1}{2}\va{x}^\intercal\vb{C}^{-1}\va{x})
\end{equation}
where $\vb{C}$ is the steady state covariance matrix obtained as the solution to the continuous-time Lyapunov equation $\vb{G}\vb{C}+\vb{C}\vb{G}^\intercal=-2\vb{D}$. The corresponding velocity field is then obtained from Eq.~\ref{eq:general_probability_current} as
\begin{equation}
    \va{u}(\va{x})=\qty(\vb{G}+\vb{D}\vb{C}^{-1})\va{x}.
\end{equation}
Noting that $\grad\va{u}(\va{x})=\vb{G}+\vb{D}\vb{C}^{-1}$, we proceed to write the target logit as
\begin{align}
    S(\va{x},\va{x}')&=(\va{x}'-\va{x})^\intercal\vb{D}^{-1}(\vb{G}+\vb{D}\vb{C}^{-1})\va{x} \nonumber \\
    &\qquad+\frac{1}{2}(\va{x}'-\va{x})^\intercal\vb{D}^{-1}(\vb{G}+\vb{D}\vb{C}^{-1})(\va{x}'-\va{x}) \nonumber \\
    &=\frac{1}{2}\va{x}'^\intercal\vb{\Lambda}\va{x}-\frac{1}{2}\va{x}^\intercal\vb{\Lambda}\va{x}'+\frac{1}{2}\va{x}'^\intercal\vb{\Lambda}\va{x}'-\frac{1}{2}\va{x}^\intercal\vb{\Lambda}\va{x}
\end{align}
where we define $\vb{\Lambda}=\vb{D}^{-1}\vb{G}+\vb{C}^{-1}$. Decomposing $\vb{\Lambda}$ into its symmetric and skew-symmetric parts
\begin{equation}
    \vb{\Lambda}=\underbrace{\frac{1}{2}(\vb{\Lambda}+\vb{\Lambda}^\intercal)}_{\vb{\Lambda}_s}+\underbrace{\frac{1}{2}(\vb{\Lambda}-\vb{\Lambda}^\intercal)}_{\vb{\Lambda}_a},
\end{equation}
we note that $\vb{\Lambda}_a$ vanishes within all quadratic self-terms of the form $\va{x}^\intercal\vb{\Lambda}\va{x}$, so we can write
\begin{equation}
    \label{eq:target_logit_linear_form}
    S(\va{x},\va{x}')=\va{x}'^\intercal\vb{\Lambda}_a\va{x}+\frac{1}{2}\qty(\va{x}'^\intercal\vb{\Lambda}_s\va{x}'-\va{x}^\intercal\vb{\Lambda}_s\va{x}).
\end{equation}
Finally, we insert the explicit expressions for $\vb{\Lambda}_s$ and $\vb{\Lambda}_a$ to obtain
\begin{align}
    S(\va{x},\va{x}')&=\frac{1}{2}\va{x}'^\intercal\qty(\vb{D}^{-1}\vb{G}-\vb{G}^\intercal\vb{D}^{-1})\va{x} \nonumber \\
    &\qquad + \frac{1}{4}\va{x}'^\intercal\qty(\vb{D}^{-1}\vb{G}+\vb{G}^\intercal\vb{D}^{-1}+2\vb{C}^{-1})\va{x}' \nonumber \\
    &\qquad - \frac{1}{4}\va{x}^\intercal\qty(\vb{D}^{-1}\vb{G}+\vb{G}^\intercal\vb{D}^{-1}+2\vb{C}^{-1})\va{x}
\end{align}

\subsubsection{General Nonlinear Stochastic Dynamics} \label{subsubsec:nonlinear_stochastic_dynamics}

Motivated by the form of Eq.~\ref{eq:target_logit_linear_form} which contains an antisymmetric quadratic form and the difference of symmetric quadratic forms resembling a potential difference, we now return to the case where $f(\va{x})$ is a smooth nonlinear function. We start from Eq.~\ref{eq:target_logit_general_linearized_form} and evaluate the velocity field and its gradient at the midpoint $\va{y}=(\va{x}+\va{x}')/2$ of the two points
\begin{equation}
    S(\va{x},\va{x}')=\va{\delta}^\intercal\vb{D}^{-1}\va{u}(\va{y})+ \frac{1}{2}\va{\delta}^\intercal\vb{D}^{-1}\qty(\grad\va{u}(\va{y}))\va{\delta} +\mathcal{O}(\|\va{\delta}\|^3)
\end{equation}
which only differs from the existing logit at $\mathcal{O}(\|\va{\delta}\|^3)$. We then define
\begin{equation}
    \va{p}(\va{y})=\vb{D}^{-1}\va{u}(\va{y}),\quad \vb{Q}(\va{y})=\frac{1}{2}\vb{D}^{-1}(\grad\va{u}(\va{y}))
\end{equation}
and rewrite the second term as
\begin{equation}
    \frac{1}{2}\va{\delta}^\intercal\vb{D}^{-1}\qty(\grad\va{u}(\va{y}))\va{\delta}=-\va{x}'^\intercal\vb{Q}_a\va{x}+\frac{1}{2}(\va{x}'^\intercal\vb{Q}_s\va{x}'-\va{x}^\intercal\vb{Q}_s\va{x})
\end{equation}
where $\vb{Q}_s$ and $\vb{Q}_a$ are the symmetric and skew-symmetric parts of $\vb{Q}(\va{y})$ respectively. The first term can be rewritten as (dropping the $\va{y}$ dependence for brevity)
\begin{align}
    \va{\delta}^\intercal\vb{D}^{-1}\va{u}(\va{y}) &= \va{x}'^\intercal\va{p}-\va{x}^\intercal\va{p} \nonumber \\
    &= \frac{1}{2}\qty[V_p(\va{x}')-V_p(\va{x})], \nonumber \\
    \text{where } V_p(\va{z})&=(\va{z}+\va{p})^\intercal(\va{z}+\va{p})-\va{z}^\intercal\va{z}.
\end{align}
Combining the two terms, we then obtain
\begin{multline}
    S(\va{x},\va{x}')=\va{x}'^\intercal[-\vb{Q}_a(\va{y})]\va{x} \\
    + \underbrace{\frac{1}{2}\qty[V_p(\va{x}')+\va{x}'^\intercal\vb{Q}_s(\va{y})\va{x}']}_{V(\va{x}')} - \underbrace{\frac{1}{2}\qty[V_p(\va{x})+\va{x}^\intercal\vb{Q}_s(\va{y})\va{x}]}_{V(\va{x})}
\end{multline}
which parametrizes the target logit using an antisymmetric quadratic form and the difference of two potential-like terms as desired.

\subsection{Construction of the Objective Function} \label{subsec:construction_objective_function}

The estimator is framed as a binary classifier $C_\theta(\va{x}, \va{x}')=\sigma\qty(S_\theta(\va{x},\va{x}'))$ where $S_\theta(\va{x},\va{x}')$ is the corresponding logit parametrized by $\theta$, and $\sigma(z)=1/(1+e^{-z})$ is the logistic sigmoid function. This classifier is trained to receive an adjacent pair of states $(\va{x},\va{x}')$ and return a prediction probability $C_\theta(\va{x},\va{x}')\in\qty(0,1)$ that the selected pair of states $(\va{x},\va{x}')$ originates from the forward dynamics $(\va{x}\rightarrow \va{x}')$ as opposed to the reverse dynamics $(\va{x}'\rightarrow \va{x})$. The classifier is trained by maximizing the LENS objective function (w.r.t $\theta$)
\begin{align}
    J(\theta)=\;&\mathbb{E}_{p(\va{x}\rightarrow \va{x}')}\qty[\log \sigma\qty(S_\theta(\va{x},\va{x}'))] \nonumber \\
    &\quad +\mathbb{E}_{p(\va{x}'\rightarrow \va{x})}\qty[\log\qty(1-\sigma\qty(S_\theta(\va{x},\va{x}')))]\label{eq:lens_objective_function_app}
\end{align}
with the mathematical expectations computed over consecutive-time state pairs sampled from a provided time series $\{\va{x}_t\}_{t=1}^T$. Maximizing the LENS objective is equivalent to minimizing the binary cross-entropy loss between the actual forward/reverse labels and the prediction probability $C_\theta(\va{x},\va{x}')$. In the limit of infinite sampling, the classifier converges to the Bayes-optimal solution obtained by taking a functional derivative of $J(\theta)$ with respect to $S_\theta$ and equating the derivative to zero. The Bayes-optimal solution then gives the classification probability $C^*_\theta$ as the posterior probability that the sample pair $(\va{x},\va{x}')$ originates from the forward dynamics as
\begin{align}
    C^*_\theta(\va{x},\va{x}')&=\frac{p(\va{x}\rightarrow \va{x}')}{p(\va{x}\rightarrow \va{x}')+p(\va{x}'\rightarrow \va{x})} \\
    \Longrightarrow\quad S^*_\theta(\va{x},\va{x}')&=\log\qty[\frac{p(\va{x}\rightarrow \va{x}')}{p(\va{x}'\rightarrow \va{x})}]. \label{eq:lens_bayes_optimal_logit}
\end{align}
The logit thus converges to the log probability ratio between the distributions of forward and reverse transitions and enables the direct estimation of the local and global EPRs by taking an expectation value of $S_{\theta}(\va{x},\va{x}')$ over many pairs of consecutive-time states. The Bayes-optimal form of $S_\theta(\va{x},\va{x}')$ also requires that it is antisymmetric in its arguments, which can be expressed at the lowest nontrivial order by an antisymmetric quadratic form and the difference of two symmetric quadratic forms, in line with the reasoning from the above section. This leads to the core parametrization of the LENS logit
\begin{align}
    S_\theta(\va{x},\va{x}') &=\va{\phi}_\theta(\va{x}')^\intercal\vb{A}_\theta\va{\phi}_\theta(\va{x}) +\frac{1}{2}\va{\phi}_\theta(\va{x}')^\intercal \vb{B}_\theta\va{\phi}_\theta(\va{x}') \nonumber \\
    &\qquad -\frac{1}{2}\va{\phi}_\theta(\va{x})^\intercal \vb{B}_\theta\va{\phi}_\theta(\va{x}) \label{eq:lens_logit_parametrization}
\end{align}
where we introduce the learned representation $\va{\phi}_\theta\qty(\va{x})$ of the system state $\va{x}$ which is structured as a feedforward network. The learned matrix $\vb{A}_\theta$ is constrained to be in block diagonal form (which is always attainable for a real skew-symmetric matrix through a Schur decomposition), while $\vb{B}_\theta$ is constrained to be symmetric (see Sec~\ref{subsec:representation_universality} for details). 

For a linear network, we expect the learned representation to correspond to the projection of the system's dynamics onto the subspace relevant for irreversible dynamics. Increasing the network depth and adding nonlinearities allows the representations to be nonlinear functions of the system's dynamics.	

\subsection{Universality of Representations} \label{subsec:representation_universality}

The logit $S(\va{x},\va{x}')$ is constructed to capture the log ratio between forward and reverse joint transition probabilities, with suitable constraints imposed on the learned matrices $\vb{A}_\theta$ and $\vb{B}_\theta$. These constraints increase the specificity of the learned representations while retaining sufficient flexibility to capture general irreversible dynamics. Here, we prove the universality of these learned representations.

First, we show a relatively trivial result which justifies the form of the LENS logit. For the following proofs, we define $K\subset\mathbb{R}^N$ as a compact subset of $\mathbb{R}^N$.

\begin{lemma}
    \label{lemma:general_antisymmetric}
    Any function $S:K\times K \mapsto \mathbb{R}$ which is antisymmetric under exchange of its two arguments s.t. $S(\va{y},\va{x})=-S(\va{x},\va{y})$ can be written as:
    \begin{equation}
        S(\va{x},\va{y})=f(\va{x},\va{y})+g(\va{y})-g(\va{x}),
    \end{equation}
    where $f(\va{x},\va{y})$ is a similarly antisymmetric function.
\end{lemma}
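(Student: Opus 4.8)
\textit{Proof plan.} The claim is a pure existence statement about decomposing an antisymmetric $S$, so the plan is to exhibit an explicit pair $(f,g)$ and verify the two required properties: that the identity holds and that $f$ is antisymmetric. The quickest route simply takes $g\equiv 0$ and $f=S$, so that $f(\va{x},\va{y})+g(\va{y})-g(\va{x})=S(\va{x},\va{y})$ trivially and $f=S$ is antisymmetric by hypothesis. This already settles the lemma, which signals that the substance lies not in the existence of \emph{a} decomposition but in its intended downstream use, where $g(\va{y})-g(\va{x})$ is to play the role of a potential-difference (``$\vb{B}$'') term and $f$ that of a genuinely bilinear antisymmetric (``$\vb{A}$'') interaction in Eq.~\eqref{eq:lens_logit_parametrization}.

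Accordingly, the version I would actually record is the more informative one, resting on a single structural observation: for any scalar $g:K\to\mathbb{R}$ the term $g(\va{y})-g(\va{x})$ is itself antisymmetric under $\va{x}\leftrightarrow\va{y}$, since it maps to $g(\va{x})-g(\va{y})$. Hence the class of antisymmetric functions is closed under subtracting any such potential difference. Defining $f(\va{x},\va{y}):=S(\va{x},\va{y})-\bigl(g(\va{y})-g(\va{x})\bigr)$ for an arbitrary $g$, the function $f$ is a difference of two antisymmetric functions and is therefore antisymmetric, while by construction $S=f+(g(\va{y})-g(\va{x}))$. This establishes the decomposition for \emph{every} choice of $g$, exhibiting the $g$-part as a free gauge.

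The only place any genuine ``work'' could enter is fixing that gauge so that $f$ acquires a desired normalization, and I would flag this mild point as the sole obstacle rather than the (automatic) antisymmetry bookkeeping. For example, to force $f$ to vanish along the reference slice $\{\va{y}=\va{x}_0\}$ one sets $g(\va{x})=-S(\va{x},\va{x}_0)+\mathrm{const}$; then $f(\va{x},\va{x}_0)=S(\va{x}_0,\va{x}_0)=0$ using that antisymmetry forces the diagonal to vanish, and the residual additive constant in $g$ cancels in $g(\va{y})-g(\va{x})$. I expect no real difficulty here: the lemma is a setup result whose purpose is to motivate the $\vb{A}/\vb{B}$ split of the LENS logit, while the substantive content—that such $f$ and $g$ can be realized by quadratic forms in a finite nonlinear embedding $\va{\phi}_\theta$—is deferred to the universality argument that follows.
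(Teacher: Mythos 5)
Your proposal is correct and matches the paper's proof: the paper likewise picks an arbitrary $g$ and defines $f(\va{x},\va{y})=S(\va{x},\va{y})-g(\va{y})+g(\va{x})$, noting that $f$ inherits antisymmetry because the potential-difference term is itself antisymmetric. Your additional remarks on the gauge freedom in $g$ and the trivial $g\equiv 0$ choice are sound elaborations but do not change the substance of the argument.
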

\begin{proof}[Proof of Lemma~\ref{lemma:general_antisymmetric}]
    Pick any smooth bounded function $g:K\mapsto\mathbb{R}$, then write:
    \begin{equation}
        f(\va{x},\va{y})=S(\va{x},\va{y})-g(\va{y})+g(\va{x})
    \end{equation}
    which is easily shown to be antisymmetric under exchange of $\va{x}\longleftrightarrow \va{y}$.
\end{proof}

With the motivation of the LENS logit, we now state the central universality theorem.

\begin{theorem}
    \label{thm:universality}
    For any $\epsilon>0$, there exists a representation $\phi:K\mapsto\mathbb{R}^M$, skew-symmetric matrix $\vb{A}$ of size $M\times M$, and symmetric matrix $\vb{B}$ of size $M\times M$ for sufficiently large $M$ s.t.
    \begin{align}
        &\sup_{\va{x},\va{y}\in K}|S(\va{x},\va{y})-\hat{S}(\va{x},\va{y})|<\epsilon \\
        \text{\emph{where} }&\hat{S}(\va{x},\va{y})=\va{\phi}^\intercal(\va{y})\vb{A}\va{\phi}(\va{x}) \nonumber \\
        &\quad+\frac{1}{2}\va{\phi}^\intercal(\va{y})\vb{B}\va{\phi}(\va{y})-\frac{1}{2}\va{\phi}^\intercal(\va{x})\vb{B}\va{\phi}(\va{x}).
    \end{align}
\end{theorem}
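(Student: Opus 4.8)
The plan is to combine the antisymmetry of $S$ with a Stone--Weierstrass density argument, then repackage the result into the prescribed quadratic form. By Lemma~\ref{lemma:general_antisymmetric} I may write $S(\va{x},\va{y})=f(\va{x},\va{y})+g(\va{y})-g(\va{x})$ for any bounded $g$ with $f$ antisymmetric; the cleanest route takes $g\equiv 0$, so that $\vb{B}=\vb{0}$ (a valid symmetric choice) and the problem reduces to approximating the antisymmetric $S$ itself by the skew bilinear form $\va{\phi}^\intercal(\va{y})\vb{A}\va{\phi}(\va{x})$ in a suitable feature map $\va{\phi}$.

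First I would invoke Stone--Weierstrass on the compact product $K\times K$. Finite linear combinations of product functions $(\va{x},\va{y})\mapsto u(\va{x})v(\va{y})$ form a subalgebra of $C(K\times K)$ that contains the constants and separates points, hence is uniformly dense. Thus for the given $\epsilon$ there are continuous $u_k,v_k:K\to\mathbb{R}$, $k=1,\dots,n$, with $\sup_{\va{x},\va{y}\in K}\lvert S(\va{x},\va{y})-\sum_k u_k(\va{x})v_k(\va{y})\rvert<\epsilon$. Next I would antisymmetrize: the operator $\mathcal{A}[h](\va{x},\va{y})=\tfrac12[h(\va{x},\va{y})-h(\va{y},\va{x})]$ fixes $S$ and is a contraction in the sup norm, so applying it keeps the error below $\epsilon$ while producing the manifestly antisymmetric approximant $\tfrac12\sum_k[u_k(\va{x})v_k(\va{y})-u_k(\va{y})v_k(\va{x})]$. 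Stacking features as $\va{\phi}=(u_1,v_1,\dots,u_n,v_n)^\intercal\in\mathbb{R}^{2n}$ and assigning to each pair $(u_k,v_k)$ the $2\times2$ skew block $\left(\begin{smallmatrix}0&-1/2\\1/2&0\end{smallmatrix}\right)$, a one-line computation shows $\va{\phi}^\intercal(\va{y})\vb{A}\va{\phi}(\va{x})$ equals exactly this sum, with $\vb{A}$ block-diagonal and skew-symmetric as required.

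This already proves the theorem with $\vb{B}=\vb{0}$. If instead one keeps a nonzero $g$ to isolate a reversible potential, the difference $g(\va{y})-g(\va{x})$ can be realized exactly rather than approximately: append a constant feature $\phi_0\equiv 1$ and the feature $\phi_g\equiv g$, and let $\vb{B}$ act on this pair by the symmetric block $\left(\begin{smallmatrix}0&1\\1&0\end{smallmatrix}\right)$, giving $\tfrac12\va{\phi}^\intercal\vb{B}\va{\phi}=g$ so the quadratic-difference terms collapse to $g(\va{y})-g(\va{x})$. Should $\va{\phi}$ be required to be a feedforward network rather than a generic continuous map, each $u_k,v_k,g$ can be approximated to arbitrary accuracy by networks via the universal approximation theorem, absorbing the extra error into $\epsilon$. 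The main obstacle is not an analytic estimate but the bookkeeping that ties the ingredients together: verifying that antisymmetrization cannot inflate the uniform error, and that the antisymmetrized separable sum always repackages into the mandated block-diagonal skew-symmetric $\vb{A}$. The real skew-symmetric canonical form guarantees this block structure is without loss of generality, and the explicit $2\times2$ construction above makes it transparent.
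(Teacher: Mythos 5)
Your proof is correct, but it takes a genuinely different route from the paper's. The paper keeps the Lemma~\ref{lemma:general_antisymmetric} decomposition $S=f+g(\va{y})-g(\va{x})$ alive throughout: it approximates $f$ and $g$ separately by multi-index monomial polynomials (Stone--Weierstrass in the polynomial algebra), matricizes the coefficients --- antisymmetrizing the coefficient matrix to get $\vb{A}$ (Lemma~\ref{lemma:antisymmetric_term}) and distributing coefficients over pairs of monomials to build a symmetric $\vb{B}$ (Lemma~\ref{lemma:symmetric_term}) --- and then pads the feature sets to a common dimension, combining three $\epsilon/3$ bounds by the triangle inequality. You instead observe that since $S$ is itself antisymmetric one may take $g\equiv 0$ and $\vb{B}=\vb{0}$, apply Stone--Weierstrass in its tensor-product form (finite sums $\sum_k u_k(\va{x})v_k(\va{y})$ form a point-separating unital subalgebra of $C(K\times K)$), antisymmetrize at the level of functions with the sup-norm non-expansive projector $\mathcal{A}$, and repackage each antisymmetrized product into a $2\times 2$ skew block. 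Both of your key steps check out: $\|\mathcal{A}[h]-S\|_\infty=\|\mathcal{A}[h-S]\|_\infty\le\|h-S\|_\infty$ because $\mathcal{A}[S]=S$, and the block computation
\begin{equation}
    u_k(\va{y})\qty(-\tfrac12)v_k(\va{x})+v_k(\va{y})\qty(\tfrac12)u_k(\va{x})=\tfrac12\qty[u_k(\va{x})v_k(\va{y})-u_k(\va{y})v_k(\va{x})]
\end{equation}
reproduces exactly the $k$-th term of the antisymmetrized approximant. What your route buys: it is shorter, avoids all multi-index bookkeeping, yields (up to rescaling of features) the $2\times2$ block-diagonal skew structure of Corollary~\ref{coro:block_diag_antisymmetric_structure} for free rather than via a real Schur decomposition, and realizes any potential term $g(\va{y})-g(\va{x})$ \emph{exactly} (constant feature paired with $g$ under an off-diagonal symmetric block) rather than approximately. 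What the paper's route buys: an explicit, shared monomial feature basis for all three terms, which supports the paper's reading of the representations as interpretable polynomial features and makes $\vb{B}$ an explicit encoding of the potential-like part; your choice $\vb{B}=\vb{0}$ proves the theorem as stated but deliberately sidesteps the role of $\vb{B}$ (your appended-feature construction restores it). One assumption you share implicitly with the paper: $S$ must be continuous on $K\times K$ for Stone--Weierstrass to apply.
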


The strategy of the proof will be to first use the Stone-Weierstrass theorem to show that $f(\va{x},\va{y})$ can be approximated by a sufficiently high degree polynomial in terms involving the components of $\va{x}$ and $\va{y}$. The polynomial will be constructed using a representation $\phi$ that is comprised of an appropriate monomial basis whose coefficients can be matricized to obtain a skew-symmetric matrix $\vb{A}$. With the representation at hand, we will then show that $g(\va{x})$ can be approximated by a sufficiently high degree polynomial in the components of $\va{x}$, and the coefficients can be similarly matricized to obtain a symmetric matrix $\vb{B}$. Finally, we assemble the approximation for $S(\va{x},\va{y})$ using the three approximation terms and ensure that the representation is large enough to approximate each of the three terms to the required accuracy.

We now proceed to obtain the representation which ensures the antisymmetric function $f(\va{x},\va{y})$ is suitably approximated.

\begin{lemma}
    \label{lemma:antisymmetric_term}
    Let $f(\va{x},\va{y}):K\times K\mapsto\mathbb{R}$ be an antisymmetric function. For any $\epsilon>0$, there exists a suitable function $\phi:K\mapsto\mathbb{R}^M$ and $M\times M$ skew-symmetric matrix $\vb{A}$ for sufficiently large $M$ s.t.
    \begin{equation}
        \sup_{\va{x},\va{y}\in K}|f(\va{x},\va{y})-\va{\phi}^\intercal(\va{y})\vb{A}\va{\phi}(\va{x})|<\frac{\epsilon}{3}.
    \end{equation}
\end{lemma}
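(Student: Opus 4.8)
The plan is to reduce this to a finite-dimensional linear-algebra statement through polynomial approximation. Assuming $f$ is continuous, I would first apply the Stone--Weierstrass theorem on the compact set $K\times K\subset\mathbb{R}^{2N}$: the polynomials in the $2N$ coordinates $(\va{x},\va{y})$ form a point-separating subalgebra containing the constants, so there is a polynomial $P(\va{x},\va{y})$ with $\sup_{\va{x},\va{y}\in K}|f(\va{x},\va{y})-P(\va{x},\va{y})|<\epsilon/3$. Since $P$ itself need not respect the antisymmetry, I would replace it by its antisymmetric part $\tilde P(\va{x},\va{y})=\tfrac12\,[\,P(\va{x},\va{y})-P(\va{y},\va{x})\,]$. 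Because $f$ is antisymmetric it equals its own antisymmetric part, and antisymmetrization is an average of two isometries of the sup norm, so $\sup_{\va{x},\va{y}\in K}|f-\tilde P|<\epsilon/3$ as well.

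Next I would matricize $\tilde P$. Enumerating the monomials $m_1,\dots,m_M$ in the $N$ coordinates up to the degree of $P$ and defining the representation $\va{\phi}(\va{x})=(m_1(\va{x}),\dots,m_M(\va{x}))^\intercal$, every monomial of $\tilde P$ is a product $m_i(\va{y})\,m_j(\va{x})$, so collecting coefficients gives $\tilde P(\va{x},\va{y})=\va{\phi}^\intercal(\va{y})\,\vb{A}\,\va{\phi}(\va{x})$ for some $M\times M$ coefficient matrix $\vb{A}$. To force $\vb{A}$ skew-symmetric I would use that a scalar equals its own transpose: swapping arguments gives $\tilde P(\va{y},\va{x})=\va{\phi}^\intercal(\va{x})\vb{A}\va{\phi}(\va{y})=\va{\phi}^\intercal(\va{y})\vb{A}^\intercal\va{\phi}(\va{x})$, so the antisymmetry $\tilde P(\va{x},\va{y})=\tfrac12[\tilde P(\va{x},\va{y})-\tilde P(\va{y},\va{x})]$ yields $\tilde P(\va{x},\va{y})=\va{\phi}^\intercal(\va{y})\,\tfrac12(\vb{A}-\vb{A}^\intercal)\,\va{\phi}(\va{x})$. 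Replacing $\vb{A}$ by its skew part $\tfrac12(\vb{A}-\vb{A}^\intercal)$ therefore leaves $\tilde P$ unchanged while making the matrix skew-symmetric, and taking $M$ to be the number of monomials used completes the construction.

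The construction is short, so the main obstacle is less a deep difficulty than a matter of getting the accounting right: one must verify that antisymmetrizing the Stone--Weierstrass approximant does not degrade the uniform bound (handled by the contraction property above) and that the skew-symmetry constraint on $\vb{A}$ can be imposed ``for free,'' which follows from the swap-antisymmetry of the bilinear form rather than from any delicate uniqueness of coefficients---so I would deliberately avoid arguments that require $K$ to have nonempty interior. The one genuine hypothesis to flag is continuity of $f$, without which Stone--Weierstrass does not apply; if $f$ were only measurable one would first need a mollification or density step. This lemma then feeds directly into Theorem~\ref{thm:universality}, where the remaining two terms $g(\va{y})-g(\va{x})$ are approximated analogously and matricized into the symmetric matrix $\vb{B}$.
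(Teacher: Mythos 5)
Your proposal is correct and follows essentially the same route as the paper's proof: Stone--Weierstrass approximation on $K\times K$, matricization of the approximating polynomial over a monomial basis, and skew-symmetrization whose error is controlled by the antisymmetry of $f$ (your ``average of two isometries'' bound is exactly the paper's $\epsilon/6+\epsilon/6$ triangle-inequality step). The only cosmetic difference is ordering --- you antisymmetrize the polynomial before matricizing and then pass to the skew part of the matrix exactly, whereas the paper matricizes first and antisymmetrizes the coefficient matrix afterward --- but the resulting approximant and error bound are identical.
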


\begin{proof}[Proof of Lemma~\ref{lemma:antisymmetric_term}]
    First fix $\epsilon>0$, then consider $N=1$ and choose the monomial basis representation
    \begin{equation}
        \phi(\va{x})=\mqty(1 & x & x^2 & \cdots & x^{d_a})^\intercal
    \end{equation}
    so $M=d_a+1$. By the Stone-Weierstrass theorem, there exists a $d_a$-degree polynomial for sufficiently high $d_a$ that approximates $f(\va{x},\va{y})$ as
    \begin{equation}
        \sup_{\va{x},\va{y}\in K}\qty|f(\va{x},\va{y})-\sum_{i,j=0}^{d_a}a_{ij}x^iy^j|<\frac{\epsilon}{3}.
    \end{equation}
    We then form the matrix $\vb{A}$ whose $(i,j)$-th element is given by $a_{ij}$, which enables us to write the above sum as a quadratic form $\va{\phi}^\intercal(\va{y})\vb{A}\va{\phi}(\va{x})$. Now, the antisymmetry of $f(\va{x},\va{y})$ implies that
    \begin{equation}
        \sum_{i,j=0}^{d_a}a_{ij}y^ix^j=-\sum_{i,j=0}^{d_a}a_{ij}x^iy^j
    \end{equation}
    which leads to the result $a_{ji}=-a_{ij}$. This implies that $\vb{A}^\intercal=\vb{A}$, so $\vb{A}$ is skew-symmetric as required.
    
    To extend the Lemma to $N>1$, we now consider $\va{x}=(x_1,x_2,\cdots,x_N)$ and $\va{y}=(y_1,y_2,\cdots,y_N)$. Define the multi-index $\va{\alpha}=(\alpha_1,\alpha_2,\cdots,\alpha_N)\in\mathbb{N}^N$, and $\va{x}^{\va{\alpha}}\equiv\prod_{k=1}^N x_k^{\alpha_k}$ with total degree $|\va{\alpha}|=\alpha_1+\alpha_2+\cdots+\alpha_N$. This gives us a representation
    \begin{equation}
        \va{\phi}(\va{x})=\qty(\va{x}^{\va{\alpha}})^\intercal_{|\va{\alpha}|\leq d_a}\in\mathbb{R}^{M(d_a)},\quad M(d_a)=\mqty(N+d_a \\ d_a)
    \end{equation}
    where the subscript indicates that we only allow multi-indices whose total degree is at most $d_a$. This returns a column vector of dimensionality $M(d_a)$ whose entries are all possible monomials comprising powers of $x_1,\cdots,x_N$ such that the total power is at most $d_a$. The ordering of monomials is arbitrary, but can be fixed by imposing a suitable well-ordering such as graded reverse lexicographic order so each multi-index $\va{\alpha}$ has a corresponding integer index $i(\va{\alpha})\in\{0,1,\cdots,M(d_a)-1\}$ that records its position in the entries of $\va{\phi}$.
    We then apply the Stone-Weierstrass theorem to $f$ as
    \begin{multline}
        \sup_{\va{x},\va{y}\in K}|f(\va{x},\va{y})-P_a(\va{x},\va{y})|<\frac{\epsilon}{3}, \\
        \quad\text{where }P_a(\va{x},\va{y})=\sum_{|\va{\alpha}|,|\va{\beta}|\leq d_a}c_{\va{\alpha}\va{\beta}}\va{x}^{\va{\alpha}}\va{y}^{\va{\beta}}.
    \end{multline}
    Next, we construct the matrix $\vb{C}\in\mathbb{R}^{M(d_a)\times M(d_a)}$ whose $(i(\va{\alpha}),i(\va{\beta)})$-th element is $c_{\va{\alpha}\va{\beta}}$. This allows the above sum to be expressible as a quadratic form
    \begin{align}
        P_a(\va{x},\va{y})&=\sum_{|\va{\alpha}|,|\va{\beta}|\leq d_a}c_{\va{\alpha}\va{\beta}}\va{x}^{\va{\alpha}}\va{y}^{\va{\beta}} \nonumber \\
        &=\va{\phi}^\intercal(\va{y})\vb{C}\va{\phi}(\va{x}).
    \end{align}
    Finally, we ensure the quadratic form is antisymmetric by defining $\vb{A}=\frac{1}{2}(\vb{C}-\vb{C}^\intercal)$ so
    \begin{equation}
        \va{\phi}^\intercal(\va{y})\vb{A}\va{\phi}(\va{x})=\frac{1}{2}\qty[P_a(\va{x},\va{y})-P_a(\va{y},\va{x})].
    \end{equation}
    Despite antisymmetrizing the matrix of coefficients, this does not affect the approximation error and we can easily show that this antisymmetric quadratic form achieves the desired level of accuracy as
    \begin{widetext}
    \begin{align}
        \sup_{\va{x},\va{y}\in K}\qty|f(\va{x},\va{y})-\va{\phi}^\intercal(\va{y})\vb{A}\va{\phi}(\va{x})| &=\sup_{\va{x},\va{y}\in K}\qty|f(\va{x},\va{y})-\frac{1}{2}[P_a(\va{x},\va{y})-P_a(\va{y},\va{x})]| \\
        &\leq \sup_{\va{x},\va{y}\in K}\qty[\frac{1}{2}\qty|f(\va{x},\va{y})-P_a(\va{x},\va{y})|+\frac{1}{2}\qty|-f(\va{y},\va{x})+P_a(\va{y},\va{x})|] \\
        &<\frac{\epsilon}{6}+\frac{\epsilon}{6}=\frac{\epsilon}{3}
    \end{align}
    \end{widetext}
    which completes the proof.
\end{proof}

Now, we show that the symmetric function $g(\va{x})$ can be similarly approximated using the form of the representations obtained from Lemma~\ref{lemma:antisymmetric_term}.

\begin{lemma}
    \label{lemma:symmetric_term}
    Let $g(\va{x}):K\mapsto\mathbb{R}$ be a smooth function. For any $\epsilon>0$, there exists a $M\times M$ symmetric matrix $\vb{B}$ and suitable function $\phi:K\mapsto \mathbb{R}^M$ for sufficiently large $M$ s.t.
    \begin{equation}
        \sup_{\va{x}\in K}\qty|g(\va{x})-\frac{1}{2}\va{\phi}^\intercal(\va{x})\vb{B}\va{\phi}(\va{x})|<\frac{\epsilon}{3},
    \end{equation}
    where the representation function $\phi$ is of the form
    \begin{equation}
        \va{\phi}(\va{x})=\qty(\va{x}^{\va{\alpha}})^\intercal_{|\va{\alpha}|\leq d_x}\in\mathbb{R}^{M(d_x)},\quad M(d_x)=\mqty(N+d_x \\ d_x)
    \end{equation}
\end{lemma}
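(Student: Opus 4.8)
The plan is to mirror the structure of the proof of Lemma~\ref{lemma:antisymmetric_term}: first reduce $g$ to a polynomial via Stone--Weierstrass, then realize that polynomial \emph{exactly} as a symmetric quadratic form in the monomial representation. First I would fix $\epsilon>0$ and apply the Stone--Weierstrass theorem to the smooth (hence continuous) function $g$ on the compact set $K$, obtaining a polynomial $P_g(\va{x})=\sum_{|\va{\gamma}|\le d}b_{\va{\gamma}}\va{x}^{\va{\gamma}}$ of some finite degree $d$ with $\sup_{\va{x}\in K}\qty|g(\va{x})-P_g(\va{x})|<\epsilon/3$. I would then take $d_x=d$, so the representation $\va{\phi}(\va{x})=\qty(\va{x}^{\va{\alpha}})_{|\va{\alpha}|\le d_x}$ contains the constant monomial $\va{x}^{\va{0}}=1$ (at index $i_0\equiv i(\va{0})$) together with every monomial appearing in $P_g$.

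The key construction is to express $P_g$ as a quadratic form by pairing each monomial with the constant entry of $\va{\phi}$. Concretely, I would define a (generally non-symmetric) matrix $\tilde{\vb{B}}$ whose only nonzero entries sit in the $i_0$-th column, setting $\tilde{B}_{i(\va{\gamma}),i_0}=b_{\va{\gamma}}$ for each $\va{\gamma}$ with $|\va{\gamma}|\le d$. Since $\va{x}^{\va{\alpha}_i}\cdot\va{x}^{\va{\alpha}_{i_0}}=\va{x}^{\va{\alpha}_i}\cdot 1=\va{x}^{\va{\alpha}_i}$, only the $j=i_0$ terms survive and the form collapses to $\va{\phi}^\intercal(\va{x})\tilde{\vb{B}}\va{\phi}(\va{x})=\sum_{i}\tilde{B}_{i,i_0}\,\va{x}^{\va{\alpha}_i}=P_g(\va{x})$, with no monomial collisions because distinct indices yield distinct powers.

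To obtain the required symmetric matrix I would symmetrize, setting $\vb{B}=\tilde{\vb{B}}+\tilde{\vb{B}}^\intercal$, which is symmetric by construction. Because $\va{\phi}^\intercal\tilde{\vb{B}}\va{\phi}$ is a scalar it equals its own transpose $\va{\phi}^\intercal\tilde{\vb{B}}^\intercal\va{\phi}$, so $\va{\phi}^\intercal\vb{B}\va{\phi}=2\,\va{\phi}^\intercal\tilde{\vb{B}}\va{\phi}=2P_g(\va{x})$ and hence $\tfrac{1}{2}\va{\phi}^\intercal(\va{x})\vb{B}\va{\phi}(\va{x})=P_g(\va{x})$ exactly. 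Combining with the Stone--Weierstrass bound gives $\sup_{\va{x}\in K}\qty|g(\va{x})-\tfrac{1}{2}\va{\phi}^\intercal(\va{x})\vb{B}\va{\phi}(\va{x})|=\sup_{\va{x}\in K}\qty|g(\va{x})-P_g(\va{x})|<\epsilon/3$, completing the proof with $M=M(d_x)$.

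I expect the only delicate point---the analogue of the antisymmetrization step in Lemma~\ref{lemma:antisymmetric_term}---to be verifying that symmetrization does not degrade the approximation. Here it is in fact easier than the antisymmetric case: since a scalar quadratic form is invariant under transposition of its matrix, symmetrizing reproduces $P_g$ exactly rather than merely to within $\epsilon/3$, so all of the error is inherited from Stone--Weierstrass. A secondary bookkeeping subtlety is ensuring an arbitrary degree-$d$ polynomial is expressible as a quadratic form in monomials of degree at most $d_x$; I resolve this cleanly by the constant-pairing trick (using that $\va{\phi}$ contains $1$), so that every monomial $\va{x}^{\va{\gamma}}$ with $|\va{\gamma}|\le d$ arises from the single admissible product $\va{x}^{\va{\gamma}}\cdot 1$, and no completeness argument over all factorizations $\va{\gamma}=\va{\alpha}+\va{\beta}$ is required.
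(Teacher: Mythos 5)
Your proof is correct, and it shares the paper's overall skeleton---first reduce $g$ to a polynomial on the compact set $K$ via Stone--Weierstrass, then realize that polynomial \emph{exactly} as a symmetric quadratic form in the monomial representation, so that all error comes from the polynomial approximation step. Where you genuinely diverge is in the construction of $\vb{B}$. The paper distributes each coefficient $c_{\va{\gamma}}$ uniformly over \emph{all} ordered pairs $(p,q)$ with $\va{\alpha}^{(p)}+\va{\alpha}^{(q)}=\va{\gamma}$, which requires introducing the counts $n_{\va{\gamma}}=\#\{(p,q):\va{\alpha}^{(p)}+\va{\alpha}^{(q)}=\va{\gamma}\}$, verifying that the contributions regroup correctly, and separately zeroing out the entries whose combined degree exceeds $d_x$. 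Your constant-pairing trick sidesteps all of that bookkeeping: by placing every coefficient in the single column indexed by the constant monomial $\va{x}^{\va{0}}=1$ and then symmetrizing via $\vb{B}=\tilde{\vb{B}}+\tilde{\vb{B}}^\intercal$, each monomial arises from exactly one admissible product, the factor of $2$ from symmetrization cancels the $\tfrac{1}{2}$ in the quadratic form, and no collision or counting argument is needed. Your version is shorter and less error-prone; the paper's version has the mild aesthetic advantage that the weight of each coefficient is spread democratically across the matrix rather than concentrated in one row and column, but this plays no role in the universality theorem, where only existence of some symmetric $\vb{B}$ matters. Both constructions plug into the proof of Theorem~\ref{thm:universality} identically (including the padding argument used there to reconcile the dimensions $M(d_a)$, $M(d_x)$, $M(d_y)$).
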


\begin{proof}[Proof of Lemma~\ref{lemma:symmetric_term}]
    First, fix $\epsilon>0$ and use the Stone-Weierstrass theorem to state that there exists a $d_x$-degree polynomial for sufficiently high $d_x$ that approximates $g(\va{x})$ as
    \begin{multline}
        \label{eq:symmetric_lemma_polynomial_approx}
        \sup_{\va{x}\in K}\qty|g(\va{x})-P_x(\va{x})|<\frac{\epsilon}{3}, \\
        \text{where }P_x(\va{x})=\sum_{|\va{\alpha}|\leq d_x}c_{\va{\alpha}}\va{x}^{\va{\alpha}}=\sum_{p=0}^{M(d_x)-1}c_{\va{\alpha}^{(p)}}\va{x}^{\va{\alpha}^{(p)}}.
    \end{multline}
    Then, we can use a representation of the form $\va{\phi}(\va{x})=(\va{x}^{\va{\alpha}^{(0)}},\va{x}^{\va{\alpha}^{(1)}},\cdots, \va{x}^{\va{\alpha}^{(M(d_x)-1)}})^\intercal$ to write the quadratic form
    \begin{equation}
        \frac{1}{2}\va{\phi}^\intercal(\va{x})\vb{B}\va{\phi}(\va{x})=\frac{1}{2}\sum_{p,q=0}^{M(d_x)-1}B_{pq} \va{x}^{\va{\alpha}^{(p)}+\va{\alpha}^{(q)}}
    \end{equation}
    where $\vb{B}\in\mathbb{R}^{M(d_x)\times M(d_x)}$ is a symmetric matrix, and each index is understood to be matched with the multi-index ordering from before through $p=i(\va{\alpha}^{(p)})$. Each combined exponent $\va{\gamma}=\va{\alpha}^{(p)}+\va{\alpha}^{(q)}$ is then a multi-index with total degree $|\va{\gamma}|\leq 2d_x$. For each of these combined multi-indices, we define the count $n_{\va{\gamma}}$ as the number of ordered pairs $(p,q)$ whose corresponding multi-indices add to $\va{\gamma}$, denoted by
    \begin{equation}
        n_{\va{\gamma}}=\#\qty{(i,j):\, \va{\alpha}^{(i)}+\va{\alpha}^{(j)}=\va{\gamma}\in\mathbb{N}^N}.
    \end{equation}
    We thus construct the matrix $\vb{B}$ by assigning to the $(p,q)$-th element a uniform fraction of the required polynomial coefficient as
    \begin{equation}
        \frac{1}{2}B_{pq}=\frac{c_{\va{\alpha}^{(p)}+\va{\alpha}^{(q)}}}{n_{\va{\alpha}^{(p)}+\va{\alpha}^{(q)}}}=\frac{c_{\va{\gamma}}}{n_{\va{\gamma}}},
    \end{equation}
    where the symmetry is clearly satisfied since swapping $p$ and $q$ does not affect the combined multi-index $\va{\gamma}$. Now, we proceed to expand the quadratic form as
    \begin{align}
        \frac{1}{2}\va{\phi}^\intercal(\va{x})\vb{B}\va{\phi}(\va{x}) &=\frac{1}{2}\sum_{p,q=0}^{M(d_x)-1}B_{pq} \va{x}^{\va{\alpha}^{(p)}+\va{\alpha}^{(q)}} \\
        &=\sum_{p,q}\frac{c_{\va{\alpha}^{(p)}+\va{\alpha}^{(q)}}}{n_{\va{\alpha}^{(p)}+\va{\alpha}^{(q)}}} \va{x}^{\va{\alpha}^{(p)}+\va{\alpha}^{(q)}}.
    \end{align}
    Grouping together terms whose exponent vector is the same multi-index $\va{\gamma}$, we see that for each $\va{\gamma}$ there are exactly $n_{\va{\gamma}}$ pairs $(p,q)$ which meet the condition $\va{\alpha}^{(p)}+\va{\alpha}^{(q)}=\va{\gamma}$. Thus, the total contribution of the $\va{x}^{\va{\gamma}}$ term is
    \begin{equation}
        \sum_{\substack{p,q \\ \va{\alpha}^{(p)}+\va{\alpha}^{(q)}=\va{\gamma}}}\frac{c_{\va{\gamma}}}{n_{\va{\gamma}}}=n_{\va{\gamma}}\frac{c_{\va{\gamma}}}{n_{\va{\gamma}}}=c_{\va{\gamma}}.
    \end{equation}
    Finally, we impose the condition that $c_{\va{\gamma}}=0$ for all $|\va{\gamma}|>d_x$, which simply demands that combined multi-indices only contribute if they have a total degree of up to $d_x$. This allows us to match the polynomials as
    \begin{equation}
        \frac{1}{2}\va{\phi}^\intercal(\va{x})\vb{B}\va{\phi}(\va{x})= \sum_{|\va{\gamma}|\leq 2d_x}c_{\va{\gamma}}\va{x}^{\va{\gamma}}= \sum_{|\va{\alpha}|\leq d_x}c_{\va{\alpha}}\va{x}^{\va{\alpha}}= P_x(\va{x})
    \end{equation}
    which completes the proof by substitution into Eq.~\ref{eq:symmetric_lemma_polynomial_approx}.
\end{proof}

It should be noted that Lemma~\ref{lemma:symmetric_term} also applies to the approximation of $g(y)$ using the quadratic form $\va{\phi}^\intercal(\va{y})\vb{B}\va{\phi}(\va{y})$, albeit with a different polynomial degree $d_y$ chosen such that the approximation lies within the $\epsilon/3$ error bound.

\begin{proof}[Proof of Theorem~\ref{thm:universality}]
    Recall the relevant definitions:
    \begin{align}
        S(\va{x},\va{y})&=f(\va{x},\va{y})+g(\va{y})-g(\va{x}) \\
        \hat{S}(\va{x},\va{y})&=\va{\phi}^\intercal(\va{y})\vb{A}\va{\phi}(\va{x})+\frac{1}{2}\va{\phi}^\intercal(\va{y})\vb{B}\va{\phi}(\va{y})-\frac{1}{2}\va{\phi}^\intercal(\va{x})\vb{B}\va{\phi}(\va{x})
    \end{align}
    Fix $\epsilon>0$, then we write
    \begin{widetext}
    \begin{align}
        \sup_{\va{x},\va{y}\in K}|S(\va{x},\va{y})-\hat{S}(\va{x},\va{y})| &\leq\sup_{\va{x},\va{y}\in K}\qty|f(\va{x},\va{y})-\va{\phi}^\intercal(\va{y})\vb{A}\va{\phi}(\va{x})| \nonumber \\
        &\qquad\qquad+\sup_{\va{y}\in K}\qty|g(\va{y})-\frac{1}{2}\va{\phi}^\intercal(\va{y})\vb{B}\va{\phi}(\va{y})| +\sup_{\va{x}\in K}\qty|g(\va{x})-\frac{1}{2}\va{\phi}^\intercal(\va{x})\vb{B}\va{\phi}(\va{x})|
    \end{align}
    \end{widetext}
    using the triangle inequality. Thus, we simply need to show that all three terms can be simultaneously approximated to within an error bound of $\epsilon/3$ with the same representation $\phi:K\mapsto\mathbb{R}^M$. We apply Lemma~\ref{lemma:antisymmetric_term} to the first term using a $d_a$-degree polynomial, and apply Lemma~\ref{lemma:symmetric_term} to the second and third terms using $d_y$ and $d_x$-degree polynomials respectively. Then, we pick the representation dimension to be $\max(M(d_a),M(d_x),M(d_y))$. This can be achieved by padding the remaining terms' matrices with the appropriate number of zero rows and columns, which leaves the approximation unchanged. Crucially, this ensures that each term is simultaneously approximated to within the required error bound, which leads to
    \begin{equation}
        \sup_{\va{x},\va{y}\in K}|S(\va{x},\va{y})-\hat{S}(\va{x},\va{y})| < \frac{\epsilon}{3}+\frac{\epsilon}{3}+\frac{\epsilon}{3}=\epsilon
    \end{equation}
    which completes the proof.
\end{proof}

While the universality argument presented above guarantees that any short-time irreversible dynamics can be captured by a suitably parametrized logit $S(\va{x},\va{y})$, it places no further constraint on the learned representations except that $\vb{A}$ is skew-symmetric and $\vb{B}$ is symmetric. It is advantageous to impose constraints on these matrices to aid the interpretability of the resulting learned logit, so we will prove a simple corollary here which extends the universality result in the presence of these constraints.

\begin{corollary}
    \label{coro:block_diag_antisymmetric_structure}
    For any $\epsilon>0$, there exists a representation $\tilde{\phi}:K\mapsto\mathbb{R}^M$, and symmetric matrix $\vb{B}$ of size $M\times M$ for sufficiently large $M$ s.t. Theorem~\ref{thm:universality} holds for skew-symmetric $\vb{\Sigma}$ of size $M\times M$ with structure
    \begin{equation}
        \label{eq:block_diagonal_structure_thm}
        \vb{\Sigma}=\mqty(\dmat{\mqty{0 & -1 \\ 1 & 0}, \mqty{0 & -1 \\ 1 & 0}, \ddots}).
    \end{equation}
\end{corollary}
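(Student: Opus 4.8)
The plan is to treat Corollary~\ref{coro:block_diag_antisymmetric_structure} as a purely algebraic re-parameterization of the approximant $\hat{S}$ already produced by Theorem~\ref{thm:universality}, exploiting the fact that the representation dimension $M$ is free to be enlarged. Theorem~\ref{thm:universality} supplies a representation $\va{\phi}:K\mapsto\mathbb{R}^{M_0}$, a skew-symmetric $\vb{A}$, and a symmetric $\vb{B}$ with $\sup_{\va{x},\va{y}}|S-\hat{S}|<\epsilon$. The key observation is that $\hat{S}$ is invariant under any invertible change of representation $\tilde{\va{\phi}}=\vb{T}^{-1}\va{\phi}$: the antisymmetric term becomes $\tilde{\va{\phi}}^\intercal(\va{y})(\vb{T}^\intercal\vb{A}\vb{T})\tilde{\va{\phi}}(\va{x})$ and each quadratic self-term becomes $\tilde{\va{\phi}}^\intercal(\va{z})(\vb{T}^\intercal\vb{B}\vb{T})\tilde{\va{\phi}}(\va{z})$, with $\vb{T}^\intercal\vb{A}\vb{T}$ still skew-symmetric and $\vb{T}^\intercal\vb{B}\vb{T}$ still symmetric. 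Hence any congruence applied to $\vb{A}$ can be absorbed into the representation while preserving both the structural constraints and the $\epsilon$ bound exactly.

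First I would invoke the real canonical (congruence) form for skew-symmetric matrices, the same fact underlying the Schur decomposition mentioned in Sec.~\ref{subsec:construction_objective_function}: every real skew-symmetric $\vb{A}$ of rank $2r$ admits an invertible $\vb{T}$ with $\vb{T}^\intercal\vb{A}\vb{T}=\operatorname{blkdiag}\big(\vb{\Sigma}_{2r},\,\vb{0}\big)$, where $\vb{\Sigma}_{2r}$ is built from $r$ copies of the generator $\mqty(0 & -1 \\ 1 & 0)$ and $\vb{0}$ is the $(M_0-2r)\times(M_0-2r)$ zero block acting on $\ker\vb{A}$. Equivalently this can be assembled from an orthogonal block-diagonalization $\vb{O}^\intercal\vb{A}\vb{O}=\operatorname{blkdiag}\big(\lambda_1\mqty(0 & -1 \\ 1 & 0),\dots,\lambda_r\mqty(0 & -1 \\ 1 & 0),\vb{0}\big)$ with $\lambda_k>0$, followed by a diagonal rescaling of each active pair of components by $\sqrt{\lambda_k}$, which normalizes every block to the unit generator. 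Setting $\tilde{\va{\phi}}=\vb{T}^{-1}\va{\phi}$ and $\vb{B}'=\vb{T}^\intercal\vb{B}\vb{T}$ leaves $\hat{S}$ unchanged and puts the antisymmetric matrix in the desired form on the active subspace.

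The remaining step, which is also the main obstacle, is to eliminate the residual zero block coming from $\ker\vb{A}$, since the target structure in Eq.~\eqref{eq:block_diagonal_structure_thm} consists purely of $2\times 2$ rotation generators and rank-deficiency of $\vb{A}$ forbids the naive normalization (division by the vanishing $\lambda_k$). I would resolve this by padding: for each of the $M_0-2r$ coordinates spanning $\ker\vb{A}$, append to the representation a fresh component that is identically zero on $K$, and pair the two into a $2\times 2$ unit rotation block. Because one factor in every such block is the zero function, each padded block contributes $-\tilde{\phi}_j(\va{y})\cdot 0+0\cdot\tilde{\phi}_j(\va{x})=0$ to the antisymmetric form, so $\hat{S}$ is unaffected and $\vb{\Sigma}$ acquires the pure block-rotation structure of Eq.~\eqref{eq:block_diagonal_structure_thm}. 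I would extend $\vb{B}'$ by zero rows and columns for the newly appended components; this keeps $\vb{B}'$ symmetric and leaves the symmetric terms untouched, since the original $\ker\vb{A}$ coordinates are retained (now as the live component of each padded block) and remain available to represent $g$. Collecting these steps, the final representation $\tilde{\va{\phi}}$ of even dimension $M=2M_0-2r$, the block-rotation matrix $\vb{\Sigma}$, and the symmetric $\vb{B}'$ satisfy Theorem~\ref{thm:universality} with the constrained structure, and the approximation error remains strictly below $\epsilon$ because every transformation was either an exact congruence or an augmentation by inert zero contributions.
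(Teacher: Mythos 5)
Your proposal is correct, and its core is the same as the paper's: both arguments take the $(\va{\phi},\vb{A},\vb{B})$ supplied by Theorem~\ref{thm:universality}, block-diagonalize the skew-symmetric matrix with a real orthogonal transformation ($\vb{O}^\intercal\vb{A}\vb{O}=\operatorname{blkdiag}(\lambda_1\vb{J},\lambda_2\vb{J},\dots)$ with $\vb{J}=\mqty(0&-1\\1&0)$), fold the rotation and a $\sqrt{\lambda_k}$ rescaling into a modified representation $\tilde{\va{\phi}}$, and transport $\vb{B}$ by the corresponding congruence, so that the logit and hence the $\epsilon$ bound are exactly preserved. Where you go beyond the paper is in the treatment of $\ker\vb{A}$. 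The paper's proof rescales by $\vb{\Gamma}=\mathrm{diag}(\sqrt{\lambda_1},\sqrt{\lambda_1},\dots)$ and then writes $\tilde{\vb{B}}=[(\vb{\Gamma}\vb{Q})^{-1}]^\intercal\vb{B}(\vb{\Gamma}\vb{Q})^{-1}$, which implicitly assumes every $\lambda_i>0$: if $\vb{A}$ is rank-deficient (which is unavoidable whenever $M$ is odd, and is permitted by the paper's own $\lambda_i\in\mathbb{R}_0^+$), then $\vb{\Gamma}$ is singular, the components of $\tilde{\va{\phi}}$ along $\ker\vb{A}$ are annihilated, and no symmetric $\tilde{\vb{B}}$ acting on $\tilde{\va{\phi}}$ can in general reproduce $\va{\phi}^\intercal\vb{B}\va{\phi}$. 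Your padding construction—pairing each retained kernel coordinate with a fresh identically-zero component, so that each padded $2\times2$ unit block contributes nothing to the antisymmetric form while the kernel coordinates stay alive for the symmetric terms—closes exactly this gap and yields the all-unit-block $\vb{\Sigma}$ demanded by Eq.~\eqref{eq:block_diagonal_structure_thm} at even dimension $M=2M_0-2r$. So your route is the same decomposition but is strictly more careful: the paper's version is shorter, while yours is the one that actually covers the degenerate (singular or odd-dimensional) case.
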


\begin{proof}[Proof of Corollary~\ref{coro:block_diag_antisymmetric_structure}]
    Suppose we have already obtained a representation function $\phi:K\mapsto\mathbb{R}^M$, a skew-symmetric matrix $\vb{A}$ and symmetric matrix $\vb{B}$ from approximating $S(\va{x},\va{y})$ according to Theorem~\ref{thm:universality}. We first recognize that every real skew-symmetric matrix admits a real orthogonal Schur decomposition of the form $\vb{A}=\vb{Q}^\intercal\tilde{\vb{\Sigma}}\vb{Q}$, where $\vb{Q}$ is an orthogonal matrix and $\tilde{\vb{\Sigma}}$ is of the form
    \begin{equation}
        \tilde{\vb{\Sigma}}=\mqty(\dmat{\mqty{0 & -\lambda_1 \\ \lambda_1 & 0}, \mqty{0 & -\lambda_2 \\ \lambda_2 & 0}, \ddots}),
    \end{equation}
    where each block entry $\lambda_i\in\mathbb{R}_0^+$. Define the diagonal matrix $\vb{\Gamma}=\mathrm{diag}(\sqrt{\lambda_1},\sqrt{\lambda_1},\sqrt{\lambda_2},\sqrt{\lambda_2},\cdots)$. Then, we can decompose $\vb{A}$ as
    \begin{align}
        \vb{A} &=(\vb{\Gamma}\vb{Q})^\intercal \vb{\Sigma}(\vb{\Gamma}\vb{Q}),
    \end{align}
    where $\vb{\Sigma}$ takes the form defined by Eq.~\ref{eq:block_diagonal_structure_thm}. We thus define modified representations $\tilde{\va{\phi}}(\va{x})=\vb{\Gamma}\vb{Q}\va{\phi}(\va{x})$, and write the antisymmetric quadratic form as
    \begin{equation}
        \va{\phi}^\intercal(\va{y})\vb{A}\va{\phi}(\va{x})= \tilde{\va{\phi}}^\intercal(\va{y})\vb{\Sigma}\tilde{\va{\phi}}(\va{x})
    \end{equation}
    which leaves the approximation error of the antisymmetric term unchanged. In the same fashion, we can use the modified representation to rewrite the symmetric quadratic forms as
    \begin{align}
        \va{\phi}^\intercal(\va{x})\vb{B}\va{\phi}(\va{x}) &=\tilde{\va{\phi}}^\intercal(\va{x})\qty[(\vb{\Gamma}\vb{Q})^{-1}]^\intercal\vb{B}(\vb{\Gamma}\vb{Q})^{-1}\tilde{\va{\phi}}(\va{x}) \\
        &=\tilde{\va{\phi}}^\intercal(\va{x})\tilde{\vb{B}}\tilde{\va{\phi}}(\va{x})
    \end{align}
    and similarly for the term involving $\va{y}$. It is easy to see that $\tilde{\vb{B}}$ is also symmetric so the approximation error is unaffected in either of the symmetric terms, which completes the proof.
\end{proof}

\section{\texorpdfstring{Entropy Production Rate of the $N$-Bead System}{Analytical Model for N-Bead System}} \label{app:n_bead_analytic_model}

While Eq.~\ref{eq:n_beads_langevin_app} describes the evolution of each independent realization of the stochastic dynamics, it is advantageous to instead study the corresponding Fokker-Planck equation
\begin{equation}
    \pdv{\rho(\va{x},t)}{t}=-\div\qty[\vb{G}\va{x}\rho(\va{x},t)-\vb{D}\grad\rho(\va{x},t)]=-\div\va{J}(\va{x},t)
\end{equation}
where $\rho(\va{x},t)$ is the probability density of the state space with coordinates $\va{x}$ at time $t$, $\va{J}(\va{x},t)$ is the corresponding probability current, and $\vb{D}=\vb{F}\vb{F}^\intercal/2$ is the diffusion tensor. In a nonequilibrium steady state (NESS), we expect the existence of a steady state probability density $\rho_{ss}(\va{x})$ whose time derivative identically vanishes. The steady state probability density can then be written as
\begin{equation}
    \rho_{ss}(\va{x})=\frac{1}{\sqrt{\qty(2\pi)^N|\vb{C}|}}\exp\qty(-\frac{1}{2}\va{x}^\intercal\vb{C}^{-1}\va{x})
\end{equation}
where $\vb{C}$ is the steady state covariance matrix, obtained as the solution to the continuous-time Lyapunov equation $\vb{G}\vb{C}+\vb{C}\vb{G}^\intercal=-2\vb{D}$ obtained by imposing that that time derivative of the covariance matrix vanishes. The corresponding steady state probability current is then given by
\begin{align}
    \vb{J}_{ss}(\va{x})&=\qty(\vb{G}+\vb{D}\vb{C}^{-1})\va{x}\rho_{ss}(\va{x})
\end{align}
The (ensemble) averaged entropy production rate is then easily found to be
\begin{equation}
    \label{eq:epr_general_expression_linear}
    \dot{S}=\int\dd \va{x}\, \frac{\va{J}_{ss}(\va{x})^\intercal\vb{D}^{-1}\va{J}_{ss}(\va{x})}{\rho_{ss}(\va{x})}=\Tr\qty[\vb{\Lambda}^\intercal\vb{D}\vb{\Lambda}\vb{C}]
\end{equation}
where $\vb{\Lambda}=\vb{D}^{-1}\vb{G}+\vb{C}^{-1}$ is the thermodynamic force matrix (related to the thermodynamic force field by $\vb{F}_\mathrm{th}=\vb{\Lambda}\va{x}$).

\subsection{Principal Component Analysis (PCA) Estimation of Entropy Production} \label{subsec:pca_linear_dynamics}

In the linear setting, we may wish to decompose the irreversible dynamics of the system to a set of reduced coordinates which exhibit the largest variance by performing principal component analysis (PCA) on the stochastic trajectory data. With the dynamics given by Eq.~\ref{eq:n_beads_langevin_app}, a trajectory of infinite length would would correspond to a point cloud of zero mean and covariance matrix $\vb{C}$ as the solution to the Lyapunov equation $\vb{G}\vb{C}+\vb{C}\vb{G}^\intercal=-2\vb{D}$. The covariance matrix is diagonalized by an orthogonal matrix $\vb{W}$ as
\begin{equation}
    \vb{C}=\vb{W}\vb{\Sigma}\vb{W}^\intercal
\end{equation}
where $\vb{\Sigma}$ is diagonal with descending ordered eigenvalues, and the columns of $\vb{W}$ contain the (ordered) normalized principal components (PCs). A naive dimensional reduction of the trajectory data is achieved by projecting the state $\va{x}$ onto the subspace spanned by the top $M$ PCs through
\begin{equation}
    \va{y}=\vb{W}^\intercal\va{x}
\end{equation}
with $\vb{W}_M$ containing the first $M$ columns of $\vb{W}$. The reduced covariance matrix is then
\begin{equation}
    \vb{C}_y=\vb{W}_M^\intercal\vb{C}\vb{W}_M.
\end{equation}

In this reduced subspace, we can then approximate the dynamics by substituting $\va{x}\approx\vb{W}_M\va{y}$, yielding the effective dynamics
\begin{equation}
    \dot{\va{y}}=\Tilde{\vb{G}}\va{y}+\Tilde{\vb{F}}\va{\xi}
\end{equation}
with the reduced matrices $\Tilde{\vb{G}}=\vb{W}_M^\intercal\vb{G}\vb{W}_M$ and $\Tilde{\vb{F}}=\vb{W}_M^\intercal\vb{F}$, along with a reduced diffusion tensor $\Tilde{\vb{D}}=\frac{1}{2}\Tilde{\vb{F}}\Tilde{\vb{F}}^\intercal=\vb{W}_M^\intercal\vb{D}\vb{W}_M$. To mirror the expression for the EPR in the reduced space, we define
\begin{equation}
    \Tilde{\vb{\Lambda}}=\Tilde{\vb{D}}^{-1}\Tilde{\vb{G}}+\vb{C}_y^{-1}
\end{equation}
which then similarly leads to the reduced EPR
\begin{equation}
    \dot{S}(M)= \Tr\qty[\tilde{\vb{\Lambda}}^\intercal\tilde{\vb{D}}\tilde{\vb{\Lambda}}\vb{C}_y].
\end{equation}
While the PCA basis can be whitened to yield a set of coordinate axes whose variance is normalized, this does not change the value of the reduced EPR $\dot{S}(M)$ since it introduces scaling factors into $\tilde{\vb{\Lambda}}$ and $\tilde{\vb{D}}$ whose effects exactly cancel out.

\section{Entropy Production Rate of a Driven Particle in a Periodic Potential} \label{app:periodic_potential_analytic}

In this model, the stochastic dynamics of a particle being driven in a periodic potential in the overdamped limit (equivalent to a particle exhibiting an overdamped Brownian motion in a tilted periodic potential) are governed by
\begin{equation}
    x_{t+1}=x_t+\qty(h-\dv{U}{x})\dd t+\sqrt{2D \,\dd t}\,\xi_t
\end{equation}
where $x$ is the one-dimensional position of the particle, $D$ is the diffusion constant, $U(x)$ is the potential periodic over $L$ so $U(x)=U(x+L)$, $h$ is the drive strength (or tilt, equivalently), and $\xi(t)$ is Gaussian random noise with $\expval{\xi(t)}=0$ and $\expval{\xi(t)\xi(t')}=\delta(t-t')$. Moving to the equivalent Fokker-Planck equation, we write
\begin{align}
    \pdv{\rho(x,t)}{t}&=-\pdv{J}{x}, \\
    \text{where }
    J(x,t)&=\qty(h-\dv{U(x)}{x})\rho(x,t)-D\pdv{\rho(x,t)}{x}. \nonumber
\end{align}
In a nonequilibrium steady state, the probability current $J(x)=J$ is spatially constant. This leads to
\begin{equation}
    D\dv{\rho_{ss}}{x}=(h-U')\rho_{ss}-J
\end{equation}
where $U'\equiv \dv{U(x)}{x}$ for brevity. This ordinary differential equation can be solved using an integrating factor $\mu(x)$, obtained as
\begin{align}
    \mu(x)&=\exp\qty[-\frac{1}{D}\int_0^x\dd y\, \qty(h-U'(y))] \nonumber \\
    &=\exp\qty(-\frac{hx-U(x)}{D}).
\end{align}
This gives the stationary distribution as
\begin{multline}
    \label{eq:tilted_stationary_distribution_constants}
    \rho_{ss}(x)=\exp\qty(\frac{hx-U(x)}{D}) \\
    \times\qty[c-\frac{J}{D}\int_0^x\dd y\,\exp\qty(-\frac{hy-U(y)}{D})]
\end{multline}
where $c$ is an integration constant such that $c$ and $J$ jointly normalize the distribution over one period. This periodicity constraint implies that $\rho_{ss}(x+L)=\rho_{ss}(x)$, and applying this constraint to the distribution at $x=0$ and $x=L$ gives
\begin{equation}
    c=\frac{J}{D(1-e^{-hL/D})}\int_0^L\dd y\,\exp\qty(-\frac{hy-U(y)}{D}).
\end{equation}
Now, we insert this into Eq.~\ref{eq:tilted_stationary_distribution_constants} to obtain the stationary distribution without any appearance of the integration constant as
\begin{widetext}
    \begin{equation}
    \rho_{ss}(x)=\exp\qty(\frac{hx-U(x)}{D})\frac{J}{D}\left[\frac{1}{1-e^{-hL/D}}\int_0^L\dd y\,\exp\qty(-\frac{hy-U(y)}{D})-\int_0^x\dd y\,\exp\qty(-\frac{hy-U(y)}{D})\right].
\end{equation}
\end{widetext}
To simplify the quantity $[\cdots]$, we define $\alpha=e^{-hL/D}$ and $q(y)=\exp\qty(-\frac{hy-U(y)}{D})$. Then, we break up the integral as
\begin{equation}
    \int_0^L\dd y\, q(y)=\underbrace{\int_0^x\dd y\, q(y)}_{I_x}+\underbrace{\int_x^L\dd y\, q(y)}_{I_{x\rightarrow L}}
\end{equation}
defining $I_x$ and $I_{x\rightarrow L}$ in the process. Now, we rewrite the expression $[\cdots]$ as
\begin{align}
    \frac{\int_0^L\dd y\, q(y)}{1-\alpha}-I_x&=\frac{I_x+I_{x\rightarrow L}}{1-\alpha}-I_x \nonumber \\
    &=\frac{I_{x\rightarrow L}+\alpha I_x}{1-\alpha}. \label{eq:periodic_intermediate_1}
\end{align}
Consider the integral of $q(y)$ from some $x$ over a period
\begin{align}
    \int_x^{x+L}\dd y\, q(y)&=\int_x^L\dd y\, q(y)+\int_L^{x+L}\dd y\, q(y) \nonumber \\
    &=I_{x\rightarrow L}+\int_0^x\dd y\, q(y+L) \nonumber \\
    &=I_{x\rightarrow L}+\alpha I_x
\end{align}
which is exactly the numerator from Eq.~\ref{eq:periodic_intermediate_1}. In the last line, we use the periodicity of $U(x)$ to impose the condition $q(y+L)=\alpha q(y)$. With this, we return to the original expression to rewrite the stationary density as
\begin{multline}
    \rho_{ss}(x)=\frac{J}{D}\exp\qty(\frac{hx-U(x)}{D})\frac{1}{1-e^{hL/D}}\\
    \int_x^{x+L}\dd y\,\exp\qty(-\frac{hy-U(y)}{D}).
\end{multline}
To remove the appearance of the probability current $J$, we simply impose the normalization condition of the density over a period $L$ to finally obtain the stationary density
\begin{align}
    \rho_{ss}(x)&=\frac{1}{Z}\exp\qty(\frac{hx-U(x)}{D}) \nonumber \\
    &\qquad\int_x^{x+L}\dd y\,\exp\qty(-\frac{hy-U(y)}{D}) \\
    \text{where } Z&=\int_0^L\dd x\,\exp\qty(\frac{hx-U(x)}{D}) \nonumber \\
    &\qquad\int_x^{x+L}\dd y\,\exp\qty(-\frac{hy-U(y)}{D})
\end{align}
which is in a desired form independent of the probability current $J$. Now, to solve for $J$ we write the explicit normalization condition
\begin{widetext}
    \begin{align}
    1=\frac{J}{D}\frac{Z}{1-e^{-hL/D}}\Longrightarrow
    J&=\frac{D(1-e^{-hL/D})}{Z} \nonumber \\
    &=\frac{D(1-e^{-hL/D})}{\displaystyle \int_0^L\dd x\,\exp\qty(-\frac{U(x)-hx}{D})\int_x^{x+L}\dd y\,\exp\qty(\frac{U(y)-hy}{D})}. \label{eq:periodic_intermediate_2}
    \end{align}
\end{widetext}
To proceed further, we need to insert an explicit form of $U(x)$, so we use the sinusoidal potential
\begin{equation}
    U(x)=U_0\cos\qty(\frac{2\pi x}{L})
\end{equation}
which has periodicity $L$, and can be shifted by an irrelevant constant such that $U(0)=0$. We now define the dimensionless quantities
\begin{equation}
    \label{eq:periodic_potential_dimensionless_variables}
    \kappa=\frac{hL}{2\pi D},\quad q=\frac{U_0}{D},\quad \theta=\frac{2\pi x}{L}\in[0,2\pi)
\end{equation}
where the first two constants are dimensionless if we work in units where $D=k_BT$. With these, we write the denominator of Eq.~\ref{eq:periodic_intermediate_2} as
\begin{equation}
    Z=\qty(\frac{L}{2\pi})^2\int_0^{2\pi}\dd\theta\,e^{-q\cos\theta+\kappa\theta}\int_\theta^{\theta+2\pi}\dd\phi\,e^{q\cos\phi-\kappa\phi}.
\end{equation}
Using the Jacobi-Anger expansion, we tame the exponents as
\begin{align}
    e^{q\cos\phi-\kappa\phi}&=e^{-\kappa\phi}\sum_{m=-\infty}^\infty I_m(q)e^{im\phi} \\
    e^{-q\cos\theta+\kappa\theta}&=e^{\kappa\theta}\sum_{n=-\infty}^\infty (-1)^n I_n(q)e^{in\theta}
\end{align}
where $I_n(z)$ is the $n$-th modified Bessel function of the first kind as a function of $z$. Performing the integral over $\phi$ first, we have
\begin{equation}
    \int_\theta^{\theta+2\pi}\dd\phi\, e^{-(\kappa+im)\phi}=\frac{1-e^{-2\pi\kappa}}{\kappa-im}e^{-(\kappa-im)\theta}.
\end{equation}
We plug this back into the denominator and complete the remaining integral to obtain
\begin{equation}
    Z=\frac{L^2}{2\pi}\qty(1-e^{-2\pi\kappa})\qty[\frac{I_0(q)^2}{\kappa}+2\sum_{n=1}^\infty(-1)^n\frac{\kappa I_n(q)^2}{\kappa^2 +n^2}]
\end{equation}
where we use the fact that $I_{n}(q)=I_{-n}(q)$ for integer order $n$. Finally, we return to Eq.~\ref{eq:periodic_intermediate_2} to obtain the final probability current
\begin{equation}
    J=\frac{2\pi D}{L^2}\qty[\frac{I_0(q)^2}{\kappa}+2\sum_{n=1}^\infty(-1)^n\frac{\kappa I_n(q)^2}{\kappa^2 +n^2}]^{-1}.
\end{equation}
To compute the EPR, we once again use the relation from Eq.~\ref{eq:epr_general_expression_linear} to write
\begin{equation}
    \dot{S}=\int_0^L\dd x\,\frac{J^2}{D\rho_{ss}(x)}.
\end{equation}
Before attempting to integrate this, we note that the Fokker-Planck equation reads
\begin{equation}
    \frac{J}{\rho_{ss}(x)}=h-U'-D\dv{x}\log\rho_{ss}(x).
\end{equation}
Inserting this into the integrand for the EPR, we finally obtain
\begin{align}
    \dot{S}&=\frac{J}{D}\int_0^L\dd x\,\qty[h-U'-D\dv{x}\log\rho_{ss}(x)] \nonumber \\
    &=\frac{J}{D}hL-\underbrace{\qty[U(x)-D\log\rho_{ss}(x)]_{x=0}^{x=L}}_{=0} \nonumber \\
    &=\frac{2\pi h}{L}\qty[\frac{I_0(q)^2}{\kappa}+2\sum_{n=1}^\infty(-1)^n\frac{\kappa I_n(q)^2}{\kappa^2 +n^2}]^{-1}
\end{align}
which is the EPR in terms of the problem variables as desired. In the limit of a vanishing drive (zero tilt), we confirm that the EPR vanishes as expected. Additionally, setting the amplitude of the periodic potential $U_0=0$ produces a purely linear potential term, and we recover a uniform stationary density $\rho_{ss}(x)=1/L$, and constant current $J=h/L$ as expected. This gives an EPR of $\dot{S}=hv/D=h^2/D$, which quantifies the continual dissipation of work $h^2$ into heat at a bath temperature determined by $D$.

\section{Equivalence between LENS and NEEP Objectives} \label{app:lens_neep_equivalence}

In prior work on studying irreversibility with feedforward neural networks, the neural estimator for entropy production (NEEP) objective is defined as~\cite{Kim2020}
\begin{align}
    \label{eq:neep_objective}
    J(\theta)&=\mathbb{E}_{t}\mathbb{E}_{p(\va{x}\rightarrow \va{x}')}\qty[ S_\theta(\va{x},\va{x}')-e^{- S_\theta(\va{x},\va{x}')}] \\
    \text{where } S_\theta(\va{x},\va{x}')&=h_\theta(s_t=\va{x}, s_{t+1}=\va{x}') \nonumber \\
    &\qquad-h_\theta(s_t=\va{x}', s_{t+1}=\va{x}).
\end{align}

Here, $h_\theta(s_t,s_{t+1})$ is a learned function of the system's states at consecutive times $t$ and $t+1$, with model parameters $\theta$. The NEEP objective can be rewritten as
\begin{equation}
    \label{eq:neep_f_div_form}
    J(\theta)=\mathbb{E}_{t}\mathbb{E}_{p(\va{x}\rightarrow \va{x}')}\qty[ S_\theta(\va{x},\va{x}')]-\mathbb{E}_{t}\mathbb{E}_{p(\va{x}'\rightarrow \va{x})}\qty[e^{ S_\theta(\va{x},\va{x}')}]
\end{equation}
where we use the fact that $S_\theta(\va{x},\va{x}')$ is antisymmetric under exchange of its arguments. This has the form of an expectation of a function over one distribution $p(x)$ and the expectation of a variant of that function over another distribution $q(x)$, reminiscent of a variational $f$-divergence estimator. We restate a known result for placing a lower bound on an $f$-divergence~\cite{Nowozin2016}
\begin{equation}
    D_f\qty[p(x)\|q(x)]\geq \max_{T(x)}\qty{\mathbb{E}_{p(x)}\qty[T(x)]-\mathbb{E}_{q(x)}\qty[f^*(T(x))]}
\end{equation}
where $f^*(x)$ is the Fenchel (convex) conjugate of the generator $f(x)$ used to define the particular $f$-divergence. This bound is tight when $T(x)=\mathrm{d}f(\frac{p(x)}{q(x)})/\mathrm{d}x$. Now, we note that the KL divergence is an $f$-divergence with $f(x)=x\log x$, which has Fenchel conjugate $e^{x-1}$. We thus obtain the inequality
\begin{equation}
    D_{KL}\qty[p(x)\|q(x)]\geq \max_{T(x)}\qty{\mathbb{E}_{p(x)}\qty[T(x)+1]-\mathbb{E}_{q(x)}\qty[e^{T(x)}]}
\end{equation}
after a change of variables $x\rightarrow x+1$. Now, we contextualize this to the case of estimating entropy production rates by substituting $x\rightarrow (\va{x},\va{x}')$, and choose the distributions $p(x)\rightarrow p(\va{x}\rightarrow \va{x}')$ and $q(x)\rightarrow p(\va{x}'\rightarrow \va{x})$. Then, we write the KL divergence using this variational form as
\begin{widetext}
    \begin{equation}
    D_{KL}\qty[p(\va{x}\rightarrow \va{x}')\| p(\va{x}'\rightarrow \va{x})]\geq \max_{S(\va{x},\va{x}')\in\mathcal{F}}\qty{\mathbb{E}_{p(\va{x}\rightarrow \va{x}')}\qty[S(\va{x},\va{x}')]-\mathbb{E}_{p(\va{x}'\rightarrow \va{x})}\qty[e^{S(\va{x},\va{x}')}]+1}
\end{equation}
\end{widetext}
where $\mathcal{F}$ is the set of functions $h:\Omega\times\Omega\rightarrow\mathbb{R}$ such that $h(\va{x}',\va{x})=-h(\va{x},\va{x}')$ for all states $\va{x}$ in the state space $\Omega$. In this form, we see that the variational form of the KL divergence for the entropy estimation matches the NEEP objective in Eq.~\ref{eq:neep_f_div_form} up to a constant, implying that the NEEP estimator computes the entropy production rate by maximizing the lower bound on the KL divergence through the learned function $S_\theta(\va{x},\va{x}')$. Interestingly, this suggests that the validation loss in training through the NEEP objective is itself relevant, since it is the lower bound on the KL divergence, and thus directly estimates the entropy production rate. We also note that the function $S_\theta(\va{x},\va{x}')$ is $\mathcal{O}(\sqrt{\dd t})$ at leading order, so we can expand the NEEP objective to second order to obtain a reduced objective
\begin{equation}
    \mathcal{V}(\theta)=\frac{1}{\dd t}\mathbb{E}_t\mathbb{E}_{p(\va{x}\rightarrow \va{x}')}\qty[2S_\theta(\va{x},\va{x}')-\frac{1}{2}S_\theta(\va{x},\va{x}')^2].
\end{equation}
Similarly expanding the LENS objective function to $\mathcal{O}(\dd t)$ returns an objective which is proportional to the reduced NEEP objective up to a factor, implying that the NEEP and LENS objectives are indeed equivalent and return the same optimal learned function $S_\theta(\va{x},\va{x}')$ under maximization of the reduced objective. This also reconciles LENS as being an equivalent variational $f$-divergence estimator of the entropy production rate.

\section{LENS as a Low-Rank Approximation} \label{app:lens_low_rank_approximation}

Here, we consider a geometric interpretation of the LENS framework rooted in low-rank approximations of the underlying entropy producing dynamics. From the linear stochastic dynamical system in Eq.~\ref{eq:n_beads_langevin_app} and corresponding representational form of the logit in Eq.~\ref{eq:lens_logit_parametrization}, we write the logit as
\begin{equation}
    S(\va{x},\va{x}')=\va{\phi}(\va{x}')^\intercal\vb{K}\va{\phi}(\va{x})+\frac{1}{2}\va{\phi}(\va{x}')^\intercal\vb{L}\va{\phi}(\va{x}')-\frac{1}{2}\va{\phi}(\va{x})^\intercal\vb{L}\va{\phi}(\va{x})
\end{equation}
where $\vb{K}$ is a skew-symmetric matrix, and $\vb{L}$ is a symmetric matrix. Restricting ourselves to linear representations of the form $\va{\phi}(\va{x})=\vb{P}\va{x}$ where $\vb{P}$ is a linear projection from $\mathbb{R}^N$ to $\mathbb{R}^M$ where $M\leq N$, we then have
\begin{equation}
    S(\va{x},\va{x}')=\va{x}'^\intercal\tilde{\vb{K}}\va{x}+\frac{1}{2}\va{x}'^\intercal\tilde{\vb{L}}\va{x}'-\frac{1}{2}\va{x}^\intercal\tilde{\vb{L}}\va{x}.
\end{equation}
where $\tilde{\vb{K}}=\vb{P}^\intercal\vb{K}\vb{P}$ and $\tilde{\vb{L}}=\vb{P}^\intercal\vb{L}\vb{P}$ are the projected skew-symmetric and symmetric learned matrices respectively. When $M<N$, the optimal logit necessarily underestimates the entropy production rate~\cite{Gnesotto2020}, with the accuracy of the estimate improving with increasing $M$. Inserting this explicit parametrization into the LENS objective returns
\begin{widetext}
    \begin{align}
    \mathcal{V}_M(\vb{P},\vb{K},\vb{L})&= 2\Tr\qty[(\tilde{\vb{K}}+\tilde{\vb{L}})\vb{C}\vb{G}^\intercal]+4\Tr\qty[(\tilde{\vb{K}}+\tilde{\vb{L}})\vb{D}] -\Tr\qty[(\tilde{\vb{K}}+\tilde{\vb{L}})^\intercal\vb{D}(\tilde{\vb{K}}+\tilde{\vb{L}})\vb{C}] \nonumber \\
    &= -2\Tr\qty[(\tilde{\vb{K}}+\tilde{\vb{L}})\vb{G}\vb{C}]-\Tr\qty[(\tilde{\vb{K}}+\tilde{\vb{L}})^\intercal\vb{D}(\tilde{\vb{K}}+\tilde{\vb{L}})\vb{C}]\label{eq:reduced_lens_objective_app}
    \end{align}
\end{widetext}
The optimal value $\mathcal{V}_M^*(\vb{P},\vb{K},\vb{L})$ (from maximizing $\mathcal{V}_M$) is thus the best underestimate of the entropy production rate with the representation constrained to be linear in $M$ dimensions, given the form of the logit in Eq.~\ref{eq:lens_logit_parametrization}. We compute the fixed points by taking a derivative of the objective with respect to $\tilde{\vb{K}}+\tilde{\vb{L}}$, from which we finally obtain the optimality condition
\begin{equation}
    \label{eq:lens_fixed_points}
    \tilde{\vb{K}}+\tilde{\vb{L}}= \vb{P}^\intercal(\vb{K}+\vb{L})\vb{P}= \vb{D}^{-1}\vb{G}+\vb{C}^{-1}=\vb{\Lambda}.
\end{equation}
This indicates that the underlying parametrization of the LENS logit encourages the linear representations to learn a low-rank approximation of $\vb{\Lambda}$ through the projection matrix $\vb{P}$. This approximation is attained through the learning process such that $\vb{A}_\theta$ approximates the skew-symmetric part of $\vb{\Lambda}$, while $\vb{B}_\theta$ approximates the symmetric part of $\vb{\Lambda}$.


\begin{thebibliography}{52}%
\makeatletter
\providecommand \@ifxundefined [1]{%
 \@ifx{#1\undefined}
}%
\providecommand \@ifnum [1]{%
 \ifnum #1\expandafter \@firstoftwo
 \else \expandafter \@secondoftwo
 \fi
}%
\providecommand \@ifx [1]{%
 \ifx #1\expandafter \@firstoftwo
 \else \expandafter \@secondoftwo
 \fi
}%
\providecommand \natexlab [1]{#1}%
\providecommand \enquote  [1]{``#1''}%
\providecommand \bibnamefont  [1]{#1}%
\providecommand \bibfnamefont [1]{#1}%
\providecommand \citenamefont [1]{#1}%
\providecommand \href@noop [0]{\@secondoftwo}%
\providecommand \href [0]{\begingroup \@sanitize@url \@href}%
\providecommand \@href[1]{\@@startlink{#1}\@@href}%
\providecommand \@@href[1]{\endgroup#1\@@endlink}%
\providecommand \@sanitize@url [0]{\catcode `\\12\catcode `\$12\catcode `\&12\catcode `\#12\catcode `\^12\catcode `\_12\catcode `\%12\relax}%
\providecommand \@@startlink[1]{}%
\providecommand \@@endlink[0]{}%
\providecommand \url  [0]{\begingroup\@sanitize@url \@url }%
\providecommand \@url [1]{\endgroup\@href {#1}{\urlprefix }}%
\providecommand \urlprefix  [0]{URL }%
\providecommand \Eprint [0]{\href }%
\providecommand \doibase [0]{https://doi.org/}%
\providecommand \selectlanguage [0]{\@gobble}%
\providecommand \bibinfo  [0]{\@secondoftwo}%
\providecommand \bibfield  [0]{\@secondoftwo}%
\providecommand \translation [1]{[#1]}%
\providecommand \BibitemOpen [0]{}%
\providecommand \bibitemStop [0]{}%
\providecommand \bibitemNoStop [0]{.\EOS\space}%
\providecommand \EOS [0]{\spacefactor3000\relax}%
\providecommand \BibitemShut  [1]{\csname bibitem#1\endcsname}%
\let\auto@bib@innerbib\@empty
\bibitem [{\citenamefont {Battle}\ \emph {et~al.}(2016)\citenamefont {Battle}, \citenamefont {Broedersz}, \citenamefont {Fakhri}, \citenamefont {Geyer}, \citenamefont {Howard}, \citenamefont {Schmidt},\ and\ \citenamefont {MacKintosh}}]{Battle2016}%
  \BibitemOpen
  \bibfield  {author} {\bibinfo {author} {\bibfnamefont {C.}~\bibnamefont {Battle}}, \bibinfo {author} {\bibfnamefont {C.~P.}\ \bibnamefont {Broedersz}}, \bibinfo {author} {\bibfnamefont {N.}~\bibnamefont {Fakhri}}, \bibinfo {author} {\bibfnamefont {V.~F.}\ \bibnamefont {Geyer}}, \bibinfo {author} {\bibfnamefont {J.}~\bibnamefont {Howard}}, \bibinfo {author} {\bibfnamefont {C.~F.}\ \bibnamefont {Schmidt}},\ and\ \bibinfo {author} {\bibfnamefont {F.~C.}\ \bibnamefont {MacKintosh}},\ }\bibfield  {title} {\bibinfo {title} {Broken detailed balance at mesoscopic scales in active biological systems},\ }\href {https://doi.org/10.1126/science.aac8167} {\bibfield  {journal} {\bibinfo  {journal} {Science}\ }\textbf {\bibinfo {volume} {352}},\ \bibinfo {pages} {604} (\bibinfo {year} {2016})}\BibitemShut {NoStop}%
\bibitem [{\citenamefont {Gladrow}\ \emph {et~al.}(2017)\citenamefont {Gladrow}, \citenamefont {Broedersz},\ and\ \citenamefont {Schmidt}}]{Gladrow2017}%
  \BibitemOpen
  \bibfield  {author} {\bibinfo {author} {\bibfnamefont {J.}~\bibnamefont {Gladrow}}, \bibinfo {author} {\bibfnamefont {C.~P.}\ \bibnamefont {Broedersz}},\ and\ \bibinfo {author} {\bibfnamefont {C.~F.}\ \bibnamefont {Schmidt}},\ }\bibfield  {title} {\bibinfo {title} {Nonequilibrium dynamics of probe filaments in actin-myosin networks},\ }\href {https://doi.org/10.1103/PhysRevE.96.022408} {\bibfield  {journal} {\bibinfo  {journal} {Physical Review E}\ }\textbf {\bibinfo {volume} {96}},\ \bibinfo {pages} {022408} (\bibinfo {year} {2017})}\BibitemShut {NoStop}%
\bibitem [{\citenamefont {Lynn}\ \emph {et~al.}(2022)\citenamefont {Lynn}, \citenamefont {Holmes}, \citenamefont {Bialek},\ and\ \citenamefont {Schwab}}]{Lynn2022a}%
  \BibitemOpen
  \bibfield  {author} {\bibinfo {author} {\bibfnamefont {C.~W.}\ \bibnamefont {Lynn}}, \bibinfo {author} {\bibfnamefont {C.~M.}\ \bibnamefont {Holmes}}, \bibinfo {author} {\bibfnamefont {W.}~\bibnamefont {Bialek}},\ and\ \bibinfo {author} {\bibfnamefont {D.~J.}\ \bibnamefont {Schwab}},\ }\bibfield  {title} {\bibinfo {title} {Emergence of local irreversibility in complex interacting systems},\ }\href {https://doi.org/10.1103/PhysRevE.106.034102} {\bibfield  {journal} {\bibinfo  {journal} {Physical Review E}\ }\textbf {\bibinfo {volume} {106}},\ \bibinfo {pages} {034102} (\bibinfo {year} {2022})}\BibitemShut {NoStop}%
\bibitem [{\citenamefont {Seifert}(2012)}]{Seifert2012}%
  \BibitemOpen
  \bibfield  {author} {\bibinfo {author} {\bibfnamefont {U.}~\bibnamefont {Seifert}},\ }\bibfield  {title} {\bibinfo {title} {Stochastic thermodynamics, fluctuation theorems and molecular machines},\ }\href {https://doi.org/10.1088/0034-4885/75/12/126001} {\bibfield  {journal} {\bibinfo  {journal} {Reports on Progress in Physics}\ }\textbf {\bibinfo {volume} {75}},\ \bibinfo {pages} {126001} (\bibinfo {year} {2012})}\BibitemShut {NoStop}%
\bibitem [{\citenamefont {Qian}\ \emph {et~al.}(2016)\citenamefont {Qian}, \citenamefont {Kjelstrup}, \citenamefont {Kolomeisky},\ and\ \citenamefont {Bedeaux}}]{Qian2016}%
  \BibitemOpen
  \bibfield  {author} {\bibinfo {author} {\bibfnamefont {H.}~\bibnamefont {Qian}}, \bibinfo {author} {\bibfnamefont {S.}~\bibnamefont {Kjelstrup}}, \bibinfo {author} {\bibfnamefont {A.~B.}\ \bibnamefont {Kolomeisky}},\ and\ \bibinfo {author} {\bibfnamefont {D.}~\bibnamefont {Bedeaux}},\ }\bibfield  {title} {\bibinfo {title} {Entropy production in mesoscopic stochastic thermodynamics: nonequilibrium kinetic cycles driven by chemical potentials, temperatures, and mechanical forces},\ }\href {https://doi.org/10.1088/0953-8984/28/15/153004} {\bibfield  {journal} {\bibinfo  {journal} {Journal of Physics: Condensed Matter}\ }\textbf {\bibinfo {volume} {28}},\ \bibinfo {pages} {153004} (\bibinfo {year} {2016})}\BibitemShut {NoStop}%
\bibitem [{\citenamefont {Gnesotto}\ \emph {et~al.}(2018)\citenamefont {Gnesotto}, \citenamefont {Mura}, \citenamefont {Gladrow},\ and\ \citenamefont {Broedersz}}]{Gnesotto2018}%
  \BibitemOpen
  \bibfield  {author} {\bibinfo {author} {\bibfnamefont {F.~S.}\ \bibnamefont {Gnesotto}}, \bibinfo {author} {\bibfnamefont {F.}~\bibnamefont {Mura}}, \bibinfo {author} {\bibfnamefont {J.}~\bibnamefont {Gladrow}},\ and\ \bibinfo {author} {\bibfnamefont {C.~P.}\ \bibnamefont {Broedersz}},\ }\bibfield  {title} {\bibinfo {title} {Broken detailed balance and non-equilibrium dynamics in living systems: a review},\ }\href {https://doi.org/10.1088/1361-6633/aab3ed} {\bibfield  {journal} {\bibinfo  {journal} {Reports on Progress in Physics}\ }\textbf {\bibinfo {volume} {81}},\ \bibinfo {pages} {066601} (\bibinfo {year} {2018})}\BibitemShut {NoStop}%
\bibitem [{\citenamefont {Mura}\ \emph {et~al.}(2018)\citenamefont {Mura}, \citenamefont {Gradziuk},\ and\ \citenamefont {Broedersz}}]{Mura2018}%
  \BibitemOpen
  \bibfield  {author} {\bibinfo {author} {\bibfnamefont {F.}~\bibnamefont {Mura}}, \bibinfo {author} {\bibfnamefont {G.}~\bibnamefont {Gradziuk}},\ and\ \bibinfo {author} {\bibfnamefont {C.~P.}\ \bibnamefont {Broedersz}},\ }\bibfield  {title} {\bibinfo {title} {Nonequilibrium {Scaling} {Behavior} in {Driven} {Soft} {Biological} {Assemblies}},\ }\href {https://doi.org/10.1103/PhysRevLett.121.038002} {\bibfield  {journal} {\bibinfo  {journal} {Physical Review Letters}\ }\textbf {\bibinfo {volume} {121}},\ \bibinfo {pages} {038002} (\bibinfo {year} {2018})}\BibitemShut {NoStop}%
\bibitem [{\citenamefont {Tan}\ \emph {et~al.}(2021)\citenamefont {Tan}, \citenamefont {Watson}, \citenamefont {Chao}, \citenamefont {Li}, \citenamefont {Gingrich}, \citenamefont {Horowitz},\ and\ \citenamefont {Fakhri}}]{Tan2021}%
  \BibitemOpen
  \bibfield  {author} {\bibinfo {author} {\bibfnamefont {T.~H.}\ \bibnamefont {Tan}}, \bibinfo {author} {\bibfnamefont {G.~A.}\ \bibnamefont {Watson}}, \bibinfo {author} {\bibfnamefont {Y.-C.}\ \bibnamefont {Chao}}, \bibinfo {author} {\bibfnamefont {J.}~\bibnamefont {Li}}, \bibinfo {author} {\bibfnamefont {T.~R.}\ \bibnamefont {Gingrich}}, \bibinfo {author} {\bibfnamefont {J.~M.}\ \bibnamefont {Horowitz}},\ and\ \bibinfo {author} {\bibfnamefont {N.}~\bibnamefont {Fakhri}},\ }\href {https://doi.org/10.48550/arXiv.2107.05701} {\bibinfo {title} {Scale-dependent irreversibility in living matter}} (\bibinfo {year} {2021})\BibitemShut {NoStop}%
\bibitem [{\citenamefont {Wang}\ \emph {et~al.}(2005)\citenamefont {Wang}, \citenamefont {Kulkarni},\ and\ \citenamefont {Verdu}}]{Wang2005}%
  \BibitemOpen
  \bibfield  {author} {\bibinfo {author} {\bibfnamefont {Q.}~\bibnamefont {Wang}}, \bibinfo {author} {\bibfnamefont {S.}~\bibnamefont {Kulkarni}},\ and\ \bibinfo {author} {\bibfnamefont {S.}~\bibnamefont {Verdu}},\ }\bibfield  {title} {\bibinfo {title} {Divergence estimation of continuous distributions based on data-dependent partitions},\ }\href {https://doi.org/10.1109/TIT.2005.853314} {\bibfield  {journal} {\bibinfo  {journal} {IEEE Transactions on Information Theory}\ }\textbf {\bibinfo {volume} {51}},\ \bibinfo {pages} {3064} (\bibinfo {year} {2005})}\BibitemShut {NoStop}%
\bibitem [{\citenamefont {Ziv}\ and\ \citenamefont {Merhav}(1993)}]{Ziv1993}%
  \BibitemOpen
  \bibfield  {author} {\bibinfo {author} {\bibfnamefont {J.}~\bibnamefont {Ziv}}\ and\ \bibinfo {author} {\bibfnamefont {N.}~\bibnamefont {Merhav}},\ }\bibfield  {title} {\bibinfo {title} {A measure of relative entropy between individual sequences with application to universal classification},\ }\href {https://doi.org/10.1109/18.243444} {\bibfield  {journal} {\bibinfo  {journal} {IEEE Transactions on Information Theory}\ }\textbf {\bibinfo {volume} {39}},\ \bibinfo {pages} {1270} (\bibinfo {year} {1993})}\BibitemShut {NoStop}%
\bibitem [{\citenamefont {Rold\'an}\ and\ \citenamefont {Parrondo}(2012)}]{Roldan2012}%
  \BibitemOpen
  \bibfield  {author} {\bibinfo {author} {\bibfnamefont {R.}~\bibnamefont {Rold\'an}}\ and\ \bibinfo {author} {\bibfnamefont {J.~M.~R.}\ \bibnamefont {Parrondo}},\ }\bibfield  {title} {\bibinfo {title} {Entropy production and {Kullback}-{Leibler} divergence between stationary trajectories of discrete systems},\ }\href {https://doi.org/10.1103/PhysRevE.85.031129} {\bibfield  {journal} {\bibinfo  {journal} {Physical Review E}\ }\textbf {\bibinfo {volume} {85}},\ \bibinfo {pages} {031129} (\bibinfo {year} {2012})}\BibitemShut {NoStop}%
\bibitem [{\citenamefont {Barato}\ and\ \citenamefont {Seifert}(2015)}]{Barato2015}%
  \BibitemOpen
  \bibfield  {author} {\bibinfo {author} {\bibfnamefont {A.~C.}\ \bibnamefont {Barato}}\ and\ \bibinfo {author} {\bibfnamefont {U.}~\bibnamefont {Seifert}},\ }\bibfield  {title} {\bibinfo {title} {Thermodynamic {Uncertainty} {Relation} for {Biomolecular} {Processes}},\ }\href {https://doi.org/10.1103/PhysRevLett.114.158101} {\bibfield  {journal} {\bibinfo  {journal} {Physical Review Letters}\ }\textbf {\bibinfo {volume} {114}},\ \bibinfo {pages} {158101} (\bibinfo {year} {2015})}\BibitemShut {NoStop}%
\bibitem [{\citenamefont {Gingrich}\ \emph {et~al.}(2016)\citenamefont {Gingrich}, \citenamefont {Horowitz}, \citenamefont {Perunov},\ and\ \citenamefont {England}}]{Gingrich2016}%
  \BibitemOpen
  \bibfield  {author} {\bibinfo {author} {\bibfnamefont {T.~R.}\ \bibnamefont {Gingrich}}, \bibinfo {author} {\bibfnamefont {J.~M.}\ \bibnamefont {Horowitz}}, \bibinfo {author} {\bibfnamefont {N.}~\bibnamefont {Perunov}},\ and\ \bibinfo {author} {\bibfnamefont {J.~L.}\ \bibnamefont {England}},\ }\bibfield  {title} {\bibinfo {title} {Dissipation bounds all steady-state current fluctuations},\ }\href {https://doi.org/10.1103/PhysRevLett.116.120601} {\bibfield  {journal} {\bibinfo  {journal} {Physical Review Letters}\ }\textbf {\bibinfo {volume} {116}},\ \bibinfo {pages} {120601} (\bibinfo {year} {2016})}\BibitemShut {NoStop}%
\bibitem [{\citenamefont {Horowitz}\ and\ \citenamefont {Gingrich}(2020)}]{Horowitz2020}%
  \BibitemOpen
  \bibfield  {author} {\bibinfo {author} {\bibfnamefont {J.~M.}\ \bibnamefont {Horowitz}}\ and\ \bibinfo {author} {\bibfnamefont {T.~R.}\ \bibnamefont {Gingrich}},\ }\bibfield  {title} {\bibinfo {title} {Thermodynamic uncertainty relations constrain non-equilibrium fluctuations},\ }\href {https://doi.org/10.1038/s41567-019-0702-6} {\bibfield  {journal} {\bibinfo  {journal} {Nature Physics}\ }\textbf {\bibinfo {volume} {16}},\ \bibinfo {pages} {15} (\bibinfo {year} {2020})}\BibitemShut {NoStop}%
\bibitem [{\citenamefont {Li}\ \emph {et~al.}(2019)\citenamefont {Li}, \citenamefont {Horowitz}, \citenamefont {Gingrich},\ and\ \citenamefont {Fakhri}}]{Li2019}%
  \BibitemOpen
  \bibfield  {author} {\bibinfo {author} {\bibfnamefont {J.}~\bibnamefont {Li}}, \bibinfo {author} {\bibfnamefont {J.~M.}\ \bibnamefont {Horowitz}}, \bibinfo {author} {\bibfnamefont {T.~R.}\ \bibnamefont {Gingrich}},\ and\ \bibinfo {author} {\bibfnamefont {N.}~\bibnamefont {Fakhri}},\ }\bibfield  {title} {\bibinfo {title} {Quantifying dissipation using fluctuating currents},\ }\href {https://doi.org/10.1038/s41467-019-09631-x} {\bibfield  {journal} {\bibinfo  {journal} {Nature Communications}\ }\textbf {\bibinfo {volume} {10}},\ \bibinfo {pages} {1666} (\bibinfo {year} {2019})}\BibitemShut {NoStop}%
\bibitem [{\citenamefont {Rold\'an}\ \emph {et~al.}(2021)\citenamefont {Rold\'an}, \citenamefont {Barral}, \citenamefont {Martin}, \citenamefont {Parrondo},\ and\ \citenamefont {J\"ulicher}}]{Roldan2021}%
  \BibitemOpen
  \bibfield  {author} {\bibinfo {author} {\bibfnamefont {E.}~\bibnamefont {Rold\'an}}, \bibinfo {author} {\bibfnamefont {J.}~\bibnamefont {Barral}}, \bibinfo {author} {\bibfnamefont {P.}~\bibnamefont {Martin}}, \bibinfo {author} {\bibfnamefont {J.~M.~R.}\ \bibnamefont {Parrondo}},\ and\ \bibinfo {author} {\bibfnamefont {F.}~\bibnamefont {J\"ulicher}},\ }\bibfield  {title} {\bibinfo {title} {Quantifying entropy production in active fluctuations of the hair-cell bundle from time irreversibility and uncertainty relations},\ }\href {https://doi.org/10.1088/1367-2630/ac0f18} {\bibfield  {journal} {\bibinfo  {journal} {New Journal of Physics}\ }\textbf {\bibinfo {volume} {23}},\ \bibinfo {pages} {083013} (\bibinfo {year} {2021})}\BibitemShut {NoStop}%
\bibitem [{\citenamefont {Gnesotto}\ \emph {et~al.}(2020)\citenamefont {Gnesotto}, \citenamefont {Gradziuk}, \citenamefont {Ronceray},\ and\ \citenamefont {Broedersz}}]{Gnesotto2020}%
  \BibitemOpen
  \bibfield  {author} {\bibinfo {author} {\bibfnamefont {F.~S.}\ \bibnamefont {Gnesotto}}, \bibinfo {author} {\bibfnamefont {G.}~\bibnamefont {Gradziuk}}, \bibinfo {author} {\bibfnamefont {P.}~\bibnamefont {Ronceray}},\ and\ \bibinfo {author} {\bibfnamefont {C.~P.}\ \bibnamefont {Broedersz}},\ }\bibfield  {title} {\bibinfo {title} {Learning the non-equilibrium dynamics of {Brownian} movies},\ }\href {https://doi.org/10.1038/s41467-020-18796-9} {\bibfield  {journal} {\bibinfo  {journal} {Nature Communications}\ }\textbf {\bibinfo {volume} {11}},\ \bibinfo {pages} {5378} (\bibinfo {year} {2020})}\BibitemShut {NoStop}%
\bibitem [{\citenamefont {Kim}\ \emph {et~al.}(2020)\citenamefont {Kim}, \citenamefont {Bae}, \citenamefont {Lee},\ and\ \citenamefont {Jeong}}]{Kim2020}%
  \BibitemOpen
  \bibfield  {author} {\bibinfo {author} {\bibfnamefont {D.-K.}\ \bibnamefont {Kim}}, \bibinfo {author} {\bibfnamefont {Y.}~\bibnamefont {Bae}}, \bibinfo {author} {\bibfnamefont {S.}~\bibnamefont {Lee}},\ and\ \bibinfo {author} {\bibfnamefont {H.}~\bibnamefont {Jeong}},\ }\bibfield  {title} {\bibinfo {title} {Learning {Entropy} {Production} via {Neural} {Networks}},\ }\href {https://doi.org/10.1103/PhysRevLett.125.140604} {\bibfield  {journal} {\bibinfo  {journal} {Physical Review Letters}\ }\textbf {\bibinfo {volume} {125}},\ \bibinfo {pages} {140604} (\bibinfo {year} {2020})}\BibitemShut {NoStop}%
\bibitem [{\citenamefont {Bae}\ \emph {et~al.}(2022)\citenamefont {Bae}, \citenamefont {Kim},\ and\ \citenamefont {Jeong}}]{Bae2022}%
  \BibitemOpen
  \bibfield  {author} {\bibinfo {author} {\bibfnamefont {Y.}~\bibnamefont {Bae}}, \bibinfo {author} {\bibfnamefont {D.-K.}\ \bibnamefont {Kim}},\ and\ \bibinfo {author} {\bibfnamefont {H.}~\bibnamefont {Jeong}},\ }\bibfield  {title} {\bibinfo {title} {Inferring dissipation maps from videos using convolutional neural networks},\ }\href {https://doi.org/10.1103/PhysRevResearch.4.033094} {\bibfield  {journal} {\bibinfo  {journal} {Physical Review Research}\ }\textbf {\bibinfo {volume} {4}},\ \bibinfo {pages} {033094} (\bibinfo {year} {2022})}\BibitemShut {NoStop}%
\bibitem [{\citenamefont {Lyu}\ \emph {et~al.}(2024)\citenamefont {Lyu}, \citenamefont {Ray},\ and\ \citenamefont {Crutchfield}}]{Lyu2024}%
  \BibitemOpen
  \bibfield  {author} {\bibinfo {author} {\bibfnamefont {J.}~\bibnamefont {Lyu}}, \bibinfo {author} {\bibfnamefont {K.~J.}\ \bibnamefont {Ray}},\ and\ \bibinfo {author} {\bibfnamefont {J.~P.}\ \bibnamefont {Crutchfield}},\ }\bibfield  {title} {\bibinfo {title} {Learning entropy production from underdamped {Langevin} trajectories},\ }\href {https://doi.org/10.1103/PhysRevE.110.064151} {\bibfield  {journal} {\bibinfo  {journal} {Physical Review E}\ }\textbf {\bibinfo {volume} {110}},\ \bibinfo {pages} {064151} (\bibinfo {year} {2024})}\BibitemShut {NoStop}%
\bibitem [{\citenamefont {Vodret}\ \emph {et~al.}(2024)\citenamefont {Vodret}, \citenamefont {Pacini},\ and\ \citenamefont {Bongiorno}}]{Vodret2024}%
  \BibitemOpen
  \bibfield  {author} {\bibinfo {author} {\bibfnamefont {M.}~\bibnamefont {Vodret}}, \bibinfo {author} {\bibfnamefont {C.}~\bibnamefont {Pacini}},\ and\ \bibinfo {author} {\bibfnamefont {C.}~\bibnamefont {Bongiorno}},\ }\bibfield  {title} {\bibinfo {title} {Functional decomposition and estimation of irreversibility in time series via machine learning},\ }\href {https://doi.org/10.1103/PhysRevE.110.064310} {\bibfield  {journal} {\bibinfo  {journal} {Physical Review E}\ }\textbf {\bibinfo {volume} {110}},\ \bibinfo {pages} {064310} (\bibinfo {year} {2024})}\BibitemShut {NoStop}%
\bibitem [{\citenamefont {Boffi}\ and\ \citenamefont {Vanden-Eijnden}(2024{\natexlab{a}})}]{Boffi2024}%
  \BibitemOpen
  \bibfield  {author} {\bibinfo {author} {\bibfnamefont {N.~M.}\ \bibnamefont {Boffi}}\ and\ \bibinfo {author} {\bibfnamefont {E.}~\bibnamefont {Vanden-Eijnden}},\ }\href {https://doi.org/10.48550/arXiv.2411.14317} {\bibinfo {title} {Model-free learning of probability flows: {Elucidating} the nonequilibrium dynamics of flocking}} (\bibinfo {year} {2024}{\natexlab{a}})\BibitemShut {NoStop}%
\bibitem [{\citenamefont {Boffi}\ and\ \citenamefont {Vanden-Eijnden}(2024{\natexlab{b}})}]{Boffi2024a}%
  \BibitemOpen
  \bibfield  {author} {\bibinfo {author} {\bibfnamefont {N.~M.}\ \bibnamefont {Boffi}}\ and\ \bibinfo {author} {\bibfnamefont {E.}~\bibnamefont {Vanden-Eijnden}},\ }\bibfield  {title} {\bibinfo {title} {Deep learning probability flows and entropy production rates in active matter},\ }\href {https://doi.org/10.1073/pnas.2318106121} {\bibfield  {journal} {\bibinfo  {journal} {Proceedings of the National Academy of Sciences}\ }\textbf {\bibinfo {volume} {121}},\ \bibinfo {pages} {e2318106121} (\bibinfo {year} {2024}{\natexlab{b}})}\BibitemShut {NoStop}%
\bibitem [{\citenamefont {Ma}\ and\ \citenamefont {Collins}(2018)}]{Ma2018}%
  \BibitemOpen
  \bibfield  {author} {\bibinfo {author} {\bibfnamefont {Z.}~\bibnamefont {Ma}}\ and\ \bibinfo {author} {\bibfnamefont {M.}~\bibnamefont {Collins}},\ }\href {https://doi.org/10.48550/arXiv.1809.01812} {\bibinfo {title} {Noise {Contrastive} {Estimation} and {Negative} {Sampling} for {Conditional} {Models}: {Consistency} and {Statistical} {Efficiency}}} (\bibinfo {year} {2018})\BibitemShut {NoStop}%
\bibitem [{\citenamefont {Kawai}\ \emph {et~al.}(2007)\citenamefont {Kawai}, \citenamefont {Parrondo},\ and\ \citenamefont {den Broeck}}]{Kawai2007}%
  \BibitemOpen
  \bibfield  {author} {\bibinfo {author} {\bibfnamefont {R.}~\bibnamefont {Kawai}}, \bibinfo {author} {\bibfnamefont {J.~M.~R.}\ \bibnamefont {Parrondo}},\ and\ \bibinfo {author} {\bibfnamefont {C.~V.}\ \bibnamefont {den Broeck}},\ }\bibfield  {title} {\bibinfo {title} {Dissipation: {The} {Phase}-{Space} {Perspective}},\ }\href {https://doi.org/10.1103/PhysRevLett.98.080602} {\bibfield  {journal} {\bibinfo  {journal} {Physical Review Letters}\ }\textbf {\bibinfo {volume} {98}},\ \bibinfo {pages} {080602} (\bibinfo {year} {2007})}\BibitemShut {NoStop}%
\bibitem [{\citenamefont {Rieder}\ \emph {et~al.}(1967)\citenamefont {Rieder}, \citenamefont {Lebowitz},\ and\ \citenamefont {Lieb}}]{Rieder1967}%
  \BibitemOpen
  \bibfield  {author} {\bibinfo {author} {\bibfnamefont {Z.}~\bibnamefont {Rieder}}, \bibinfo {author} {\bibfnamefont {J.~L.}\ \bibnamefont {Lebowitz}},\ and\ \bibinfo {author} {\bibfnamefont {E.}~\bibnamefont {Lieb}},\ }\bibfield  {title} {\bibinfo {title} {Properties of a harmonic crystal in a stationary nonequilibrium state},\ }\href {https://doi.org/10.1063/1.1705319} {\bibfield  {journal} {\bibinfo  {journal} {Journal of Mathematical Physics}\ }\textbf {\bibinfo {volume} {8}},\ \bibinfo {pages} {1073} (\bibinfo {year} {1967})},\ \Eprint {https://arxiv.org/abs/https://pubs.aip.org/aip/jmp/article-pdf/8/5/1073/19085581/1073\_1\_online.pdf} {https://pubs.aip.org/aip/jmp/article-pdf/8/5/1073/19085581/1073\_1\_online.pdf} \BibitemShut {NoStop}%
\bibitem [{\citenamefont {Bonetto}\ \emph {et~al.}(2004)\citenamefont {Bonetto}, \citenamefont {Lebowitz},\ and\ \citenamefont {Lukkarinen}}]{Bonetto2004}%
  \BibitemOpen
  \bibfield  {author} {\bibinfo {author} {\bibfnamefont {F.}~\bibnamefont {Bonetto}}, \bibinfo {author} {\bibfnamefont {J.~L.}\ \bibnamefont {Lebowitz}},\ and\ \bibinfo {author} {\bibfnamefont {J.}~\bibnamefont {Lukkarinen}},\ }\bibfield  {title} {\bibinfo {title} {Fourier's {Law} for a {Harmonic} {Crystal} with {Self}-{Consistent} {Stochastic} {Reservoirs}},\ }\href {https://doi.org/10.1023/B:JOSS.0000037232.14365.10} {\bibfield  {journal} {\bibinfo  {journal} {Journal of Statistical Physics}\ }\textbf {\bibinfo {volume} {116}},\ \bibinfo {pages} {783} (\bibinfo {year} {2004})}\BibitemShut {NoStop}%
\bibitem [{\citenamefont {Yildiz}\ \emph {et~al.}(2003)\citenamefont {Yildiz}, \citenamefont {Forkey}, \citenamefont {McKinney}, \citenamefont {Ha}, \citenamefont {Goldman},\ and\ \citenamefont {Selvin}}]{Yildiz2003}%
  \BibitemOpen
  \bibfield  {author} {\bibinfo {author} {\bibfnamefont {A.}~\bibnamefont {Yildiz}}, \bibinfo {author} {\bibfnamefont {J.~N.}\ \bibnamefont {Forkey}}, \bibinfo {author} {\bibfnamefont {S.~A.}\ \bibnamefont {McKinney}}, \bibinfo {author} {\bibfnamefont {T.}~\bibnamefont {Ha}}, \bibinfo {author} {\bibfnamefont {Y.~E.}\ \bibnamefont {Goldman}},\ and\ \bibinfo {author} {\bibfnamefont {P.~R.}\ \bibnamefont {Selvin}},\ }\bibfield  {title} {\bibinfo {title} {Myosin v walks hand-over-hand: Single fluorophore imaging with 1.5-nm localization},\ }\href {https://doi.org/10.1126/science.1084398} {\bibfield  {journal} {\bibinfo  {journal} {Science}\ }\textbf {\bibinfo {volume} {300}},\ \bibinfo {pages} {2061} (\bibinfo {year} {2003})},\ \Eprint {https://arxiv.org/abs/https://www.science.org/doi/pdf/10.1126/science.1084398} {https://www.science.org/doi/pdf/10.1126/science.1084398} \BibitemShut {NoStop}%
\bibitem [{\citenamefont {Yildiz}\ \emph {et~al.}(2004)\citenamefont {Yildiz}, \citenamefont {Tomishige}, \citenamefont {Vale},\ and\ \citenamefont {Selvin}}]{Yildiz2004}%
  \BibitemOpen
  \bibfield  {author} {\bibinfo {author} {\bibfnamefont {A.}~\bibnamefont {Yildiz}}, \bibinfo {author} {\bibfnamefont {M.}~\bibnamefont {Tomishige}}, \bibinfo {author} {\bibfnamefont {R.~D.}\ \bibnamefont {Vale}},\ and\ \bibinfo {author} {\bibfnamefont {P.~R.}\ \bibnamefont {Selvin}},\ }\bibfield  {title} {\bibinfo {title} {Kinesin walks hand-over-hand},\ }\href {https://doi.org/10.1126/science.1093753} {\bibfield  {journal} {\bibinfo  {journal} {Science}\ }\textbf {\bibinfo {volume} {303}},\ \bibinfo {pages} {676} (\bibinfo {year} {2004})},\ \Eprint {https://arxiv.org/abs/https://www.science.org/doi/pdf/10.1126/science.1093753} {https://www.science.org/doi/pdf/10.1126/science.1093753} \BibitemShut {NoStop}%
\bibitem [{\citenamefont {Risken}(1996)}]{Risken1996}%
  \BibitemOpen
  \bibfield  {author} {\bibinfo {author} {\bibfnamefont {H.}~\bibnamefont {Risken}},\ }\href {https://doi.org/10.1007/978-3-642-61544-3} {\emph {\bibinfo {title} {The Fokker-Planck Equation: Methods of Solution and Applications}}},\ edited by\ \bibinfo {editor} {\bibfnamefont {H.}~\bibnamefont {Haken}},\ \bibinfo {series} {Springer {Series} in {Synergetics}}, Vol.~\bibinfo {volume} {18}\ (\bibinfo  {publisher} {Springer},\ \bibinfo {address} {Berlin, Heidelberg},\ \bibinfo {year} {1996})\BibitemShut {NoStop}%
\bibitem [{\citenamefont {Chat\'e}\ and\ \citenamefont {Manneville}(1996)}]{Chate1996}%
  \BibitemOpen
  \bibfield  {author} {\bibinfo {author} {\bibfnamefont {H.}~\bibnamefont {Chat\'e}}\ and\ \bibinfo {author} {\bibfnamefont {P.}~\bibnamefont {Manneville}},\ }\bibfield  {title} {\bibinfo {title} {Phase {Diagram} of the {Two}-{Dimensional} {Complex} {Ginzburg}-{Landau} {Equation}},\ }\href {https://doi.org/10.1016/0378-4371(95)00361-4} {\bibfield  {journal} {\bibinfo  {journal} {Physica A: Statistical Mechanics and its Applications}\ }\textbf {\bibinfo {volume} {224}},\ \bibinfo {pages} {348} (\bibinfo {year} {1996})},\ \bibinfo {note} {arXiv:1608.07519 [nlin]}\BibitemShut {NoStop}%
\bibitem [{\citenamefont {Aranson}\ and\ \citenamefont {Kramer}(2002)}]{Aranson2002}%
  \BibitemOpen
  \bibfield  {author} {\bibinfo {author} {\bibfnamefont {I.~S.}\ \bibnamefont {Aranson}}\ and\ \bibinfo {author} {\bibfnamefont {L.}~\bibnamefont {Kramer}},\ }\bibfield  {title} {\bibinfo {title} {The world of the complex {Ginzburg}-{Landau} equation},\ }\href {https://doi.org/10.1103/RevModPhys.74.99} {\bibfield  {journal} {\bibinfo  {journal} {Reviews of Modern Physics}\ }\textbf {\bibinfo {volume} {74}},\ \bibinfo {pages} {99} (\bibinfo {year} {2002})},\ \bibinfo {note} {publisher: American Physical Society}\BibitemShut {NoStop}%
\bibitem [{\citenamefont {Tan}\ \emph {et~al.}(2020)\citenamefont {Tan}, \citenamefont {Liu}, \citenamefont {Miller}, \citenamefont {Tekant}, \citenamefont {Dunkel},\ and\ \citenamefont {Fakhri}}]{tan2020topological}%
  \BibitemOpen
  \bibfield  {author} {\bibinfo {author} {\bibfnamefont {T.~H.}\ \bibnamefont {Tan}}, \bibinfo {author} {\bibfnamefont {J.}~\bibnamefont {Liu}}, \bibinfo {author} {\bibfnamefont {P.~W.}\ \bibnamefont {Miller}}, \bibinfo {author} {\bibfnamefont {M.}~\bibnamefont {Tekant}}, \bibinfo {author} {\bibfnamefont {J.}~\bibnamefont {Dunkel}},\ and\ \bibinfo {author} {\bibfnamefont {N.}~\bibnamefont {Fakhri}},\ }\bibfield  {title} {\bibinfo {title} {Topological turbulence in the membrane of a living cell},\ }\href@noop {} {\bibfield  {journal} {\bibinfo  {journal} {Nature Physics}\ }\textbf {\bibinfo {volume} {16}},\ \bibinfo {pages} {657} (\bibinfo {year} {2020})}\BibitemShut {NoStop}%
\bibitem [{\citenamefont {Li}\ \emph {et~al.}(2024)\citenamefont {Li}, \citenamefont {Liu}, \citenamefont {Szurek},\ and\ \citenamefont {Fakhri}}]{Li2024}%
  \BibitemOpen
  \bibfield  {author} {\bibinfo {author} {\bibfnamefont {J.}~\bibnamefont {Li}}, \bibinfo {author} {\bibfnamefont {C.-W.~J.}\ \bibnamefont {Liu}}, \bibinfo {author} {\bibfnamefont {M.}~\bibnamefont {Szurek}},\ and\ \bibinfo {author} {\bibfnamefont {N.}~\bibnamefont {Fakhri}},\ }\bibfield  {title} {\bibinfo {title} {Measuring {Irreversibility} from {Learned} {Representations} of {Biological} {Patterns}},\ }\href {https://doi.org/10.1103/PRXLife.2.033013} {\bibfield  {journal} {\bibinfo  {journal} {PRX Life}\ }\textbf {\bibinfo {volume} {2}},\ \bibinfo {pages} {033013} (\bibinfo {year} {2024})}\BibitemShut {NoStop}%
\bibitem [{sup()}]{suppvideo}%
  \BibitemOpen
  \href@noop {} {}\bibinfo {note} {See Supplementary Movies M1 and M2 at [URL will be inserted by publisher]}\BibitemShut {NoStop}%
\bibitem [{\citenamefont {Watter}\ \emph {et~al.}(2015)\citenamefont {Watter}, \citenamefont {Springenberg}, \citenamefont {Boedecker},\ and\ \citenamefont {Riedmiller}}]{Watter2015}%
  \BibitemOpen
  \bibfield  {author} {\bibinfo {author} {\bibfnamefont {M.}~\bibnamefont {Watter}}, \bibinfo {author} {\bibfnamefont {J.}~\bibnamefont {Springenberg}}, \bibinfo {author} {\bibfnamefont {J.}~\bibnamefont {Boedecker}},\ and\ \bibinfo {author} {\bibfnamefont {M.}~\bibnamefont {Riedmiller}},\ }\bibfield  {title} {\bibinfo {title} {Embed to control: A locally linear latent dynamics model for control from raw images},\ }in\ \href {https://proceedings.neurips.cc/paper_files/paper/2015/file/a1afc58c6ca9540d057299ec3016d726-Paper.pdf} {\emph {\bibinfo {booktitle} {Advances in Neural Information Processing Systems}}},\ Vol.~\bibinfo {volume} {28},\ \bibinfo {editor} {edited by\ \bibinfo {editor} {\bibfnamefont {C.}~\bibnamefont {Cortes}}, \bibinfo {editor} {\bibfnamefont {N.}~\bibnamefont {Lawrence}}, \bibinfo {editor} {\bibfnamefont {D.}~\bibnamefont {Lee}}, \bibinfo {editor} {\bibfnamefont {M.}~\bibnamefont {Sugiyama}},\ and\ \bibinfo {editor} {\bibfnamefont {R.}~\bibnamefont {Garnett}}}\ (\bibinfo  {publisher}
  {Curran Associates, Inc.},\ \bibinfo {year} {2015})\BibitemShut {NoStop}%
\bibitem [{cod()}]{code}%
  \BibitemOpen
  \href@noop {} {}\bibinfo {note} {Code Repository for LENS -- \url{https://github.com/cj7280/LENS}}\BibitemShut {NoStop}%
\bibitem [{\citenamefont {He}\ \emph {et~al.}(2016)\citenamefont {He}, \citenamefont {Zhang}, \citenamefont {Ren},\ and\ \citenamefont {Sun}}]{He2016}%
  \BibitemOpen
  \bibfield  {author} {\bibinfo {author} {\bibfnamefont {K.}~\bibnamefont {He}}, \bibinfo {author} {\bibfnamefont {X.}~\bibnamefont {Zhang}}, \bibinfo {author} {\bibfnamefont {S.}~\bibnamefont {Ren}},\ and\ \bibinfo {author} {\bibfnamefont {J.}~\bibnamefont {Sun}},\ }\bibfield  {title} {\bibinfo {title} {Deep residual learning for image recognition},\ }in\ \href@noop {} {\emph {\bibinfo {booktitle} {Proceedings of the IEEE Conference on Computer Vision and Pattern Recognition (CVPR)}}}\ (\bibinfo {year} {2016})\BibitemShut {NoStop}%
\bibitem [{\citenamefont {Ba}\ \emph {et~al.}(2016)\citenamefont {Ba}, \citenamefont {Kiros},\ and\ \citenamefont {Hinton}}]{Ba2016}%
  \BibitemOpen
  \bibfield  {author} {\bibinfo {author} {\bibfnamefont {J.~L.}\ \bibnamefont {Ba}}, \bibinfo {author} {\bibfnamefont {J.~R.}\ \bibnamefont {Kiros}},\ and\ \bibinfo {author} {\bibfnamefont {G.~E.}\ \bibnamefont {Hinton}},\ }\href {https://arxiv.org/abs/1607.06450} {\bibinfo {title} {Layer normalization}} (\bibinfo {year} {2016}),\ \Eprint {https://arxiv.org/abs/1607.06450} {arXiv:1607.06450 [stat.ML]} \BibitemShut {NoStop}%
\bibitem [{\citenamefont {Hendrycks}\ and\ \citenamefont {Gimpel}(2016)}]{hendrycks2016gaussian}%
  \BibitemOpen
  \bibfield  {author} {\bibinfo {author} {\bibfnamefont {D.}~\bibnamefont {Hendrycks}}\ and\ \bibinfo {author} {\bibfnamefont {K.}~\bibnamefont {Gimpel}},\ }\bibfield  {title} {\bibinfo {title} {Gaussian error linear units (gelus)},\ }\href@noop {} {\bibfield  {journal} {\bibinfo  {journal} {arXiv preprint arXiv:1606.08415}\ } (\bibinfo {year} {2016})}\BibitemShut {NoStop}%
\bibitem [{\citenamefont {Liu}\ \emph {et~al.}(2018)\citenamefont {Liu}, \citenamefont {Lehman}, \citenamefont {Molino}, \citenamefont {Such}, \citenamefont {Frank}, \citenamefont {Sergeev},\ and\ \citenamefont {Yosinski}}]{Liu2018}%
  \BibitemOpen
  \bibfield  {author} {\bibinfo {author} {\bibfnamefont {R.}~\bibnamefont {Liu}}, \bibinfo {author} {\bibfnamefont {J.}~\bibnamefont {Lehman}}, \bibinfo {author} {\bibfnamefont {P.}~\bibnamefont {Molino}}, \bibinfo {author} {\bibfnamefont {F.~P.}\ \bibnamefont {Such}}, \bibinfo {author} {\bibfnamefont {E.}~\bibnamefont {Frank}}, \bibinfo {author} {\bibfnamefont {A.}~\bibnamefont {Sergeev}},\ and\ \bibinfo {author} {\bibfnamefont {J.}~\bibnamefont {Yosinski}},\ }\href {https://arxiv.org/abs/1807.03247} {\bibinfo {title} {An intriguing failing of convolutional neural networks and the coordconv solution}} (\bibinfo {year} {2018}),\ \Eprint {https://arxiv.org/abs/1807.03247} {arXiv:1807.03247 [cs.CV]} \BibitemShut {NoStop}%
\bibitem [{\citenamefont {Biewald}(2020)}]{wandb}%
  \BibitemOpen
  \bibfield  {author} {\bibinfo {author} {\bibfnamefont {L.}~\bibnamefont {Biewald}},\ }\href {https://www.wandb.com/} {\bibinfo {title} {Experiment tracking with weights and biases}} (\bibinfo {year} {2020}),\ \bibinfo {note} {software available from wandb.com}\BibitemShut {NoStop}%
\bibitem [{\citenamefont {Bortkiewicz}\ \emph {et~al.}(2025)\citenamefont {Bortkiewicz}, \citenamefont {Pa\l{}ucki}, \citenamefont {Myers}, \citenamefont {Dziarmaga}, \citenamefont {Arczewski}, \citenamefont {Kuci\'{n}ski},\ and\ \citenamefont {Eysenbach}}]{Bortkiewicz2025}%
  \BibitemOpen
  \bibfield  {author} {\bibinfo {author} {\bibfnamefont {M.}~\bibnamefont {Bortkiewicz}}, \bibinfo {author} {\bibfnamefont {W.}~\bibnamefont {Pa\l{}ucki}}, \bibinfo {author} {\bibfnamefont {V.}~\bibnamefont {Myers}}, \bibinfo {author} {\bibfnamefont {T.}~\bibnamefont {Dziarmaga}}, \bibinfo {author} {\bibfnamefont {T.}~\bibnamefont {Arczewski}}, \bibinfo {author} {\bibfnamefont {L.}~\bibnamefont {Kuci\'{n}ski}},\ and\ \bibinfo {author} {\bibfnamefont {B.}~\bibnamefont {Eysenbach}},\ }\bibfield  {title} {\bibinfo {title} {{Accelerating Goal-Conditioned RL Algorithms} and {Research}},\ }in\ \href {https://arxiv.org/pdf/2408.11052} {\emph {\bibinfo {booktitle} {{International Conference} on {Learning Representations}}}}\ (\bibinfo {year} {2025})\BibitemShut {NoStop}%
\bibitem [{\citenamefont {Lyu}\ \emph {et~al.}(2025)\citenamefont {Lyu}, \citenamefont {Ray},\ and\ \citenamefont {Crutchfield}}]{Lyu2025}%
  \BibitemOpen
  \bibfield  {author} {\bibinfo {author} {\bibfnamefont {J.}~\bibnamefont {Lyu}}, \bibinfo {author} {\bibfnamefont {K.~J.}\ \bibnamefont {Ray}},\ and\ \bibinfo {author} {\bibfnamefont {J.~P.}\ \bibnamefont {Crutchfield}},\ }\href {https://doi.org/10.48550/arXiv.2504.19007} {\bibinfo {title} {Learning {Stochastic} {Thermodynamics} {Directly} from {Correlation} and {Trajectory}-{Fluctuation} {Currents}}} (\bibinfo {year} {2025})\BibitemShut {NoStop}%
\bibitem [{\citenamefont {Stamhuis}\ and\ \citenamefont {Thielicke}(2014)}]{stamhuis2014pivlab}%
  \BibitemOpen
  \bibfield  {author} {\bibinfo {author} {\bibfnamefont {E.}~\bibnamefont {Stamhuis}}\ and\ \bibinfo {author} {\bibfnamefont {W.}~\bibnamefont {Thielicke}},\ }\bibfield  {title} {\bibinfo {title} {Pivlab--towards user-friendly, affordable and accurate digital particle image velocimetry in matlab},\ }\href@noop {} {\bibfield  {journal} {\bibinfo  {journal} {Journal of open research software}\ }\textbf {\bibinfo {volume} {2}},\ \bibinfo {pages} {30} (\bibinfo {year} {2014})}\BibitemShut {NoStop}%
\bibitem [{\citenamefont {Matsumoto}\ \emph {et~al.}(2025)\citenamefont {Matsumoto}, \citenamefont {Sasa},\ and\ \citenamefont {Dechant}}]{Matsumoto2025}%
  \BibitemOpen
  \bibfield  {author} {\bibinfo {author} {\bibfnamefont {K.}~\bibnamefont {Matsumoto}}, \bibinfo {author} {\bibfnamefont {S.-i.}\ \bibnamefont {Sasa}},\ and\ \bibinfo {author} {\bibfnamefont {A.}~\bibnamefont {Dechant}},\ }\href {https://doi.org/10.48550/arXiv.2504.09981} {\bibinfo {title} {Learning rate matrix and information-thermodynamic trade-off relation}} (\bibinfo {year} {2025})\BibitemShut {NoStop}%
\bibitem [{\citenamefont {Mezzadri}(2007)}]{mezzadri2007generaterandommatricesclassical}%
  \BibitemOpen
  \bibfield  {author} {\bibinfo {author} {\bibfnamefont {F.}~\bibnamefont {Mezzadri}},\ }\href {https://arxiv.org/abs/math-ph/0609050} {\bibinfo {title} {How to generate random matrices from the classical compact groups}} (\bibinfo {year} {2007}),\ \Eprint {https://arxiv.org/abs/math-ph/0609050} {arXiv:math-ph/0609050 [math-ph]} \BibitemShut {NoStop}%
\bibitem [{\citenamefont {Sitzmann}\ \emph {et~al.}(2020)\citenamefont {Sitzmann}, \citenamefont {Martel}, \citenamefont {Bergman}, \citenamefont {Lindell},\ and\ \citenamefont {Wetzstein}}]{sitzmann2020implicit}%
  \BibitemOpen
  \bibfield  {author} {\bibinfo {author} {\bibfnamefont {V.}~\bibnamefont {Sitzmann}}, \bibinfo {author} {\bibfnamefont {J.}~\bibnamefont {Martel}}, \bibinfo {author} {\bibfnamefont {A.}~\bibnamefont {Bergman}}, \bibinfo {author} {\bibfnamefont {D.}~\bibnamefont {Lindell}},\ and\ \bibinfo {author} {\bibfnamefont {G.}~\bibnamefont {Wetzstein}},\ }\bibfield  {title} {\bibinfo {title} {Implicit neural representations with periodic activation functions},\ }\href@noop {} {\bibfield  {journal} {\bibinfo  {journal} {Advances in neural information processing systems}\ }\textbf {\bibinfo {volume} {33}},\ \bibinfo {pages} {7462} (\bibinfo {year} {2020})}\BibitemShut {NoStop}%
\bibitem [{\citenamefont {Cox}\ and\ \citenamefont {Matthews}(2002)}]{Cox2002}%
  \BibitemOpen
  \bibfield  {author} {\bibinfo {author} {\bibfnamefont {S.}~\bibnamefont {Cox}}\ and\ \bibinfo {author} {\bibfnamefont {P.}~\bibnamefont {Matthews}},\ }\bibfield  {title} {\bibinfo {title} {Exponential time differencing for stiff systems},\ }\href {https://doi.org/https://doi.org/10.1006/jcph.2002.6995} {\bibfield  {journal} {\bibinfo  {journal} {Journal of Computational Physics}\ }\textbf {\bibinfo {volume} {176}},\ \bibinfo {pages} {430} (\bibinfo {year} {2002})}\BibitemShut {NoStop}%
\bibitem [{\citenamefont {Huber}\ \emph {et~al.}(1992)\citenamefont {Huber}, \citenamefont {Alstr\o{}m},\ and\ \citenamefont {Bohr}}]{Huber1992}%
  \BibitemOpen
  \bibfield  {author} {\bibinfo {author} {\bibfnamefont {G.}~\bibnamefont {Huber}}, \bibinfo {author} {\bibfnamefont {P.}~\bibnamefont {Alstr\o{}m}},\ and\ \bibinfo {author} {\bibfnamefont {T.}~\bibnamefont {Bohr}},\ }\bibfield  {title} {\bibinfo {title} {Nucleation and transients at the onset of vortex turbulence},\ }\href {https://doi.org/10.1103/PhysRevLett.69.2380} {\bibfield  {journal} {\bibinfo  {journal} {Phys. Rev. Lett.}\ }\textbf {\bibinfo {volume} {69}},\ \bibinfo {pages} {2380} (\bibinfo {year} {1992})}\BibitemShut {NoStop}%
\bibitem [{\citenamefont {van Saarloos}\ and\ \citenamefont {Hohenberg}(1992)}]{Hohenberg1992}%
  \BibitemOpen
  \bibfield  {author} {\bibinfo {author} {\bibfnamefont {W.}~\bibnamefont {van Saarloos}}\ and\ \bibinfo {author} {\bibfnamefont {P.}~\bibnamefont {Hohenberg}},\ }\bibfield  {title} {\bibinfo {title} {Fronts, pulses, sources and sinks in generalized complex {Ginzburg}-{Landau} equations},\ }\href {https://doi.org/10.1016/0167-2789(92)90175-m} {\bibfield  {journal} {\bibinfo  {journal} {Physica D: Nonlinear Phenomena}\ }\textbf {\bibinfo {volume} {56}},\ \bibinfo {pages} {303–367} (\bibinfo {year} {1992})}\BibitemShut {NoStop}%
\bibitem [{\citenamefont {Nowozin}\ \emph {et~al.}(2016)\citenamefont {Nowozin}, \citenamefont {Cseke},\ and\ \citenamefont {Tomioka}}]{Nowozin2016}%
  \BibitemOpen
  \bibfield  {author} {\bibinfo {author} {\bibfnamefont {S.}~\bibnamefont {Nowozin}}, \bibinfo {author} {\bibfnamefont {B.}~\bibnamefont {Cseke}},\ and\ \bibinfo {author} {\bibfnamefont {R.}~\bibnamefont {Tomioka}},\ }\bibfield  {title} {\bibinfo {title} {f-{GAN}: {Training} {Generative} {Neural} {Samplers} using {Variational} {Divergence} {Minimization}},\ }in\ \href {https://papers.nips.cc/paper_files/paper/2016/hash/cedebb6e872f539bef8c3f919874e9d7-Abstract.html} {\emph {\bibinfo {booktitle} {Advances in {Neural} {Information} {Processing} {Systems}}}},\ Vol.~\bibinfo {volume} {29}\ (\bibinfo  {publisher} {Curran Associates, Inc.},\ \bibinfo {year} {2016})\BibitemShut {NoStop}%
\end{thebibliography}

%

\end{document}